\definecolor{darkblue}{rgb}{0.0, 0.0, 0.55}
\definecolor{bordeaux}{rgb}{0.34, 0.01, 0.1}
 \newtheorem{theorem}{Theorem}[section]
 \newtheorem{corollary}[theorem]{Corollary}
 \newtheorem{lemma}[theorem]{Lemma}{\rm}
 \newtheorem{proposition}[theorem]{Proposition}
 \newtheorem{definition}[theorem]{Definition}{\rm}
 \newtheorem{remark}[theorem]{Remark}
 \newtheorem{example}[theorem]{Example}
\newtheorem{algorithm}[theorem]{Algorithm}
\numberwithin{equation}{section}
\DeclareMathOperator{\Sym}{Sym}
\DeclareMathOperator{\Diag}{Diag}
\DeclareMathOperator{\I}{I}
\DeclareMathOperator{\II}{II}
\DeclareMathOperator{\Trace}{tr}
\DeclareMathOperator{\rank}{rank}
\DeclareMathOperator{\id}{id}
\newif\ifcomment
\newtheorem{thmA}{Theorem}
\newcommand{\revise}[1]{{{\color{black}#1}}}
\begin{document}
\def\cA{\mathcal A}
\def\cH{\mathcal H}
\def\red{\color{red}}
\def\bl{\color{blue}}
\def\ora{\color{orange}}
\def\green{\color{green}}
\def\br{\color{brown}}
\newcommand\vna{von Neumann algebra\xspace}
\newcommand\vnas{von Neumann algebras\xspace}
\newcommand{\realtofloat}{\mathtt{Real2Float}}
\newcommand{\sparsepop}{\mathtt{SparsePOP}}
\newcommand{\ncsostools}{\mathtt{NCSOStools}}
\newcommand{\ncpoltosdpa}{\mathtt{Ncpol2sdpa}}
\newcommand{\gloptipoly}{\mathtt{Gloptipoly}}
\newcommand{\ax}{\langle \underline{X} \rangle}
\newcommand{\trvN}{\text{vN}}
\newcommand{\twoone}{\II_1}
\def\la{\langle}
\def\ra{\rangle}
\def\e{{\rm e}}
\def\x{\mathbf{x}}
\def\by{\mathbf{y}}
\def\bz{\mathbf{z}}
\def\cC{\mathcal{C}}
\def\R{\mathbb{R}}
\def\Mbb{\mathbb{M}}
\def\Sbb{\mathbb{S}}
\def\T{\mathbb{T}}
\def\N{\mathbb{N}}
\def\K{\mathbb{K}}
\def\bK{\overline{\mathbf{K}}}
\def\Q{\mathbf{Q}}
\def\M{\mathbf{M}}
\def\O{\mathbf{O}}
\def\C{\mathbb{C}}
\def\Hbb{\mathbb{H}}
\def\P{\mathbf{P}}
\def\Z{\mathbb{Z}}
\def\A{\mathbf{A}}
\def\W{\mathbf{W}}
\def\bfone{\mathbf{1}}
\def\V{\mathbf{V}}
\def\AA{\overline{\mathbf{A}}}
\def\c{\mathbf{C}}
\def\bL{\mathbf{L}}
\def\bS{\mathbf{S}}
\def\Y{\mathbf{Y}}
\def\X{\mathbf{X}}
\def\G{\mathbf{G}}
\def\Bbb{\mathbb{B}}
\def\Dbb{\mathbb{D}}
\def\f{\mathbf{f}}
\def\z{\mathbf{z}}
\def\y{\mathbf{y}}
\def\d{\hat{d}}
\def\bx{\mathbf{x}}
\def\y{\mathbf{y}}
\def\h{\mathbf{h}}
\def\u{\mathbf{u}}
\def\g{\mathbf{g}}
\def\w{\mathbf{w}}
\def\cX{\mathcal{X}}
\def\a{\mathbf{a}}
\def\q{\mathbf{q}}
\def\u{\mathbf{u}}
\def\vb{\mathbf{v}}
\def\s{\mathcal{S}}
\def\cD{\mathcal{D}}
\def\co{{\rm co}\,}
\def\cp{{\rm CP}}
\def\fatT{\T}
\def\skinnyT{\text T}
\def\tg{\tilde{f}}
\def\tx{\tilde{\x}}
\def\supmu{{\rm supp}\,\mu}
\def\supnu{{\rm supp}\,\nu}
\def\m{\mathcal{M}}
\def\bR{\mathbf{R}}
\def\om{\mathbf{\Omega}}
\def\c{\mathbf{c}}
\def\s{\mathcal{S}}
\def\k{\mathcal{K}}
\def\la{\langle}
\def\ra{\rangle}
\def\opt{\text{opt}}
\def\cyc{\overset{\text{cyc}}{\sim}}
\def\starcyc{\underset{\vspace{-10mm}\star}{\overset{\text{cyc}}{\sim}}}
\def\starcyc{\begin{smallmatrix}\text{cyc}\\ \sim\\[-.4mm] \star\end{smallmatrix}}
%\underset{\star}{\overset{\text{cyc}}{\sim}}}

\def\cM{\mathcal{M}}
\def\cQ{{\mathcal M^{\rm cyc}}}
\def\cN{\mathcal{N}}
\def\cF{\mathcal{F}}
\def\cE{\mathcal{E}}
\def\cB{\mathcal{B}}

\def\smileL{\overset{\smallsmile}{L}}
\def\blambda{{\boldsymbol{\lambda}}}
\def\bsigma{{\boldsymbol{\sigma}}}
\def\RX{\R \langle \underline{x} \rangle}
\def\CX{\C \langle \underline{x} \rangle}
\def\TX{\fatT}
\def\KX{\K \langle \underline{x} \rangle}
\def\uX{\underline X}
\def\uA{\underline A}
\def\ux{\underline x}
\def\mx{\langle\underline x\rangle}
\def\uY{\underline Y}
\def\uy{\underline y}
\newcommand{\RXI}[1]{\R \langle \underline{x}(I_{#1}) \rangle }
\def\SigmaX{\Sigma \langle \underline{x} \rangle}
\newcommand{\SigmaXI}[1]{\Sigma \langle \underline{x}(I_{#1}) \rangle }
\def\SymRX{\Sym \RX}
\def\SymCX{\Sym \CX}
\def\SymTX{\Sym \TX}
\def\SymT{\Sym \T}
\def\SymKX{\Sym \KX}
\def\ov{\overline{o}}
\def\und{\underline{o}}
\newcommand{\gns}{\texttt{PureTraceGNS}}
\newcommand{\victor}[1]{\Vi{#1}}
%{\todo[inline,color=purple!30]{VM: #1}}
\newcommand{\victorshort}[1]{\todo[inline,color=purple!30]{VM: #1}}
\newcommand{\igorshort}[1]{\todo[color=brown!30]{IK: #1}}
\newcommand{\janez}[1]{{\color{red} Janez: #1}}
\newcommand{\janezshort}[1]{\todo[color=red!30]{TdW: #1}}

\makeatletter
\newcommand{\mycontentsbox}{%
\printindex
{\centerline{NOT FOR PUBLICATION}
\addtolength{\parskip}{-2.0pt}\normalsize
\tableofcontents}}
\def\enddoc@text{\ifx\@empty\@translators \else\@settranslators\fi
\ifx\@empty\addresses \else\@setaddresses\fi
\newpage\mycontentsbox%\newpage\printindex
}
\makeatother

\colorlet{commentcolour}{green!50!black}
\newcommand{\comment}[3]{%
\ifcomment%
	{\color{#1}\bfseries\sffamily(#3)%
	}%
	\marginpar{\textcolor{#1}{\hspace{3em}\bfseries\sffamily #2}}%
	\else%
	\fi%
}
\newcommand{\Ig}[1]{%Igor
	\comment{magenta}{I}{#1}
}
\newcommand{\Vi}[1]{%Victor
	\comment{blue}{V}{#1}
}
\newcommand{\idea}[1]{\textcolor{red}{#1(?)}}

\newcommand{\Expl}[1]{%visible comments for further explanation
	{\tag*{\text{\small{\color{commentcolour}#1}}}%
	}
}

\renewcommand{\algorithmicrequire}{\textbf{Input:}}
\renewcommand{\algorithmicensure}{\textbf{Output:}}
%\numberwithin{environment}{section}
\newcommand{\sparsegns}{\texttt{SparseGNS}}
\newcommand{\sparseeiggns}{\texttt{SparseEigGNS}}
\newcommand{\eigmin}{\texttt{NCeigMin}}
\newcommand{\eigminsparse}{\texttt{NCeigMinSparse}}
\def\nsdp{n_{\text{sdp}}}
\def\msdp{m_{\text{sdp}}}

\title[Optimization over trace polynomials]{Optimization over trace polynomials}

\author{Igor Klep \and Victor Magron \and Jurij Vol\v{c}i\v{c}}

\address{Igor Klep: Faculty of Mathematics and Physics, Department of Mathematics,  University of Ljubljana, Slovenia}
\email{igor.klep@fmf.uni-lj.si}
\thanks{IK was supported by the 
Slovenian Research Agency grants J1-8132 and P1-0222. 
}
\address{Victor Magron: LAAS-CNRS \& Institute of Mathematics from Toulouse, France}
\email{vmagron@laas.fr}
\thanks{VM was supported by the FMJH Program PGMO (EPICS project) and  EDF, Thales, Orange et Criteo, as well as from the Tremplin ERC Stg Grant ANR-18-ERC2-0004-01 (T-COPS project).}
\address{Jurij Vol\v{c}i\v{c}: Department of Mathematics, Texas A\&M University, Texas}
\email{volcic@math.tamu.edu}
\thanks{JV was supported by the NSF grant DMS-1954709.}
\date{}

\begin{abstract}
Motivated by recent progress in quantum information theory,
this article aims at optimizing trace polynomials, i.e., polynomials in noncommuting variables and traces of their products. 
A novel Positivstellensatz certifying positivity of trace polynomials subject to trace constraints is presented, and 
a hierarchy of semidefinite relaxations 
converging monotonically to the optimum of a trace polynomial subject to tracial constraints is provided.
This hierarchy can be seen as a tracial analog of the 
Pironio, Navascu\'es and Ac\'in scheme [New J. Phys., 2008] 
for optimization of noncommutative polynomials.
The Gelfand-Naimark-Segal (GNS) construction is applied to extract optimizers of the trace optimization problem if flatness and extremality conditions are satisfied.
These conditions are sufficient to obtain finite convergence of our hierarchy.
\revise{The results obtained are applied to violations of polynomial Bell inequalities in quantum information theory.}
The main techniques used in this paper are inspired by 
real algebraic geometry, operator theory, and noncommutative algebra.
\end{abstract}

\keywords{Noncommutative polynomial; (pure) trace polynomial; semialgebraic set; semidefinite programming; trace optimization; Positivstellensatz; von Neumann algebra}

\subjclass[2010]{46N50; 90C22; 47N10; 13J30; 81-08}

\maketitle

\section{Introduction}
\label{sec:intro}
The goal of this article is to solve the  class of polynomial optimization problems with noncommuting variables (e.g.,~polynomials in matrices) involving the trace.
Applications of interest arise from  quantum theory and quantum information science~\cite{navascues2008convergent,pozas2019bounding,huber2020positive} as well as control theory~\cite{skelton1997unified,engineeringFRAG}. 
Further motivation relates to the generalized Lax conjecture~\cite{lax1957differential}, where the goal is to obtain computer-assisted proofs based on noncommutative sums of squares in Clifford algebras~\cite{netzer2014hyperbolic}.
The verification of noncommutative polynomial trace inequalities has also been motivated by a conjecture formulated by Bessis, Moussa and Villani (BMV) in 1975~\cite{bessis1975monotonic}, which has been recently proved by Stahl \cite{stahl2013proof} (see also the Lieb and Seiringer reformulation~\cite{lieb2004equivalent}). 
Further efforts focused on applications arising from bipartite quantum correlations~\cite{Gribling18}, and matrix factorization ranks in~\cite{Gribling19}.
In a related analytic direction, there has been recent progress on multivariate generalizations of the Golden-Thompson inequality and the Araki-Lieb-Thirring inequality~\cite{GTineq1,GTineq2}.

There is a plethora of prior research in quantum information theory
involving reformulating problems as optimization of noncommutative polynomials.
One famous application is to characterize the set of quantum correlations.
Bell inequalities \cite{bell1964einstein} provide a method to investigate entanglement, which allows two or more parties to be correlated in a non-classical way, and is often studied through the set of bipartite quantum correlations. 
Such correlations consist of the conditional probabilities that two physically separated parties can generate by performing measurements on a shared entangled state.
These conditional probabilities satisfy some  inequalities  classically, but violate them in the quantum realm \cite{clauser1969proposed}.\looseness=-1

Classically, \emph{polynomial optimization} aims at minimizing a polynomial over a set defined by a finite conjunction of polynomial inequalities, i.e.,  a basic closed \emph{semialgebraic set}.
Solving this optimization problem is NP-hard in general~\cite{Laurent:Survey}.
{\em Lasserre's hierarchy}~\cite{Las01sos} is a nowadays well established methodology to handle polynomial optimization in a practical way. 
This framework consists of approximating the solution of the initial problem by considering a hierarchy of convex relaxations.  
Each step of the hierarchy boils down to computing the optimal value of a \emph{semidefinite program} \cite{anjos2011handbook}, that is, the optimum of a linear function under linear matrix inequality constraints.
As a consequence of Putinar's Positivstellensatz \cite{Putinar1993positive}, if the quadratic module generated by the polynomials describing the semialgebraic set is archimedean, the hierarchy of semidefinite bounds converges from below to the minimum of the polynomial over this  semialgebraic set. 

In the free noncommutative context (i.e., without traces), a polynomial is positive semidefinite if and only if it can be written as a \emph{sum of hermitian squares} (SOHS)~\cite{Helton02,McCullSOS}.  
One can rely on such SOHS decompositions to perform  eigenvalue optimization of noncommutative polynomials over noncommutative semialgebraic sets, i.e., under noncommutative polynomial inequality constraints. 
The noncommutative analogues of Lasserre's hierarchy \cite{Helton04,navascues2008convergent,pironio2010convergent,cafuta2012constrained,nctrace}
 allow one to approximate as closely as desired the optimal value of such eigenvalue minimization problems.
In \cite{navascues2008convergent}, Navascu\'es, Pironio and Ac\'in provide a way to compute bounds on the maximal violation levels of Bell inequalities: they first reformulate the initial problem as an eigenvalue optimization one and then approximate its solution with a converging hierarchy of semidefinite programs, based on the noncommutative version of Putinar's Positivstellensatz due to Helton and McCullough \cite{Helton04}. 
This is the so-called Navascu\'es-Pironio-Ac\'in (NPA for short) hierarchy and can be viewed as the ``eigenvalue'' version of Lasserre's hierarchy.
This leads to a hierarchy of upper bounds on the maximum violation level of Bell inequalities (see also~\cite{doherty2008quantum,pal2009quantum}). 
Further extensions \cite{pironio2010convergent,cafuta2012constrained,nctrace} have been provided to optimize the trace of a given polynomial under positivity constraints.
$\ncsostools$  \cite{cafuta2011ncsostools,burgdorf16}  can compute lower bounds on minimal eigenvalues or traces of noncommutative polynomial objective functions over noncommutative semialgebraic sets. 

This work greatly extends these frameworks to the case of optimization problems involving {\em trace polynomials}, i.e., linear combinations of products of matrices and matrix traces.
A very simple example of such polynomial is $\Trace (A_1) \cdot \Trace (A_2^2) + (\Trace (A_1 A_2))^2$, where $A_1$ and $A_2$ are noncommutative variables, e.g., $A_1$ and $A_2$ can be both quantum physics operators. 
One important underlying motivation is that trace polynomials are involved in several problems arising from quantum information theory. 
For instance, \cite{fukuda2014asymptotically} presents a framework to obtain the limit output states for a large class of input states having  specific sets of parameters.
To obtain these limits, one needs to compute bounds for generalized traces of tensors.
One way to model such generalized traces is to consider a reformulation as an optimization problem involving trace polynomials.
In this problem, trace polynomials arise as cost functions but they can also appear in the constraints. 
Convex relaxations of trace polynomial problems  can be obtained as in the NPA hierarchy: 
one can associate a new variable to each word trace (e.g., $\Trace (A_1)$, $\Trace (A_1 A_2)$ and $\Trace (A_2^2)$ in the above example), then incorporate the initial constraints into the semidefinite matrix defined in the NPA hierarchy. 
Moreover the noncommuting operators, denoted by  $A_i, B_j, C_k$ in \cite{pozas2019bounding}, fulfill causal constraints, which leads to equality constraints such as
$$\Trace(A_{i_1} A_{i_2} \cdots A_{i_m} C_{k_1} C_{k_2} \cdots C_{k_m}) - \Trace(A_{i_1} A_{i_2} \cdots A_{i_m}) \Trace(C_{k_1} C_{k_2} \cdots C_{k_m}) = 0.$$
This results in a so-called {\em scalar extension} of the NPA hierarchy, which allows the authors to successfully identify correlations not attainable in the entanglement-swapping scenario.
However, \cite{pozas2019bounding} does not provide a proof of convergence for this hierarchy.
In \cite{huber2020positive}, the author focuses on the multilinear case and obtains a characterization of all multilinear equivariant trace polynomials which are positive on the positive cone. 
\revise{In contrast with \cite{huber2020positive}, we consider optimization of nonlinear trace polynomials over trace polynomial inequalities after assuming that the quadratic module
generated by the polynomials involved in the set of constraints is archimedean.}
In a closely related work in real algebraic geometry \cite{klep2018positive}, the first and third author derive several Positivstellens\"atze for trace polynomials positive on semialgebraic sets of \emph{fixed size} matrices. 
In particular, \cite{klep2018positive} establishes a Putinar-type Positivstellensatz stating that any positive polynomial admits a weighted SOHS decomposition without denominators.
In the dimension-free setting, finite von Neumann algebras and their tracial states provide a natural framework for studying tracial polynomial inequalities. This paper characterizes trace polynomials which are positive on tracial semialgebraic sets, where the initial polynomials and constraints involve freely noncommutative variables and traces, and the evaluations are performed on von Neumann algebras.

\subsection*{Contributions}

A {\em trace polynomial} is a polynomial in symmetric noncommutative variables $x_1,\dots,x_n$ and traces of their products.
Thus naturally each trace polynomial has an adjoint.
A {\em pure trace polynomial} is a trace polynomial that is made only of traces, i.e., has no free variables $x_j$. For instance,
the trace of a trace polynomial 
is a pure trace polynomial, e.g.
\[
\begin{split}
f&=x_1x_2x_1^2-\Trace(x_2)\Trace(x_1x_2)\Trace(x_1^2x_2)x_2x_1, \\
\Trace(f)&=\Trace(x_1^3x_2)-\Trace(x_2)\Trace(x_1x_2)^2\Trace(x_1^2x_2),\\
f^\star&= x_1^2x_2x_1-\Trace(x_2)\Trace(x_1x_2)  \Trace(x_1^2x_2) x_1x_2.
\end{split}
\]

Given a set of symmetric trace polynomials $S$, let $\cD_S$ be the set of all tuples $(X_1,\dots,X_n)$, where $X_j$ are operators from a finite von Neumann algebra with a given tracial state, that satisfy $s(X_1,\dots,X_n)\succeq0$ for all $s\in S$.

In Section~\ref{sec:noncyclic} we state and prove a pure trace variant of the 
Helton-McCullough \cite{Helton04}
noncommutative version of Putinar's Positivstellensatz \cite{Putinar1993positive}: we obtain a representation of pure trace polynomials positive on a set described by pure trace polynomial inequalities,
using weighted sums of squares. 
This first (noncylic) Positivstellensatz is valid under the classical assumption that the quadratic module generated by the polynomials involved in the set of constraints is archimedean (Corollary \ref{cor:psatz}).
Our proof relies on the classical Kadison–Dubois representation theorem (see e.g.~\cite{marshallbook}) and the Gelfand-Naimark-Segal (GNS) construction.

Then, we derive in Section \ref{sec:cyclic} a novel, cyclic Positivstellensatz for the more general case of trace polynomials which are positive on tracial semialgebraic sets.
A subset of symmetric trace polynomials $\cQ$ is a {\em cyclic quadratic module} if $1\in\cQ$, $\cQ+\cQ\subseteq \cQ$, $\Trace(\cQ)\subseteq\cQ$ and $h\cQ h^\star\subseteq\cQ$ for every trace polynomial $h$. In analogy with the commutative setting we say that $\cQ$ is {\em archimedean} if $N-x_1^2-\dots-x_n^2\in\cQ$ for some $N>0$. 
\revise{Observe that each $\cQ$ contains all sums of elements of the form
$\Trace(h_1h_1^\star)\cdots \Trace(h_\ell h_\ell^\star)h_0h_0^\star$
	for $h_i\in\T$. Lemma \ref{lemma:cycgen} below describes the smallest cyclic quadratic module $\cQ(S)$ containing a given set of generators $S\subseteq\T$.} 

\begin{thmA}[Corollary \ref{cor:linslackpsatz}]
Let $\cQ$ be an archimedean cyclic quadratic module, and let $a$ be a symmetric trace polynomial. The following are equivalent: 
\begin{enumerate}[\rm (i)]
	\item $a\succeq0$ on $\cD_{\cQ}$;
	\item for every $\varepsilon>0$ there exist univariate sums of squares $s_1,s_2\revise{\in\R[t]}$, 
	\revise{depending on $\varepsilon$}, such that
$$a=s_1(a)-s_2(a),\qquad \varepsilon-\Trace(s_2(a))\in \cQ.$$
\end{enumerate}
\end{thmA}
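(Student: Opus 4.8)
The plan is to prove the two implications separately. The implication (ii)$\Rightarrow$(i) is routine; the work lies in (i)$\Rightarrow$(ii), where we must trade the operator inequality $a\succeq0$ for a single scalar tracial inequality. This will be done by combining the archimedean bound on $\cD_\cQ$, an elementary univariate sum-of-squares construction, and the cyclic Positivstellensatz of Section~\ref{sec:cyclic}.

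For (ii)$\Rightarrow$(i), fix $X=(X_1,\dots,X_n)\in\cD_\cQ$ and suppose towards a contradiction that $a(X)\not\succeq0$. Then there are $\delta>0$ and a nonzero spectral projection $P$ of the self-adjoint operator $a(X)$ with $Pa(X)P\preceq-\delta P$ and $\Trace(P)=:c>0$. For any $\varepsilon>0$ pick $s_1,s_2$ as in (ii). Since $s_1,s_2$ are sums of squares in $\R[t]$ and $a(X)$ is self-adjoint, $s_i(a(X))\succeq0$; evaluating the identity $a=s_1(a)-s_2(a)$ at $X$ gives $a(X)=s_1(a(X))-s_2(a(X))$, hence $s_2(a(X))=s_1(a(X))-a(X)$, so
\[
Ps_2(a(X))P=Ps_1(a(X))P-Pa(X)P\succeq-Pa(X)P\succeq\delta P .
\]
As $s_2(a(X))\succeq0$, the trace of a positive operator does not increase under compression, so $\Trace(s_2(a(X)))\ge\Trace(Ps_2(a(X))P)\ge\delta c$. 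On the other hand $\varepsilon-\Trace(s_2(a))\in\cQ$ forces the scalar $\varepsilon-\Trace(s_2(a(X)))$ to be nonnegative, i.e.\ $\Trace(s_2(a(X)))\le\varepsilon$. Letting $\varepsilon\to0$ gives $\delta c\le0$, a contradiction; hence $a\succeq0$ on $\cD_\cQ$.

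For (i)$\Rightarrow$(ii), we first exploit archimedeanity: fix $N>0$ with $N-\sum_jx_j^2\in\cQ$, so that $\sum_jX_j^2\preceq N$ and hence $\|X_j\|\le\sqrt N$ for all $X\in\cD_\cQ$. Bounding $\|X_j\|$ and $|\Trace(w(X))|\le\|w(X)\|$ for the words $w$ occurring in $a$ produces a constant $C>0$, depending only on $a$ and $N$, with $\|a(X)\|\le C$ for every $X\in\cD_\cQ$; combined with (i) this gives $0\preceq a(X)\preceq CI$ on $\cD_\cQ$. Next, fix $\varepsilon>0$; we claim there are sums of squares $s_1,s_2\in\R[t]$ with $s_1(t)-s_2(t)=t$ and $0\le s_2(t)\le\varepsilon/2$ on $[0,C]$. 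Take $k\in\N$ and set
\[
s_2(t):=\tfrac Ck\Bigl(1+\bigl(1-\tfrac{2t}{C}\bigr)^{2k}\Bigr),\qquad s_1(t):=t+s_2(t).
\]
Then $s_2$ is visibly a sum of squares; the substitution $u=t/C$ and the fact that $u\mapsto\tfrac1k(1+2u)^{2k}-u$ is increasing on $[0,\infty)$ with value $\tfrac1k$ at $0$ show $s_2(t)\ge-t$ for all $t$, so $s_1$ is nonnegative on $\R$ and hence a sum of squares; and for $t\in[0,C]$ we have $0\le(1-\tfrac{2t}{C})^{2k}\le1$, whence $0\le s_2(t)\le2C/k\le\varepsilon/2$ once $k\ge4C/\varepsilon$. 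Substituting $a$ for $t$ in $s_1(t)-s_2(t)=t$ yields $a=s_1(a)-s_2(a)$, the first assertion of (ii).

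It remains to show $\varepsilon-\Trace(s_2(a))\in\cQ$. For every $X\in\cD_\cQ$ the bound $0\preceq a(X)\preceq CI$ and the functional calculus give $0\preceq s_2(a(X))\preceq\tfrac\varepsilon2 I$, so $0\le\Trace(s_2(a(X)))\le\varepsilon/2$; thus the symmetric trace polynomial $b:=\varepsilon-\Trace(s_2(a))$ is bounded below by the positive constant $\varepsilon/2$ on $\cD_\cQ$, and the cyclic Positivstellensatz of Section~\ref{sec:cyclic} (applied to the archimedean module $\cQ$ and $b$; if that result is stated only in a non-strict form, apply it to $b-\tfrac\varepsilon4$ and add the scalar $\tfrac\varepsilon4\in\cQ$ back) places $b=\varepsilon-\Trace(s_2(a))$ in $\cQ$. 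The genuinely delicate point is exactly this passage from operator positivity of $a$ to tracial positivity of $b$: it works because the splitting $a=s_1(a)-s_2(a)$ can be arranged with $s_2$ uniformly tiny on the spectral window $[0,C]$, a window we can produce only because $\cQ$ is archimedean; I expect the univariate construction above to be the main technical ingredient, with everything else reducing to the cited cyclic Positivstellensatz.
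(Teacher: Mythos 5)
Your proof is correct, but the direction (i)$\Rightarrow$(ii) follows a genuinely different route from the paper's. The paper never bounds $a$ on $\cD_\cQ$ via operator norms; it uses archimedeanity in the module sense to get $N-a\in\cQ$, rescales so that $1-a\in\cQ$, and works with the fixed univariate pair of Lemma \ref{l:univar} on $[0,1]$. The decisive difference is the last step: where you apply the pure-trace Positivstellensatz (Theorem \ref{thm:psatz}) wholesale to the pure trace polynomial $\varepsilon-\Trace(s_2(a))$, having shown it is bounded below by $\varepsilon/2$ on $\cD_\cQ$, the paper instead invokes Fekete's theorem (inside Lemma \ref{l:univar}) to write $\tfrac{\varepsilon}{2}-s_2=s_3+s_4t+s_5(1-t)$ with $s_3,s_4,s_5$ sums of squares, applies Theorem \ref{thm:psatz} only to the single term $\Trace(s_4(a)a)$ (nonnegative on $\cD_\cQ$ by Proposition \ref{p:poly}), and then assembles $\varepsilon-\Trace(s_2(a))\in\cQ$ algebraically from $1-a\in\cQ$ and the closure properties of $\cQ$. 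Your version is shorter and avoids Fekete, at the price of the norm estimate $\|a(X)\|\le C$ (legitimate, via Proposition \ref{prop:cyclicqmarch} and $|\Trace(w(X))|\le\|w(X)\|$) and of feeding a more complicated polynomial into the Positivstellensatz; the paper's version produces a more explicit certificate whose ingredients are closer to being computable. Your converse direction, a direct spectral-projection argument, also differs from the paper's (which routes through the auxiliary-variable reformulation with $y$ and Proposition \ref{p:poly}), but both are routine. Two cosmetic points: your monotonicity argument for $s_1\ge0$ silently uses $u=-t/C$ on $t\le 0$ rather than $u=t/C$, and the parenthetical repair at the end should read: apply Theorem \ref{thm:psatz} to $b-\tfrac{\varepsilon}{4}$ and take the theorem's tolerance equal to $\tfrac{\varepsilon}{4}$. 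Neither affects correctness.
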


In Section \ref{sec:hierarchy}, we rely on \revise{this} Positivstellensatz to design a converging hierarchy of semidefinite relaxations to approximate from below the minimum of a pure trace polynomial under pure trace polynomial inequality constraints. 
An extension of this hierarchy to the more general case of trace polynomial constraints is presented in Section \ref{sec:constrained}.
\begin{thmA}[Corollary \ref{cor:pure_cvg}]
Let $S$ be a set of symmetric trace polynomials, and $a$ a pure trace polynomial. The Positivstellensatz-induced hierarchy of semidefinite programs produces a convergent increasing sequence with limit $\inf_{\cD_S}a$.
\end{thmA}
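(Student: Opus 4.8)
The plan is to build the standard Lasserre-type hierarchy out of the cyclic Positivstellensatz (Theorem B, i.e. Corollary \ref{cor:linslackpsatz}) and then argue convergence by the usual duality/truncation argument, adapted to the tracial setting. First I would set up the moment side: let $L$ range over linear functionals on the space of pure trace polynomials of degree $\le 2k$ that are (a) normalized, $L(1)=1$, (b) \emph{tracial}, i.e. respect the cyclic equivalence $\cyc$ and the relation $\Trace(1)=1$ (or whatever normalization is used), (c) positive on the truncated cyclic quadratic module $\cQ(S)$, meaning $L(h_0 h_0^\star \prod_i \Trace(h_ih_i^\star))\ge 0$ and $L(s\cdot(\text{square}))\ge 0$ for generators $s\in S$, all within the degree bound. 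Each such constraint is a linear matrix inequality on the moment vector, so the level-$k$ relaxation
\[
\rho_k \;=\; \inf\{\, L(a) \;:\; L \text{ admissible at level } k \,\}
\]
is a semidefinite program. Monotonicity $\rho_k\le\rho_{k+1}$ is immediate since level-$(k+1)$ admissible functionals restrict to level-$k$ admissible ones; and $\rho_k\le \inf_{\cD_S}a$ because for any tuple $\uX\in\cD_S$ with tracial state $\tau$, the functional $p\mapsto \tau(p(\uX))$ is admissible at every level and attains the value $\tau(a(\uX))$. So the sequence is increasing and bounded above by $\inf_{\cD_S}a$.

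The content is the reverse inequality $\lim_k \rho_k \ge \inf_{\cD_S}a$, and here is where I would invoke the Positivstellensatz. Fix $\varepsilon>0$ and set $a' = a - (\inf_{\cD_S}a - \varepsilon)$, which is strictly positive on $\cD_S$; by archimedeanity (after replacing $S$ by a generating set of an archimedean $\cQ$, or noting $\cQ(S)$ is archimedean as hypothesized) apply Theorem B — or more directly the cyclic Putinar representation that feeds into it — to get, for suitable $\varepsilon'$, a decomposition exhibiting $a' + \varepsilon'$ (modulo cyclic equivalence and a trace) as an element of $\cQ(S)$ with explicit degree-bounded SOS weights. That membership lives at some finite level $k_0$, so every admissible $L$ at level $k_0$ satisfies $L(a') \ge -\varepsilon'$, whence $\rho_{k_0} \ge \inf_{\cD_S}a - \varepsilon - \varepsilon'$. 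Letting $\varepsilon,\varepsilon'\to 0$ gives the claim. I would present this first for the \emph{pure} trace objective and pure trace constraints (which is exactly Corollary \ref{cor:pure_cvg} as stated), so that everything lives inside the commutative-looking algebra of pure trace polynomials where the noncyclic Positivstellensatz of Section~\ref{sec:noncyclic} and the Kadison–Dubois machinery already apply cleanly; the general trace-polynomial case is deferred to Section~\ref{sec:constrained}.

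The main obstacle I anticipate is bookkeeping the \emph{cyclic} structure in the truncation: one must make sure that ``admissible at level $k$'' captures all the relations ($\cyc$, $\Trace$-linearity, $h\cQ h^\star$) in a way that is both a genuine SDP (finitely many linear constraints on a finite moment vector) and rich enough that the finite-level certificate produced by the Positivstellensatz actually certifies $\rho_{k_0}$ — i.e. the degree appearing in the representation must be controlled so that it sits inside the relaxation at a \emph{finite} level. This requires checking that the Positivstellensatz representation has bounded degree once $a$ is fixed (it does, since the representation is of the form $a = s_1(a)-s_2(a)$ with $\varepsilon - \Trace(s_2(a)) \in \cQ$ and $\deg s_i$ controlled by $\deg a$, plus a degree bound on the cyclic-quadratic-module decomposition, which is where archimedeanity is used). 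A secondary, more technical point is verifying that the infimum defining $\rho_k$ is attained / that weak duality with the SOS side is tight enough — but for the convergence statement one only needs the inequalities above, so no strong duality is required. Everything else is the routine translation of the Positivstellensatz into moment language, exactly as in the NPA/Lasserre template.
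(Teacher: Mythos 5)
Your proposal is correct and follows essentially the same route as the paper: truncate the quadratic module by degree, observe monotonicity and the upper bound trivially, and use the archimedean Positivstellensatz to place $a-a_{\min}^{\twoone}+\varepsilon$ in the truncated module at some finite level, which pins the bound at that level to within $\varepsilon$ of the optimum; as you note, no strong duality is needed for this. Two points you leave vague are settled as follows in the paper. First, archimedeanity is not a hypothesis but is manufactured by adjoining the ball constraints $N^{k}-\Trace(x_j^{2k})$ to $S$ (Lemma~\ref{l:arch}), so the limit is the infimum over $\mathcal{D}_{S[N]}^{\twoone}$. Second, your anticipated obstacle about making ``admissible at level $k$'' a genuine SDP is resolved by building the hierarchy on the \emph{noncyclic} commutative module $\cM(S(N))\subseteq\skinnyT$ of Section~\ref{sec:noncyclic} rather than on $\cQ(S)$: its truncations are sums of $a_i^2s_i$ with $s_i\in S(N)$, and since $b^2\Trace(pp^\star)=\Trace((bp)(bp)^\star)$ these are exactly the tracial Hankel and pure trace localizing matrix conditions of Lemma~\ref{lemma:pureHankel}, whereas the generators $\prod_i\Trace(h_ih_i^\star)\,h_0h_0^\star$ of $\cQ(\emptyset)$ are not linear matrix inequalities in the moment vector; accordingly, the convergence proof of Corollary~\ref{cor:pure_cvg} invokes Theorem~\ref{thm:nocyc} directly rather than Corollary~\ref{cor:linslackpsatz}, which is reserved for non-pure objectives.
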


Along the way, we present in Section~\ref{sec:gns} a tracial variant of the finite-dimensional GNS construction under flatness and extremal assumptions.
We use it to obtain finite convergence of our hierarchy as well as exactness of the relaxed solution, and design an algorithm to extract minimizers.
\revise{
Finally, in Section~\ref{sec:new} 
we give a simple example demonstrating our theoretical results,
and present an application of our techniques to quantum information theory: we use 
tracial optimization to find upper bounds on violations of polynomial Bell inequalities.}

\section{Notation and basic definitions}
\label{sec:prelim}

We begin by introducing basic notions about noncommutative polynomials, trace polynomials, and semialgebraic sets that will be used throughout the paper.

\subsection{Noncommutative polynomials and trace polynomials}
\label{sec:prelim_nc}
Let us denote by $\Mbb_k$ (resp.~$\Sbb_k$) the space of all real (resp.~symmetric) matrices of order $k$.
The normalized trace of a matrix $A \in \Mbb_k$ is given by $\Trace A = \frac{1}{k} \sum_{i=1}^k A_{i,i}$.
For a fixed $n \in \N$, we consider a finite alphabet $x_1,\dots,x_n$ and generate all possible words of finite length in these letters. 
The empty word is denoted by 1. 
The resulting set of words is the {\em free monoid} $\mx$, with $\underline{x} = (x_1,\dots, x_n)$. 
We denote by $\RX$ the set of real polynomials in noncommutative variables, abbreviated as {\em nc polynomials}.
The algebra $\RX$ is equipped with the involution $\star$ that fixes $\R \cup \{x_1,\dots,x_n\}$ point-wise and reverses words, so that $\RX$ is the $\star$-algebra freely generated by $n$ symmetric letters $x_1,\dots,x_n$. 

We now introduce some algebraic terminology 
to deal with the trace, 
following \cite{P76} (see also \cite{KS17,klep2018positive}).
\revise{
Two words $u,v\in\mx$ are called $\star$-cyclically equivalent 
($u  \overset{\text{cyc}^\star}{\sim} v$)
if $v$ or $v^\star$ can be obtained from $u$ by cyclically rotating the letters in $u$. For example, all words of length $3$ are $\star$-cyclically equivalent, 
but $x_1x_2x_3x_4 \overset{\text{cyc}^\star}{\nsim} x_2x_1x_3x_4$.
}
We denote by $\skinnyT$ the commutative polynomial algebra in infinitely many variables
$\Trace (w)$ with $w \in \mx$, up to $\star$-cyclic equivalence, 
that is, $\skinnyT := \R[\Trace (w) \,,  w \in \mx /  \text{cyc}^\star]$.
We also let $\T := \skinnyT \mx$ be the
free $\skinnyT$-algebra on $\ux$.
Elements of $\skinnyT$ are called \emph{pure trace polynomials}, and elements
of $\fatT$ are \emph{trace polynomials}. For example, $t=\Trace(x_1^2)-\Trace(x_1)^2 \in \skinnyT$ and $x_1^2-\Trace(x_1) x_ 1 - 2 t \in \T = \skinnyT\langle x_1 \rangle$.
The involution on $\T$, denoted also by $\star$, fixes $\{ x_1,\dots, x_n\}\cup \skinnyT$ point-wise, and reverses words from $\mx$.
The set of all {\em symmetric elements} of $\T$ is defined as $\SymT := \{f \in \T : f = f^\star  \}$.
A linear functional $L : \T \to \R$ is said to be {\em tracial} if $L (\Trace(f)) = L(f)$ for all $f \in \T$.
We also consider the universal trace map $\tau$ defined by
\begin{align*}
\tau : & \ \T \to \skinnyT \,, \\
& \ f \mapsto \Trace(f) \,.
\end{align*}
A linear functional $L : \T \mapsto \R$ is tracial if and only if $L \circ \tau = L$.
Such an $L$ is determined by $L|_{\skinnyT} : \skinnyT \to \R$ being an (arbitrary) linear functional.
The functional $L$ is called {\em unital} if $L(1) = 1$ and is called {\em symmetric} if $L(f^\star) = L(f)$, for all $f$ belonging to the domain of $L$.
\subsection{Tracial semialgebraic sets and von Neumann algebras}
\label{sec:tracialvNa}

Given $S \subseteq \SymT$, the {\em matricial tracial semialgebraic set} $\mathcal{D}_S$ associated to $S$ is defined as follows:
\begin{align}
\label{eq:DSK}
\mathcal{D}_S := \bigcup_{k \in \N} \{ \underline{A} = (A_1,\dots,A_n) \in \Sbb_k^n : s(\underline{A}) \succeq 0 \ \text{for all}\ s\in S  \} \,.
\end{align}
While \eqref{eq:DSK} looks like a natural candidate for testing positivity of tracial polynomials, the failure of Connes' embedding conjecture \cite{CECfalse} hinders the existence of a reasonable Positivstellensatz for \eqref{eq:DSK} by \cite{CECsohs}. Instead of just matrices of all finite sizes, one is thus led to include bounded operators, similarly as in the trace-free setting \cite{Helton04}. Since we deal with tracial constraints, the considered bounded operators need to admit traces. The natural framework is therefore given by tracial von Neumann algebras, which we discuss next.

A real von Neumann algebra $\cF$ \cite{ARU97} is a unital, weakly closed, real, self-adjoint subalgebra of the (real) algebra of bounded linear operators on a complex Hilbert space, with the property $\cF\cap i\cF=\{0\}$. We restrict ourselves to separable Hilbert spaces, implying that all von Neumann algebras have separable preduals. Much of the structure theory of real von Neumann algebras can be transfered from complex von Neumann algebras \cite[Chapter~5]{Tak02}. Namely, the complexification of a real \vna yields a complex \vna with an involutory $*$-antiautomorphism; conversely, the fixed set of an involutory $*$-antiautomorphism on a complex \vna is a real \vna.
A real \vna is \emph{finite} if in its complexification, every isometry is a unitary. By \cite[Theorem 2.4]{Tak02}, a \vna is finite if and only if it admits sufficiently many normal tracial states, which will play an important role in this article. 

A (real) \vna is a factor if its center consists of only the (real) scalar operators. By \cite[Theorem 2.6]{Tak02}, a factor is finite if and only if it admits a faithful normal tracial state; in this case, such a state is unique, and is called the \emph{trace} of the factor. Finally, a \emph{$\II_1$-factor} is an infinite-dimensional finite factor (other finite factors are of type $\I_n$, 
which are $n\times n$ complex matrices in the complex setting, and 
$n\times n$ real matrices or $\tfrac{n}{2}\times\tfrac{n}{2}$ quaternion matrices
 in the real setting).
In this article we consider positivity on operator semialgebraic sets. These are defined as follows (cf. \cite[Definition~1.59]{burgdorf16}):
\begin{definition}
\label{def:DSII}
A tracial pair $(\cF,\tau)$ consists of a real finite von Neumann algebra $\cF$ and a faithful normal tracial state $\tau$ on $\cF$ \cite[Chapter~5]{Tak02}.

Given $S \subseteq \SymT$ let $\mathcal{D}_S^{\cF,\tau}$ be the set of all self-adjoint tuples $\uX = (X_1,\dots,X_n) \in \cF^n$ making $s(\uX)$ a positive semidefinite operator for every $s \in S$; here $\Trace$ is evaluated as $\tau$. 
The von Neumann semialgebraic set $\mathcal{D}_S^{\trvN}$ generated by $S$ is defined as
\[
\mathcal{D}_S^{\trvN} := \bigcup_{(\cF,\tau)} \mathcal{D}_S^{\cF,\tau} \,,
\]
where the union is over all tracial pairs $(\cF,\tau)$.
Analogously, we define
\[
\mathcal{D}_S^{\II_1} := \bigcup_{\cF} \mathcal{D}_S^{\cF} \,,
\]
where the union is over all $\II_1$-factors (which come equipped with unique traces).
\end{definition} 
Note that finiteness of $S$ is not needed at this stage. Unlike in the free case \cite{HKM11},
these tracial semialgebraic sets are closed neither under direct sums nor reducing subspace compressions; for example, if $s=\Trace(x_1)\Trace(x_2)$, then
\[
\begin{split}
s(3,1)>0 \quad\text{and}\quad s(-1,-2)>0, \quad\text{but}\quad s(3\oplus -1,1\oplus -2) <0; \\
s(-2\oplus 1,1\oplus -2) >0, \quad\text{but}\quad s(-2,1)<0 \quad\text{and}\quad s(1,-2)<0. 
\end{split}
\]
To sidestep this technical problem we make use
of the following well-known fact that is all but stated in \cite[Theorem 2.5]{Dyk94}.

\begin{proposition}\label{p:embed}
Every tracial pair embeds into a $\II_1$-factor.
\end{proposition}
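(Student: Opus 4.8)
The plan is to realize an arbitrary tracial pair $(\cF,\tau)$ as a tracially embedded subalgebra of a $\II_1$-factor by a standard free-probability amalgamation trick, following the argument implicit in \cite[Theorem 2.5]{Dyk94}. First I would reduce to the complex case: by the structure theory recalled above, the complexification $\cF_\C$ is a complex finite von Neumann algebra carrying an involutory $*$-antiautomorphism $j$ whose fixed-point set is $\cF$, and $\tau$ extends uniquely to a faithful normal tracial state on $\cF_\C$ (still denoted $\tau$). It suffices to embed $(\cF_\C,\tau)$ trace-preservingly and $j$-equivariantly into a $\II_1$-factor $\cM$ equipped with an involutory $*$-antiautomorphism restricting to $j$; then the fixed-point algebra of that antiautomorphism on $\cM$ is a real $\II_1$-factor containing $\cF$ with the correct trace.

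The key step is the following: given any complex finite von Neumann algebra $(\cN,\tau)$ with separable predual, form the tracial free product $\cM_0 := \cN * L(\mathbb{F}_2)$ amalgamated over $\C$ (Voiculescu's free product construction), where $L(\mathbb{F}_2)$ is the free group factor. The free product of a finite von Neumann algebra with $L(\mathbb{F}_2)$ is a $\II_1$-factor (this is a well-known consequence of the fact that freeness with a free-group factor produces a factor of type $\II_1$; see \cite{Dyk94}), and the canonical inclusion $\cN\hookrightarrow\cM_0$ is trace-preserving and normal. To handle $j$, I would note that the antiautomorphism $j$ on $\cN=\cF_\C$ and the canonical (transpose-type) antiautomorphism on $L(\mathbb{F}_2)$ coming from $g\mapsto g^{-1}$ on $\mathbb{F}_2$ are both trace-preserving, hence combine to a trace-preserving involutory $*$-antiautomorphism on the amalgamated free product $\cM_0$, whose restriction to $\cN$ is $j$. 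Taking the fixed real von Neumann algebra yields the desired real $\II_1$-factor. Separability of the predual is preserved since we free-multiply by the separable $L(\mathbb{F}_2)$.

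The main obstacle — really the only nontrivial input — is the factoriality claim: that the free product of an arbitrary finite von Neumann algebra with $L(\mathbb{F}_2)$ (or even with $L(\Z)$) is again a $\II_1$-factor, and in particular infinite-dimensional. This is precisely the content extracted from \cite[Theorem 2.5]{Dyk94}; if one wants a self-contained path, one can instead invoke the simpler statement that $\cN * L(\Z)$ has trivial center whenever $\cN\neq\C$, together with the observation that the free copy of $L(\Z)$ forces infinite dimensionality, so the factor is of type $\II_1$ rather than $\I_n$. Everything else — trace-preservation of the inclusion, normality, separability of the predual, and the equivariance bookkeeping needed to descend from the complex to the real setting — is routine and follows from the universal properties of the amalgamated free product and the complexification dictionary of \cite[Chapter~5]{Tak02}.
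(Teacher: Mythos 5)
Your argument is correct and takes essentially the same route as the paper's: complexify, form a tracial free product with a group von Neumann algebra, combine the involutory $*$-antiautomorphisms into one on the free product, and pass to its fixed real algebra to land in a real $\II_1$-factor containing $(\cF,\tau)$. The only deviation is that you free-multiply by $L(\mathbb{F}_2)$ where the paper uses $L(\Z)$, which makes the factoriality step uniform (avoiding the paper's case analysis for $\dim\widetilde\cF\le 2$) at the cost of invoking a slightly stronger form of Dykema's factoriality result.
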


\begin{proof}
Let $(\cF,\tau)$ be a tracial pair and $(\widetilde\cF,\widetilde\tau)$ its complexification. If $L(\Z)$ is the complex von Neumann group algebra of $\Z$ \cite[Definition 7.4]{Tak02}, the free product construction (see e.g. \cite[Section 1]{Dyk94}) along $\widetilde\tau$ and the trace on $L(\Z)$ yields the \vna $\widetilde\cF*L(\Z)$ with a normal faithful tracial state whose restriction to $\widetilde\cF$ equals $\widetilde\tau$. Also, both $L(\Z)$ and $\widetilde\cF$ admit natural involutory $\star$-antiautomorphisms, which induce an involutory $\star$-antiautomorphism $\widetilde\cF*L(\Z)$. Its fixed set is a real \vna algebra containing $(\cF,\tau)$, and it is a (real) $\II_1$-factor if $\widetilde\cF*L(\Z)$ is a (complex) $\II_1$-factor. If $\dim \widetilde\cF\le 2$, then $\widetilde\cF\in\{\C,\C\oplus\C \}$ and $\widetilde\cF*L(\Z)\in \{L(\Z),L(\Z_2*\Z) \}$ is a $\II_1$-factor. if $\dim \widetilde\cF\ge 3$, then $\widetilde\cF*L(\Z)$ is a  $\II_1$-factor by \cite[Theorem 2.5]{Dyk94} (and the proof of \cite[Lemma 2.9]{Dyk94}).
\end{proof}

\revise{
\begin{remark}
A reader might rightfully wonder why the setup is restricted to reals instead of complexes. Since every complex von Neumann algebra is a real von Neumann algebra and the real part of a complex tracial state is a real tracial state, all the conclusions in this paper also hold for evaluations in complex von Neumann algebras. Likewise one could consider complex trace polynomials, but the corresponding formalism for trace symbols would be more intricate (namely, trace symbols would not be fixed under the involution, so they would need to be split in real and imaginary part, and relations connecting both with respect to $\star$ would need to be imposed). However, with the view towards optimization and implementation using the standard semidefinite programming solvers it is more convenient to derive results within the real framework.
\end{remark}
}

\section{Non-cyclic Positivstellensatz for pure trace polynomials}
\label{sec:noncyclic}

In this section we provide our first Positivstellensatz, Theorem \ref{thm:nocyc}, for pure trace polynomials based on quadratic modules from real algebraic geometry \cite{marshallbook}.
\revise{
A subset $\cM \subseteq \skinnyT$ is called a quadratic module if $1 \in \cM$, $\cM + \cM \subseteq \cM$ and $a^2 \cM$ for all $a \in \skinnyT$.
}
Given an archimedean quadratic module $\cM \subseteq \skinnyT$ (in the usual commutative sense, meaning that for each $f\in\skinnyT$ there is $m>0$ such that $m\pm f\in\cM$), we consider the real points of the real spectrum $\operatorname{Sper}_{\cM}\skinnyT$, namely the set $\chi_{\cM}$ defined by\looseness=-1
\begin{align}
\label{eq:chiM}
\chi_{\cM} := \{\varphi : \skinnyT \to \R \mid \varphi \text{ homomorphism,}  \ \varphi(\cM) \subseteq \R_{\geq 0}, \ \varphi(1) = 1 \}.
\end{align}

The next proposition is the well-known Kadison-Dubois representation theorem, see e.g. \cite[Theorem 5.4.4]{marshallbook}.

\begin{proposition}
\label{prop:KD}
Let $\cM \subseteq \emph{\skinnyT}$ be an archimedean quadratic module.
Then, for all $a \in \emph{\skinnyT}$, one has
$$\forall \varphi \in \chi_{\cM}  \quad \varphi(a) \geq 0 \qquad \Leftrightarrow \qquad \forall \varepsilon > 0 \quad a + \varepsilon \in \cM.$$
\end{proposition}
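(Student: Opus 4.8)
The statement is the Kadison--Dubois representation theorem, so the plan is to reproduce the standard real-algebraic argument (as in \cite[Chapter~5]{marshallbook}); I sketch it for completeness.

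The implication ``$\Leftarrow$'' is immediate: if $a+\varepsilon\in\cM$ for every $\varepsilon>0$, then any $\varphi\in\chi_{\cM}$ satisfies $\varphi(a)+\varepsilon=\varphi(a+\varepsilon)\ge0$, and letting $\varepsilon\downarrow0$ gives $\varphi(a)\ge0$.

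For ``$\Rightarrow$'' I would first collect elementary facts about archimedean quadratic modules in the $\R$-algebra $\skinnyT$. If $-1\in\cM$ then $\cM=\skinnyT$, since any $c$ equals $(\tfrac{c+1}{2})^2\cdot 1+(\tfrac{c-1}{2})^2\cdot(-1)$, a sum of two elements of $\cM$; then $\chi_{\cM}=\emptyset$ and the asserted equivalence is vacuous, so I may assume $-1\notin\cM$. Writing $\lambda=(\sqrt\lambda)^2$ for $\lambda\ge0$ shows $\cM$ is closed under multiplication by nonnegative reals, and combined with $-1\notin\cM$ this forces: whenever $r-c\in\cM$ and $s+c\in\cM$ with $r,s\in\R$, one has $r+s\ge0$. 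Now suppose toward a contradiction that $b:=a+\varepsilon_0\notin\cM$ for some $\varepsilon_0>0$. By Zorn's lemma (a union of a chain of quadratic modules avoiding $b$ again avoids $b$) pick a quadratic module $\cN\supseteq\cM$ maximal with respect to $b\notin\cN$; it is still archimedean, and $-1\notin\cN$ since otherwise $\cN=\skinnyT\ni b$. Define
\[
\varphi\colon\skinnyT\to\R,\qquad \varphi(c):=\inf\{\,t\in\R : t-c\in\cN\,\},
\]
which by the archimedean property and the previous observation is a well-defined real number, the set $\{t:t-c\in\cN\}$ being nonempty, bounded below, and upward closed. I would then show $\varphi(1)=1$, that $\varphi$ is additive and $\R$-homogeneous, that $\varphi(\cN)\subseteq\R_{\ge0}$ (hence $\varphi(\cM)\subseteq\R_{\ge0}$, so $\varphi\in\chi_{\cM}$), and — the crucial point — that $\varphi$ is multiplicative. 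Granting this, the upward-closed set $\{t:t+b\in\cN\}$ has infimum $\varphi(-b)=-\varphi(b)$ and does not contain $0$ (as $0+b=b\notin\cN$), so $0\le-\varphi(b)$, i.e.\ $\varphi(b)\le0$; hence $\varphi(a)=\varphi(b)-\varepsilon_0\le-\varepsilon_0<0$, contradicting the hypothesis. Therefore $a+\varepsilon\in\cM$ for every $\varepsilon>0$.

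The main obstacle is verifying that the functional $\varphi$ attached to the maximal module $\cN$ is actually a ring homomorphism, above all its multiplicativity (and the nontrivial half of additivity, $\varphi(c+d)\ge\varphi(c)+\varphi(d)$). This is exactly where maximality of $\cN$ is needed: one shows that $\cN$ ``absorbs the archimedean slack'', i.e.\ $\varphi(c)+\delta-c\in\cN$ and $c-\varphi(c)+\delta\in\cN$ for all $\delta>0$ and all $c$, and deduces bilinearity from this. That step is the technical heart of Kadison--Dubois; I would simply cite \cite[Chapter~5]{marshallbook} for it, the remaining points being routine manipulations with the quadratic-module axioms.
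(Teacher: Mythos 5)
Your sketch is correct: the paper itself offers no proof of this proposition, simply citing it as the classical Kadison--Dubois representation theorem \cite[Theorem 5.4.4]{marshallbook}, and your outline faithfully reproduces the standard argument from that reference (trivial direction, reduction to $-1\notin\cM$, Zorn's lemma to a maximal quadratic module $\cN$ avoiding $a+\varepsilon_0$, and the functional $\varphi(c)=\inf\{t: t-c\in\cN\}$), with the one genuinely technical step --- multiplicativity of $\varphi$ --- explicitly deferred to the same source. This matches the paper's treatment, so nothing further is needed.
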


A homomorphism $\varphi:\skinnyT\to\R$ is determined by the ``tracial moments'' $\varphi(\Trace(w))$ for $w\in\mx$. In this sense, the following variant of \cite[Theorem 1.3]{Had} is a solution of the tracial moment problem. In the given formulation, it is the dimension-free analog of the extension theorem \cite[Theorem 4.8]{klep2018positive}.

\begin{proposition}
\label{prop:moment}
Let $\varphi: \emph{\skinnyT}\to \R$ be a homomorphism. Then there are a tracial pair $(\cF,\tau)$ and $\uX=\uX^*\in\cF^n$ such that $\varphi(a)=a(\uX)$ for all $a\in \emph{\skinnyT}$ if and only if the following holds:
\begin{enumerate}[\rm (a)]
	\item $\varphi(\Trace (pp^\star))\ge0$ for all $p\in \RX$;
	\item $\liminf_{k\to\infty} \sqrt[2k]{\varphi(\Trace (x_j^{2k}))}<\infty$ for $j=1,\dots,n$. 
\end{enumerate}
\end{proposition}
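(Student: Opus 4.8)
The plan is to prove Proposition~\ref{prop:moment} as a dimension-free version of the tracial moment problem, following the strategy of \cite[Theorem 1.3]{Had} and \cite[Theorem 4.8]{klep2018positive}, via a Gelfand--Naimark--Segal (GNS) construction applied to the positive tracial functional on $\RX$ induced by $\varphi$.

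The necessity of (a) and (b) is the easy direction: if $\varphi(a)=a(\uX)$ for a self-adjoint tuple $\uX$ in a tracial pair $(\cF,\tau)$, then $\varphi(\Trace(pp^\star))=\tau(p(\uX)p(\uX)^\star)\ge0$ since $\tau$ is a positive (faithful) state, and $\varphi(\Trace(x_j^{2k}))=\tau(X_j^{2k})\le \|X_j\|^{2k}$, so the $2k$-th root is bounded by $\|X_j\|<\infty$. For sufficiency, I would first use $\varphi$ to define a linear functional $L:\RX\to\R$ by $L(p):=\varphi(\Trace(p))$; this is well defined since $\Trace$ is cyclically invariant, it is unital because $\varphi(1)=1$, it is symmetric (as $\Trace(p^\star)$ and $\Trace(p)$ agree up to $\star$-cyclic equivalence), it is tracial by construction, and by~(a) it is positive semidefinite, i.e.\ $L(pp^\star)\ge0$ for all $p\in\RX$. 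The GNS construction then yields a real Hilbert space $H$ obtained by completing $\RX/N$ where $N=\{p:L(pp^\star)=0\}$, with inner product $\langle p+N,q+N\rangle=L(q^\star p)$, and the left-multiplication operators $\widehat{x_j}$ acting on $\RX/N$.

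The crucial step — and the main obstacle — is to show that each $\widehat{x_j}$ extends to a \emph{bounded} self-adjoint operator on $H$; here is precisely where hypothesis~(b) enters. By positivity and the tracial property, $L$ restricted to the commutative subalgebra $\R[x_j]$ is a positive tracial functional, and $L(x_j^{2k})=\varphi(\Trace(x_j^{2k}))$ controls the moments of a positive measure $\mu_j$ on $\R$ representing $p\mapsto L(p(x_j))$ (a univariate Hamburger moment problem is always solvable for a positive sequence, but uniqueness/compact support needs the growth bound); condition~(b) forces the support of any such $\mu_j$ to lie in a bounded interval, hence $\|\widehat{x_j}\|\le \liminf_k \sqrt[2k]{L(x_j^{2k})}<\infty$. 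More carefully, one shows directly that $\langle \widehat{x_j}^{2k}\xi,\xi\rangle^{1/2k}$ is controlled using the Cauchy--Schwarz / power-mean trick on the vector state, so that the spectral radius of $\widehat{x_j}$ is finite and $\widehat{x_j}$ extends boundedly; self-adjointness follows from symmetry of $L$ and $x_j=x_j^\star$. This boundedness argument, and packaging it cleanly across all $j$ simultaneously, is the technical heart of the proof.

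Finally, let $\cF$ be the real von Neumann algebra generated by $\widehat{x_1},\dots,\widehat{x_n}$ inside $B(H)$ (take weak closure; $\cF\cap i\cF=\{0\}$ is automatic in the real setting as we work on a real Hilbert space), and let $\tau(T):=\langle T(1+N),1+N\rangle$ be the vector state associated with the cyclic unit vector $1+N$. One checks that $\tau$ is a tracial state: $\tau(\widehat p\,\widehat q)=L(qp)=L(pq)=\tau(\widehat q\,\widehat p)$ for $p,q\in\RX$, and traciality passes to the weak closure by normality/continuity. If $\tau$ is not faithful, one passes to the quotient by its kernel (a weakly closed ideal), which is again a tracial pair, so without loss of generality $\tau$ is faithful and normal. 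Then for any pure trace polynomial $a\in\skinnyT$, which is a polynomial in the symbols $\Trace(w)$, we have $a(\uX)$ obtained by substituting $\Trace(w)\mapsto\tau(\widehat w)=L(w)=\varphi(\Trace(w))$; since $\varphi$ is a homomorphism, $a(\uX)=\varphi(a)$, giving the desired representation with $\uX=(\widehat{x_1},\dots,\widehat{x_n})$. (One should also remark that finiteness of $\cF$ is automatic once it carries a faithful normal tracial state, by the structure theory invoked in Section~\ref{sec:tracialvNa}.)
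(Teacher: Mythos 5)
Your argument is correct, but it takes a different route from the paper. The paper's proof of the nontrivial direction is a two-line reduction: it rescales, setting $\alpha=\max\{1,\max_j\liminf_k\sqrt[2k]{\varphi(\Trace(x_j^{2k}))}\}$ and $\phi(w)=\varphi(\Trace(w))/\alpha^{|w|}$, and then invokes the (real version of the) tracial moment theorem of Hadwin \cite[Theorem 1.3]{Had} as a black box to produce self-adjoint contractions $\uY$ with $\phi(p)=\tau(p(\uY))$, finally taking $\uX=\alpha\uY$. What you do instead is unpack that citation: you run the GNS construction for $L=\varphi\circ\Trace$ directly and prove boundedness of the multiplication operators yourself. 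Both are legitimate; yours buys self-containedness at the price of carrying the technical heart of Hadwin's theorem. Two remarks on that heart. First, the univariate-measure observation by itself only controls $\widehat{x_j}$ on the cyclic subspace generated by $\R[x_j]$, not on all of $H$; what is actually needed is the estimate $L(p^\star x_j^2p)\le\alpha_j^2\,L(p^\star p)$ for \emph{all} $p\in\RX$, obtained by iterated Cauchy--Schwarz applied to $a\mapsto L(p^\star a p)$ together with traciality to absorb the high powers $L(p^\star x_j^{2^{m+1}}p)=L(x_j^{2^{m+1}}pp^\star)\le L(x_j^{2^{m+2}})^{1/2}L((pp^\star)^2)^{1/2}$ --- you gesture at exactly this, and note that this same inequality is what makes $N$ a left ideal, so $\widehat{x_j}$ is well defined in the first place. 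Second, the $\liminf$ in (b) is harmless because $k\mapsto L(x_j^{2k})$ is log-convex (by Cauchy--Schwarz and traciality), so $L(x_j^{2k})^{1/2k}$ is monotone and the $\liminf$ is in fact a limit; this is worth saying explicitly since the iterated Cauchy--Schwarz uses the powers $k=2^m$. Minor points: faithfulness of the vector trace on the weak closure is automatic (if $\tau(T^\star T)=0$ then $T$ kills the dense subspace $\pi(\T)\overline{1}$ by traciality), so your quotient fallback is not needed; and the final identification $a(\uX)=\varphi(a)$ via multiplicativity of $\varphi$ is exactly right.
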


\begin{proof}
$(\Rightarrow)$ This is trivial since $\tau(AA^*)\ge0$ and $|\tau(A^{2k})|\le \|A\|^{2k}$ for every $A\in\cF$ and $k\in\N$.

$(\Leftarrow)$ Denote
$$\alpha=\max\left\{1,\max_j\liminf_{k\to\infty} \sqrt[2k]{\varphi(\Trace (x_j^{2k}))}\right\}.$$
Let $\phi:\RX\to \R$ be a linear functional defined by
$$\phi(w):=\frac{\varphi(\Trace(w))}{\alpha^{|w|}}$$
for $w\in\mx$. Then $\phi$ is a symmetric tracial functional on $\RX$, $\phi(pp^\star)\ge0$ for every $p\in\RX$ and $\max_j\liminf_{k\to\infty} \phi(x_j^{2k})<\infty$. By \cite[Theorem 1.3]{Had} (or rather its real version) there is a tracial pair $(\cF,\tau)$ and a tuple of self-adjoint contractions $\uY \in\cF^n$ such that $\phi(p)=\tau(p(\uY))$ for all $p\in \RX$. Then $\uX=\alpha\uY$ satisfies $\varphi(\Trace (p))=\tau(p(\uX))$ for $p\in\RX$ and thus $\varphi(a)=a(\uX)$ for $a\in\skinnyT$.
\end{proof}

\begin{definition}
Given $S\subseteq \skinnyT$ and $N>0$ let 
\begin{align}
\label{eq:preSN}
S(N):=S
\cup\{\Trace (pp^\star)\mid p\in\RX \}
\cup \{N^{k}-\Trace (x_j^{2k})\mid  1\le j\le n,\ k\in\N \}
\subseteq\skinnyT.
\end{align}
For $S\subseteq\Sym\fatT$ let
\begin{align}
\label{eq:SN}
S[N]:=S\cup \{N-x_j^2\mid j=1,\dots,n\}\subseteq \fatT.
\end{align}
\end{definition}

\begin{lemma}\label{l:arch}
The quadratic module $\cM(S(N))\subseteq \emph{\skinnyT}$ is archimedean for every $S, N$.
\end{lemma}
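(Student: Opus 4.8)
The goal is to show that the quadratic module $\cM(S(N))\subseteq\skinnyT$ is archimedean, i.e. that for every $t\in\skinnyT$ there is $m>0$ with $m\pm t\in\cM(S(N))$. Since $\skinnyT$ is generated as an $\R$-algebra by the symbols $\Trace(w)$ for $w\in\mx/\text{cyc}^\star$, and since the set of elements $t$ for which $m\pm t\in\cM(S(N))$ for some $m$ is closed under addition and multiplication (a standard computation: if $m_1\pm t_1$ and $m_2\pm t_2$ lie in a quadratic module, so do suitable bounds for $t_1+t_2$ and $t_1t_2$), it suffices to exhibit, for each generator $\Trace(w)$, a constant bounding it above and below in $\cM(S(N))$. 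So I reduce to: for every word $w\in\mx$ there is $m>0$ with $m\pm \Trace(w)\in\cM(S(N))$.

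First I handle the single-variable powers. For each $j$ and $k$, the element $N^k-\Trace(x_j^{2k})$ is literally in $S(N)\subseteq\cM(S(N))$, which bounds $\Trace(x_j^{2k})$ from above by $N^k$; it is bounded below by $0$ since $\Trace(x_j^{2k})=\Trace\big((x_j^k)(x_j^k)^\star\big)\in\cM(S(N))$ by the second generating set in \eqref{eq:preSN}. The key step is then to bound a general $\Trace(w)$ using these. The natural tool is a Cauchy–Schwarz / power-mean estimate at the level of the quadratic module: for a word $w=x_{i_1}\cdots x_{i_d}$, write $p=w$ (or $p=1-w$, etc.) and expand $\Trace(pp^\star)\ge0$; combined with the AM–GM-type inequality $2\Trace(uv^\star)\le \Trace(uu^\star)+\Trace(vv^\star)$ (which is just $\Trace((u-v)(u-v)^\star)\ge0$), one reduces $\Trace(w)$ to a combination of diagonal traces $\Trace(x_{i}^{2k})$. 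More precisely, iterating Cauchy–Schwarz for traces gives $|\Trace(x_{i_1}\cdots x_{i_d})|$ bounded by a product of $\Trace(x_{i_\ell}^{d})^{1/d}$-type terms, and each of those is bounded by a power of $N$; crucially every inequality used here is witnessed by an identity expressing the relevant slack as a sum of elements $\Trace(pp^\star)$ plus multiples of the generators $N^k-\Trace(x_j^{2k})$, so it is valid inside $\cM(S(N))$ and not merely numerically.

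The main obstacle is making this last reduction genuinely algebraic rather than analytic: Cauchy–Schwarz for a tracial functional is an inequality about evaluations, whereas here I need the slack polynomial to lie in the quadratic module $\cM(S(N))$ itself. The cleanest route is to argue by induction on the word length $d$. For $d=1$ it is the base case above. For the inductive step, given $w$ of length $d$, split $w=uv$ with $u,v$ of lengths roughly $d/2$; then $\Trace(w)=\Trace(uv)=\Trace(uv^{\star\star})$ and the identity $\Trace(uv)+\Trace(vu^\star\cdot(u^\star)^\star)\cdots$ — concretely, use $0\le\Trace\big((\lambda u - \lambda^{-1}v^\star)(\lambda u-\lambda^{-1}v^\star)^\star\big)=\lambda^2\Trace(uu^\star)-2\Trace(uv)+\lambda^{-2}\Trace(v^\star v)$, valid in $\cM(S(N))$ for any scalar $\lambda\ne0$, which yields $\pm\Trace(w)=\pm\Trace(uv)\le \tfrac12\big(\lambda^2\Trace(uu^\star)+\lambda^{-2}\Trace(vv^\star)\big)$ modulo $\cM(S(N))$ (adjusting signs/letters appropriately so that $uv$ or its adjoint equals $w$ up to $\text{cyc}^\star$, noting $\Trace$ respects $\star$-cyclic equivalence). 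Now $uu^\star$ and $vv^\star$ are (after another appeal to the inductive hypothesis, or directly since $uu^\star$ is a word of length $2|u|<2d$... ) palindromic words whose traces are bounded by the induction hypothesis applied at smaller length, or bounded directly by products of $\Trace(x_i^{2|u|})\le N^{|u|}$ via the $d=1$ case and one more Cauchy–Schwarz. Choosing $\lambda$ to balance the two terms produces an explicit bound $m$ depending only on $N$ and $d$. I expect the bookkeeping of $\star$-cyclic equivalence (ensuring at each split that the word produced is the one we want, up to the identification defining $\skinnyT$) to be the fiddly part, but it poses no real difficulty since $\Trace$ is by construction constant on $\star$-cyclic equivalence classes. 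Collecting the bounds over the finitely many generators needed to express any given $t\in\skinnyT$ completes the proof.
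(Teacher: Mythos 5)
Your overall skeleton is the same as the paper's: reduce to bounding the generators $\Trace(w)$, $w\in\mx$, via the ring of bounded elements; obtain the single-variable case from $N^k-\Trace(x_j^{2k})\in S(N)$; and drive an induction with the Cauchy--Schwarz-type identity
\[
\Trace\big((u\pm v^\star)(u^\star\pm v)\big)=\Trace(uu^\star)\pm 2\Trace(uv)+\Trace(v^\star v)\in\cM(S(N)).
\]
The gap is in how the induction closes. You induct on the word \emph{length} $d$ and split $w=uv$ with $|u|,|v|\approx d/2$; but then $uu^\star$ has length $2|u|\ge d$, so $\Trace(uu^\star)$ is \emph{not} covered by an inductive hypothesis ``at smaller length'' --- your own ellipsis at exactly this point flags the problem. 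The fallback you offer, bounding $\Trace(uu^\star)$ ``directly by products of $\Trace(x_i^{2|u|})\le N^{|u|}$ via the $d=1$ case and one more Cauchy--Schwarz,'' does not work: a further Cauchy--Schwarz only produces traces of words of comparable or greater length, and you cannot peel off $x_j^2\preceq N$ at the operator level because $N-x_j^2$ is not among the generators of $\cM(S(N))$ (only the pure-trace elements $N^k-\Trace(x_j^{2k})$ and $\Trace(pp^\star)$ are; indeed $\cM(S(N))$ lives in the commutative ring $\skinnyT$ and is not closed under products). Without a strictly decreasing complexity measure the recursion does not terminate.

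The repair is the paper's choice of measure: write $w=x_{i_1}^{k_1}\cdots x_{i_\ell}^{k_\ell}$ with $i_1\ne i_2\ne\cdots\ne i_\ell$ and induct on the number $\ell$ of single-letter blocks, splitting $w=w_1w_2$ at the block level with $\lambda=\lfloor \ell/2\rfloor$. The point is that $w_1w_1^\star$ is a palindrome, so up to $\star$-cyclic equivalence
\[
\Trace(w_1w_1^\star)=\Trace\big(x_{i_1}^{2k_1}x_{i_2}^{k_2}\cdots x_{i_{\lambda-1}}^{k_{\lambda-1}}x_{i_\lambda}^{2k_\lambda}x_{i_{\lambda-1}}^{k_{\lambda-1}}\cdots x_{i_2}^{k_2}\big)
\]
is the trace of a word with only $2(\lambda-1)<\ell$ blocks (similarly $2(\ell-\lambda-1)<\ell$ for $w_2w_2^\star$), even though its length has roughly doubled. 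With this measure your identity (the scaling parameter is not even needed) closes the induction verbatim. A smaller point: your base case only treats the even powers $\Trace(x_j^{2k})$; the odd powers $\Trace(x_j^k)$ are also generators of $\skinnyT$ and require the identity $N^k+1\pm 2\Trace(x_j^k)=(N^k-\Trace(x_j^{2k}))+\Trace((x_j^k\pm1)^2)$, i.e.\ your Cauchy--Schwarz with $v=1$.
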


\begin{proof}
We need to show that for every $w\in\mx$ there exists $m>0$ such that
\begin{equation}\label{e:bd}
m\pm\Trace(w)\in \cM(S(N)).
\end{equation}
Write $w=x_{i_1}^{k_1}\cdots x_{i_\ell}^{k_\ell}$ for $i_1\neq i_2\neq\cdots \neq i_\ell$; we prove \eqref{e:bd} by induction on $\ell$. If $\ell=1$, \eqref{e:bd} holds because
$$N^k+1\pm 2\Trace(x_j^k) = \big(N^k-\Trace(x_j^{2k})\big)+\Trace\big((x_j^k\pm 1)^2\big).$$
For $\ell>1$ denote $\lambda=\lfloor \frac{\ell}{2}\rfloor$,
 and let $w_1=x_{i_1}^{k_1}\cdots x_{i_\lambda}^{k_\lambda}$ and $w_2=x_{i_{\lambda+1}}^{k_{\lambda+1}}\cdots x_{i_\ell}^{k_\ell}$. Then
$$\Trace(w_1w_1^\star) = \Trace\left(
x_{i_1}^{2k_1}x_{i_2}^{k_2}\cdots
x_{i_{\lambda-1}}^{k_{\lambda-1}}x_{i_\lambda}^{2k_\lambda}x_{i_{\lambda-1}}^{k_{\lambda-1}}
\cdots x_{i_2}^{k_2}
\right)$$
and similarly for $\Trace(w_2w_2^\star)$; note that $2(\lambda-1),2(\ell-\lambda-1)<\ell$. Hence by the induction hypothesis there exist $m_1,m_2>0$ such that $m_i- \Trace(w_iw_i^\star)\in \cM(S(N))$. Then\looseness=-1
$$(m_1+m_2) \pm 2\Trace(w) =
\left(m_1-\Trace(w_1w_1^\star)\right) + \left(m_2-\Trace(w_2w_2^\star)\right)
+ \Trace\left((w_1\pm w_2^\star)(w_1^\star\pm w_2) \right)$$
lies in $\cM(S(N))$.
\end{proof}

\revise{
Recall that the Helton-McCullough archimedean Positivstellensatz \cite{Helton04} states that any  polynomial in noncommutative variables positive on a basic semialgebraic set belongs to the quadratic module generated by the polynomials describing this set, under the assumption that this module is archimedean. 
}
We are now ready to prove our first theorem, the purely tracial analog of this noncommutative Helton-McCullough Positivstellensatz.

\begin{theorem}\label{thm:nocyc}
Let $S\subseteq \emph{\skinnyT}$ and $N>0$ 
be given.
Then for $a\in\emph{\skinnyT}$ the following are equivalent:
\begin{enumerate}[\rm (i)]
\item\label{it:i2} $a(\uX)\geq 0$ for all 
$\uX \in 
\mathcal{D}_{S[N]}^{\trvN}$;\vspace{0.25ex}
\item\label{it:i3} $a(\uX)\geq 0$ for all $\uX \in \mathcal{D}_{S[N]}^{\II_1}$;\vspace{0.25ex}
\item\label{it:i1} $a+\varepsilon \in \cM(S(N))$ for all $\varepsilon>0$.
\end{enumerate}
\end{theorem}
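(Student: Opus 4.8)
The plan is to establish the cycle of implications $\ref{it:i1}\Rightarrow\ref{it:i2}\Rightarrow\ref{it:i3}\Rightarrow\ref{it:i1}$. The middle one is free: every $\II_1$-factor with its unique trace is a tracial pair, so $\mathcal{D}_{S[N]}^{\II_1}\subseteq\mathcal{D}_{S[N]}^{\trvN}$ and positivity of $a$ on the larger set descends to the smaller one.

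For $\ref{it:i1}\Rightarrow\ref{it:i2}$ I would argue by evaluation. Given a tracial pair $(\cF,\tau)$ and $\uX\in\mathcal{D}_{S[N]}^{\cF,\tau}$, consider the evaluation homomorphism $\varphi_{\uX}\colon\skinnyT\to\R$, $t\mapsto t(\uX)$ (with $\Trace$ read as $\tau$), which is well defined because $\tau$ is tracial. The key point is that $\varphi_{\uX}$ is nonnegative on every generator of $\cM(S(N))$: on $s\in S$ because $s(\uX)\succeq0$ is a nonnegative scalar, on $\Trace(pp^\star)$ because $\tau$ is a state, and on $N^k-\Trace(x_j^{2k})$ because $N-X_j^2\succeq0$ forces $0\preceq X_j^{2k}\preceq N^k$ and hence $\tau(X_j^{2k})\le N^k$. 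Since a homomorphism applied to a sum of elements $\sigma g$ with $\sigma$ a square and $g$ a generator stays nonnegative, $\varphi_{\uX}\in\chi_{\cM(S(N))}$; thus $a+\varepsilon\in\cM(S(N))$ gives $a(\uX)+\varepsilon=\varphi_{\uX}(a+\varepsilon)\ge0$ for all $\varepsilon>0$, whence $a(\uX)\ge0$.

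The substance of the proof is $\ref{it:i3}\Rightarrow\ref{it:i1}$. By Lemma \ref{l:arch} the quadratic module $\cM(S(N))$ is archimedean, so by the Kadison--Dubois theorem (Proposition \ref{prop:KD}) it suffices to show $\varphi(a)\ge0$ for each $\varphi\in\chi_{\cM(S(N))}$. Fix such a $\varphi$. I would first verify the hypotheses of Proposition \ref{prop:moment}: condition (a) holds since $\Trace(pp^\star)\in S(N)$, and condition (b) holds since $\Trace(x_j^{2k})=\Trace(x_j^k(x_j^k)^\star)$ and $N^k-\Trace(x_j^{2k})$ both lie in $S(N)$, forcing $0\le\varphi(\Trace(x_j^{2k}))\le N^k$ and hence $\liminf_k\sqrt[2k]{\varphi(\Trace(x_j^{2k}))}\le\sqrt N<\infty$. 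Proposition \ref{prop:moment} then yields a tracial pair $(\cF,\tau)$ and a self-adjoint tuple $\uX\in\cF^n$ with $\varphi(t)=t(\uX)$ for all $t\in\skinnyT$. It remains to check $\uX\in\mathcal{D}_{S[N]}^{\trvN}$: for $s\in S$ one has $s(\uX)=\varphi(s)\ge0$, and the only genuine point is $N-X_j^2\succeq0$. Here I would exploit faithfulness of $\tau$: if $R:=\|X_j\|^2\in\operatorname{spec}(X_j^2)$ exceeded $N$, pick $\delta>0$ with $R-\delta>N$; the spectral projection $E=\chi_{[R-\delta,R]}(X_j^2)$ is nonzero, so $\tau(E)>0$, yet $(R-\delta)^k\tau(E)\le\tau(X_j^{2k})=\varphi(\Trace(x_j^{2k}))\le N^k$ would force $\tau(E)\le(N/(R-\delta))^k\to0$, a contradiction; hence $R\le N$. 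Finally, Proposition \ref{p:embed} embeds $(\cF,\tau)$ trace-preservingly into a $\II_1$-factor, which keeps $\uX$ in $\mathcal{D}_{S[N]}$ and leaves the value $a(\uX)$ (a polynomial in traces preserved by the embedding) unchanged, so $\varphi(a)=a(\uX)\ge0$ by $\ref{it:i3}$, as required.

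I expect the faithfulness step — upgrading the scalar moment bounds $\tau(X_j^{2k})\le N^k$ to the operator inequality $X_j^2\preceq N$ — to be the only non-formal ingredient; everything else is bookkeeping with quadratic modules together with Lemma \ref{l:arch} and Propositions \ref{prop:KD}, \ref{prop:moment} and \ref{p:embed}.
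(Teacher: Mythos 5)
Your proof is correct and follows essentially the same route as the paper's: Lemma \ref{l:arch} plus Kadison--Dubois (Proposition \ref{prop:KD}) reduce \ref{it:i1} to positivity of $a$ under every $\varphi\in\chi_{\cM(S(N))}$, Proposition \ref{prop:moment} realizes such a $\varphi$ by a tuple in a tracial pair, and Proposition \ref{p:embed} passes to a $\II_1$-factor. The only local difference is your spectral-projection/faithfulness argument for $\|X_j\|^2\le N$, where the paper instead just invokes the bound $\|X_j\|\le\sqrt N$ from the proof of Proposition \ref{prop:moment}; your version is a self-contained (and slightly more careful) substitute for that step.
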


\begin{proof}
\ref{it:i2}$\Leftrightarrow$\ref{it:i3} holds by Proposition \ref{p:embed}.
\ref{it:i1}$\Rightarrow$\ref{it:i2} If $\uX \in \mathcal{D}_{S[N]}^{\cF,\tau}$, then
$$s(\uX)\ge0,\qquad \tau (p(X)p(X)^*)\ge0,\qquad \tau(X_j^{2k} )\le N^{k}$$
for all $s\in S$, $p\in\RX$, $1\le j\le n$ and $k\in\N$, so $a(\uX)\ge0$.

\ref{it:i2}$\Rightarrow$\ref{it:i1} Suppose $a+\varepsilon \notin \cM(S(N))$ for some $\varepsilon>0$. By Proposition \ref{prop:KD}, there exists a unital homomorphism $\varphi: \skinnyT\to\R$ with $\varphi(\cM(S(N)))\subseteq \R_{\ge0}$ and $\varphi(a)<0$. Hence
$$\varphi(\Trace (pp^\star))\ge0,\qquad \varphi(\Trace (x_j^{2k}))\le N^k$$
for all $p\in\RX$, $1\le j\le n$ and $k\in\N$. Hence by Proposition \ref{prop:moment} there exist a tracial pair $(\cF,\tau)$ and $\uX\in\cF^n$ such that $\varphi(b)=b(\uX)$ for all $b\in \skinnyT$. Moreover, the proof of Proposition \ref{prop:moment} implies $\|X_j\|\le\sqrt N$ for $1\le j\le n$. Furthermore, $s(\uX)=\varphi(s)\ge0$ for every $s\in S$ implies $\uX \in \mathcal{D}_{S[N]}^{\cF,\tau}$. Finally, $a(\uX)=\varphi(a)<0$.
\end{proof}

Since $\cM(S(N_1))\subseteq \cM(S(N_2))$ for $N_1\ge N_2$, we obtain the following:

\begin{corollary}\label{cor:psatz}
Let $S\subseteq \emph{\skinnyT}$ and $a\in \emph{\skinnyT}$. The following are equivalent:
\begin{enumerate}[\rm (i)]
	\item $a(\uX)\geq 0$ for all 
$\uX \in 
\mathcal{D}_{S}^{\trvN}$;\vspace{0.25ex}
	\item $a(\uX)\geq 0$ for all $\uX \in \mathcal{D}_{S}^{\II_1}$;\vspace{0.25ex}
	\item $a+\varepsilon \in \cM(S(N))$ for all $\varepsilon>0$ and $N\in\N$.
\end{enumerate}
\end{corollary}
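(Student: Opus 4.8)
The plan is to derive Corollary \ref{cor:psatz} from Theorem \ref{thm:nocyc} by a direct limiting argument in the parameter $N$, using only the monotonicity $\cM(S(N_1))\subseteq\cM(S(N_2))$ for $N_1\ge N_2$ (since the generators $N^k-\Trace(x_j^{2k})$ become "weaker" as $N$ grows, and $S[N]=S\cup\{N-x_j^2\}$ likewise). First I would observe that the von Neumann semialgebraic sets are nested in $N$ in the opposite direction: $\mathcal{D}_{S[N_1]}^{\trvN}\subseteq\mathcal{D}_{S[N_2]}^{\trvN}$ whenever $N_1\le N_2$, because imposing $N_1-x_j^2\succeq0$ is more restrictive than $N_2-x_j^2\succeq0$. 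Hence $\mathcal{D}_S^{\trvN}=\bigcup_{N\in\N}\mathcal{D}_{S[N]}^{\trvN}$: indeed, any tuple $\uX\in\mathcal{D}_S^{\cF,\tau}$ consists of bounded self-adjoint operators, so $\|X_j\|^2\le N$ for some integer $N$, which gives $\uX\in\mathcal{D}_{S[N]}^{\cF,\tau}$. The same remark applies verbatim with $\II_1$-factors in place of tracial pairs.

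With this in hand the equivalence of (i) and (ii) in Corollary \ref{cor:psatz} follows from Proposition \ref{p:embed} exactly as in the proof of Theorem \ref{thm:nocyc}. For the remaining equivalences I would argue as follows. Assume (iii): for every $\varepsilon>0$ and every $N\in\N$ we have $a+\varepsilon\in\cM(S(N))$. Fixing any $N$, Theorem \ref{thm:nocyc} gives $a(\uX)\ge0$ for all $\uX\in\mathcal{D}_{S[N]}^{\trvN}$ (equivalently $\mathcal{D}_{S[N]}^{\II_1}$). Since $N$ was arbitrary and $\mathcal{D}_S^{\trvN}=\bigcup_N\mathcal{D}_{S[N]}^{\trvN}$, we conclude $a(\uX)\ge0$ on all of $\mathcal{D}_S^{\trvN}$, i.e.\ (i), and likewise (ii). Conversely, assume (i) (equivalently (ii)). Then for each fixed $N\in\N$, since $\mathcal{D}_{S[N]}^{\trvN}\subseteq\mathcal{D}_S^{\trvN}$, we have $a(\uX)\ge0$ for all $\uX\in\mathcal{D}_{S[N]}^{\trvN}$; by Theorem \ref{thm:nocyc} this yields $a+\varepsilon\in\cM(S(N))$ for every $\varepsilon>0$. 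As $N$ was arbitrary, (iii) follows.

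The only genuine point requiring care — and the step I expect to be the mild obstacle — is the identity $\mathcal{D}_S^{\trvN}=\bigcup_{N\in\N}\mathcal{D}_{S[N]}^{\trvN}$, or more precisely the direction $\subseteq$: one must check that every self-adjoint tuple in a tracial pair is norm-bounded by some integer, which is immediate from the definition of a von Neumann algebra as a subalgebra of bounded operators, but should be noted explicitly so that the reader sees why passing to the union over $N$ exactly recovers the sets $\mathcal{D}_S^{\trvN}$ and $\mathcal{D}_S^{\II_1}$ of Definition \ref{def:DSII}. Everything else is a formal rewriting of Theorem \ref{thm:nocyc} together with the monotonicity remark already stated before the corollary; no new analytic input is needed.
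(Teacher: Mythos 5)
Your proposal is correct and is exactly the argument the paper intends: the corollary is stated with only the remark that $\cM(S(N_1))\subseteq\cM(S(N_2))$ for $N_1\ge N_2$, the implicit proof being precisely your exhaustion $\mathcal{D}_{S}^{\trvN}=\bigcup_{N\in\N}\mathcal{D}_{S[N]}^{\trvN}$ (every self-adjoint element of a von Neumann algebra is norm-bounded) combined with an application of Theorem~\ref{thm:nocyc} for each fixed $N$. Your explicit verification of that union identity is the right point to single out, and nothing further is needed.
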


\section{Cyclic Positivstellensatz for trace polynomials}
\label{sec:cyclic}
In this section we prove a Positivstellensatz for trace polynomials that is 
less inspired by the commutative theory than 
the one from Section \ref{sec:noncyclic} and relies more on the tracial structure of trace polynomials.
First we introduce the notion of a cyclic quadratic module.
A subset $\cQ \subseteq \SymT$ is called a {\em cyclic quadratic module} if
$$1 \in \cQ,\ \cQ+\cQ\subseteq \cQ,\  
a^\star \cQ a \subseteq \cQ\ \forall a\in\fatT,\ 
\Trace(\cQ)\subseteq\cQ.$$
Given $S\subseteq\T$ let $\cQ(S)$ be the cyclic quadratic module generated by $S$, i.e., the smallest cyclic quadratic module in $\T$ containing $S$.
A cyclic quadratic module $\cQ$ is called {\em archimedean} if for all $a \in \SymT$ there exists $N>0$ such that $N - a \in \cQ$.
We start with a few preliminary results.

\begin{lemma}
\label{lemma:cycgen}
Let $S\subseteq\T$.
\begin{enumerate}[\rm (1)]
	\item Elements of $ \cQ(\emptyset)$ are precisely sums of
	$$\Trace(h_1h_1^\star)\cdots \Trace(h_\ell h_\ell^\star)h_0h_0^\star$$
	for $h_i\in\T$.
	\item Elements of $\cQ(S)$ are precisely sums of
	$$q_1,\quad h_1s_1h_1^\star,\quad \Trace(h_2s_2h_2^\star)q_2$$
	for $h_i\in \T$, $q_i\in \cQ(\emptyset)$, $s_i\in S$.
	\item Elements of $\Trace(\cQ(S))=\cQ(S)\cap\skinnyT$ are precisely sums of
	$$\Trace(h_1h_1^\star)\cdots \Trace(h_\ell h_\ell^\star)\Trace(h_0 s h_0^\star )$$
	for $h_i\in\T$ and $s\in S$.
\end{enumerate}
\end{lemma}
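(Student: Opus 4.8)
The plan is to establish each of the three parts by showing that the claimed set of sums is itself a cyclic quadratic module (or, for parts (2) and (3), the appropriate derived object), is contained in the one we want to describe, and contains the generators; minimality then forces equality. Throughout, the key structural fact to exploit is the \emph{tracial identity} $\Trace(ab)=\Trace(ba)$ together with the fact that $\Trace(t\cdot f)=t\cdot\Trace(f)$ for a pure trace polynomial $t\in\skinnyT$, so that pure trace factors behave like central scalars; I would isolate this observation at the start since all three parts rely on it.

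**Part (1).** Let $P$ denote the set of sums of elements $\Trace(h_1h_1^\star)\cdots\Trace(h_\ell h_\ell^\star)h_0h_0^\star$ with $h_i\in\T$ (allowing $\ell=0$, which gives plain hermitian squares, so $1=1\cdot1^\star\in P$). First I would check $P\subseteq\cQ(\emptyset)$: each hermitian square $h_0h_0^\star$ lies in $\cQ(\emptyset)$ since $1\in\cQ(\emptyset)$ and $h_0\cQ(\emptyset)h_0^\star\subseteq\cQ(\emptyset)$; each $\Trace(h_ih_i^\star)$ lies in $\cQ(\emptyset)$ by the trace axiom; and a product of such, being (as noted above) a product of a ``central'' pure trace element with a hermitian square, is obtained by repeatedly applying $a^\star\cQ(\emptyset)a\subseteq\cQ(\emptyset)$ with $a$ a scalar-like square root — more cleanly, $\Trace(h_1h_1^\star)\cdots\Trace(h_\ell h_\ell^\star)h_0h_0^\star=g g^\star$ where one absorbs the pure-trace factors, or one argues by induction that the product of an element of $P$ with $\Trace(hh^\star)$ or with a conjugation $a^\star(\cdot)a$ stays in $P$. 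Conversely, to get $\cQ(\emptyset)\subseteq P$ I would verify that $P$ satisfies the four axioms of a cyclic quadratic module: closure under addition is immediate; $1\in P$; $a^\star P a\subseteq P$ because $a^\star(\prod\Trace(h_ih_i^\star)h_0h_0^\star)a=\prod\Trace(h_ih_i^\star)(a^\star h_0)(a^\star h_0)^\star$; and $\Trace(P)\subseteq P$ because $\Trace\big(\prod\Trace(h_ih_i^\star)h_0h_0^\star\big)=\prod\Trace(h_ih_i^\star)\Trace(h_0h_0^\star)$, which is again of the required form with $\ell$ increased by one and the new ``$h_0$'' equal to $1$. By minimality of $\cQ(\emptyset)$ we get equality.

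**Parts (2) and (3).** For part (2) let $P'$ be the set of sums of $q_1$, $h_1s_1h_1^\star$, and $\Trace(h_2s_2h_2^\star)q_2$ with $h_i\in\T$, $q_i\in\cQ(\emptyset)$, $s_i\in S$. The inclusion $P'\subseteq\cQ(S)$ is clear from the axioms plus part (1) (note $\cQ(\emptyset)\subseteq\cQ(S)$). For the reverse inclusion I would again check $P'$ is a cyclic quadratic module containing $S$: containing $S$ (take $h_1=1$, $q_1=0$), closed under sums, contains $1$; the subtle points are closure under $a^\star(\cdot)a$ and under $\Trace$. Conjugating $h_1s_1h_1^\star$ by $a$ gives $(h_1 a)' s_1 (h_1 a)'^\star$-type terms after using that $a^\star h_1 s_1 h_1^\star a=(a^\star h_1)s_1(a^\star h_1)^\star$; conjugating $\Trace(h_2s_2h_2^\star)q_2$ by $a$ keeps the central trace factor and sends $q_2\mapsto a^\star q_2 a\in\cQ(\emptyset)$ by part (1); conjugating $q_1\in\cQ(\emptyset)$ stays in $\cQ(\emptyset)$. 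Applying $\Trace$: $\Trace(q_1)\in\cQ(\emptyset)$ by part (1); $\Trace(h_1s_1h_1^\star)=\Trace(h_1s_1h_1^\star)\cdot1$ is of the third type with $q_2=1\in\cQ(\emptyset)$; and $\Trace\big(\Trace(h_2s_2h_2^\star)q_2\big)=\Trace(h_2s_2h_2^\star)\Trace(q_2)$, again of the third type since $\Trace(q_2)\in\cQ(\emptyset)$. So $P'=\cQ(S)$. Finally part (3): intersecting the description of part (2) with $\skinnyT$, the type-$q_1$ terms contribute $\cQ(\emptyset)\cap\skinnyT$, which by part (1) is sums of $\Trace(h_1h_1^\star)\cdots\Trace(h_\ell h_\ell^\star)$; the type-$h_1s_1h_1^\star$ terms are generally not pure trace and should be discarded unless $h_1\in\skinnyT$, in which case they are already of the type-$\Trace(h_2s_2h_2^\star)q_2$ form after moving $h_1$ inside and using centrality; and the type-$\Trace(h_2s_2h_2^\star)q_2$ terms, with $q_2\in\cQ(\emptyset)$ expanded via part (1), give exactly sums of $\Trace(h_1h_1^\star)\cdots\Trace(h_\ell h_\ell^\star)\Trace(h_0sh_0^\star)$. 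The identity $\Trace(\cQ(S))=\cQ(S)\cap\skinnyT$ itself follows since $\Trace$ is a projection of $\T$ onto $\skinnyT$ fixing $\skinnyT$ pointwise and mapping $\cQ(S)$ into itself.

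**Main obstacle.** The routine part is verifying the module axioms; the one place demanding care is showing that the \emph{sum} structure is preserved under multiplication by conjugations — i.e., that $a^\star\big(\sum(\text{terms})\big)a$ redistributes termwise in the required normal form, and in particular that a conjugation of a type-$q$ term by a general $a\in\T$ lands back in $\cQ(\emptyset)$ rather than only in $\cQ(S)$. This hinges entirely on part (1)'s characterization, so the logical ordering (prove (1), then use it freely in (2), then specialize (2) to get (3)) is what makes the argument clean, and I would be careful to present it in exactly that dependency order.
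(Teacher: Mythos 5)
Your proposal is correct and is exactly the argument the paper intends: the paper's own proof is literally the single word ``Straightforward,'' and the natural route is the one you take --- verify that each candidate set is a cyclic quadratic module containing the generators and invoke minimality, using centrality of pure trace factors together with the identity $\Trace(hqh^\star)=q\,\Trace(hh^\star)$ for $q\in\skinnyT$ to build up the products $\Trace(h_1h_1^\star)\cdots\Trace(h_\ell h_\ell^\star)h_0h_0^\star$ by alternating the conjugation and trace axioms. Two small caveats: the shortcut of ``absorbing the pure-trace factors into $gg^\star$'' is not available (elements like $\Trace(hh^\star)$ have no square root in $\T$), so for the inclusion $P\subseteq\cQ(\emptyset)$ only your inductive alternative works; and in part (3) the rigorous route is the one you end with --- apply $\Trace$ to the normal form from part (2), using $\cQ(S)\cap\skinnyT=\Trace(\cQ(S))$ --- rather than ``intersecting and discarding'' terms, since a sum of non-pure terms can be pure (note also that the $\Trace(q_1)$ contributions match the stated form only if one allows $s=1$, a harmless imprecision already present in the lemma's statement).
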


\begin{proof}
Straightforward.
\end{proof}

\revise{
Note that expressions such as $\Trace(h_1 s_1 h_1^\star) \Trace(h_2 s_2 h_2^\star)$ for $h_i\in \T$ and $s_i \in S$ do not belong to $\cQ(S)$. 
{We emphasize that computing such product representations in our context would be very hard in practice. 
Indeed, even if one bounds the degrees of the $h_j$, computing their coefficients boils down to solving a nonlinear semidefinite program, which is impractical.}
Second, even in the classical commutative  case 
quadratic modules (such as those appearing in Putinar's Positivstellensatz) are not closed under multiplication, by contrast with Schm\"ugden type representations \cite{Schmudgen91sos} in which one allows multiplication of polynomials involved in the set of constraints. 
 Admitting products of constraints in sums of squares positivity certificates increases computational cost only modestly, but
 from a theoretical
viewpoint yields little to no advantages (in the archimedean case),
which is why quadratic modules are preferred from a practical point of view.}
\begin{proposition}
\label{prop:cyclicqmarch}
A cyclic quadratic module $\cQ$ is archimedean if and only if there exists $N \in \N$ such that $N - \sum_{i=1}^n x_j^2 \in \cQ$.
\end{proposition}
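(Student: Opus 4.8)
The plan is to prove the non-trivial direction: assuming there exists $N\in\N$ with $N-\sum_{j=1}^n x_j^2\in\cQ$, show that for every $a\in\SymT$ there exists $M>0$ with $M-a\in\cQ$. The reverse implication is immediate, since $N-\sum_j x_j^2$ is a particular symmetric trace polynomial. First I would record the key consequence of the hypothesis: for each $j$ one has $N-x_j^2 = (N-\sum_i x_i^2) + \sum_{i\neq j}x_i^2\in\cQ$, so each individual variable is ``bounded'' in $\cQ$. The standard trick is then to show that the set $B:=\{a\in\SymT: \exists M>0,\ M\pm a\in\cQ\}$ is closed under the operations needed to build all of $\SymT$: it is clearly closed under sums and real scalar multiples, it contains $1$ and all $x_j$, and the content of the proposition is that it is closed under products (of symmetric elements) and under applying $\Trace$.

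The product step is the familiar ``bounded elements form an algebra'' argument adapted to the cyclic setting. For $a,b\in B$ symmetric with $M_a\pm a,\ M_b\pm b\in\cQ$, write $ab+ba = \tfrac12\big((a+b)^2-(a-b)^2\big)$ hmm — more directly, use that if $M\pm a\in\cQ$ then $M^2-a^2 = \tfrac12\big((M-a)M(M-a)\cdot\text{(something)}\big)$; cleaner: $M_a^2 - a^2 = \tfrac12\big(M_a(M_a-a)+ (M_a-a)M_a\big)$? These need care. The clean route: if $M\pm a\in\cQ$ then $a^\star(M\pm a)a\in\cQ$ shows nothing new, but $M^2-a^2 = (M-a)\cdot M + M\cdot(M-a) - (M-a)^2$... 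I would instead invoke the standard lemma (valid in any cyclic quadratic module, exactly as for ordinary quadratic modules): if $M-a^2\in\cQ$ and $M-b^2\in\cQ$ with $a,b$ symmetric, then $2M - (ab+ba)\in\cQ$, because $2M-(ab+ba) = (M-a^2)+(M-b^2)+(a-b)^2$ and $(a-b)^2 = (a-b)^\star\cdot 1\cdot(a-b)\in\cQ$. Combined with bounding squares (if $M\pm a\in\cQ$ then $M^2 - a^2\in\cQ$ via $M^2-a^2 = \tfrac12(M-a)\cdot 2M\cdot\ldots$, or simply $M^2-a^2=(M+a)\tfrac{M-a}{?}$ — use $M^2-a^2 = M(M-a)+(M-a)M - (M-a)^2+2aM-2Ma$; since $a$ and $M$ commute ($M$ scalar), $M^2-a^2=2M\cdot\tfrac{(M-a)+(M+a)}{2}-\ldots$; because $M$ is a scalar, $M(M-a)=(M-a)M$ and $M^2-a^2 = \tfrac12\big((M-a)\cdot 2M + 2M\cdot(M+a)\big)-M^2$... ). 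The practical point is that $M$ being a scalar makes all these manipulations routine, so the square-bounding and product-bounding steps go through as in \cite{marshallbook} for ordinary quadratic modules.

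The genuinely new step — the one I expect to be the main obstacle — is showing $B$ is closed under $\Trace$: if $a\in\SymT$ with $M\pm a\in\cQ$, then $M\pm\Trace(a) = \Trace(M\pm a)\in\Trace(\cQ)\subseteq\cQ$, using that $\Trace$ is $\R$-linear and $\Trace(M)=M$. So this is in fact immediate from the axiom $\Trace(\cQ)\subseteq\cQ$. Finally, to conclude: any $a\in\SymT$ is an $\R$-linear combination of symmetrized words $w+w^\star$ in the $x_j$ and of pure-trace monomials $\Trace(w_1)\cdots\Trace(w_r)$; each $x_j\in B$, $B$ is closed under taking products of symmetric elements (hence symmetrized words lie in $B$), $B$ is closed under $\Trace$, and $B$ is a real vector space — so $a\in B$, giving $M-a\in\cQ$ for suitable $M$. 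I would present this as: (1) hypothesis $\Rightarrow$ each $x_j\in B$; (2) $B$ is an $\R$-subspace containing $1$; (3) $B$ closed under products of symmetric elements; (4) $B$ closed under $\Trace$; (5) these generate $\SymT$, so $B=\SymT$. The only subtlety worth spelling out is step (3), where one reduces a general product to a sum of squares of symmetric elements plus bounded remainders using that scalars are central; everything else is bookkeeping.
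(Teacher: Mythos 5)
Your overall architecture is the same as the paper's: introduce the set of bounded elements and show it exhausts $\T$. Your treatment of the trace step is actually cleaner than the paper's: from $M\pm a\in\cQ$ you get $M\pm\Trace(a)=\Trace(M\pm a)\in\Trace(\cQ)\subseteq\cQ$ directly from the axiom $\Trace(\cQ)\subseteq\cQ$, whereas the paper routes through the variance identity $\Trace(aa^\star)-\Trace(a)^2=\Trace\big((a-\Trace(a))(a-\Trace(a))^\star\big)\in\cQ$ together with a bound on $\Trace(aa^\star)$. Both work; the paper's version has the minor advantage of bounding $\Trace(a)$ for a non-symmetric word $a$ without first symmetrizing.

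The genuine gap is in your steps (3) and (5). You work only with the set $B$ of bounded \emph{symmetric} elements and propose to reach all of $\SymT$ by closing $B$ under Jordan products $ab+ba$ and under $\Trace$. But the parenthetical ``hence symmetrized words lie in $B$'' does not follow: the $\R$-span of iterated Jordan products of $1,x_1,\dots,x_n$ is the free special Jordan algebra, which is a \emph{proper} subspace of the span of all symmetrized words — the tetrad $x_1x_2x_3x_4+x_4x_3x_2x_1$ is the classical element not expressible through Jordan products. So step (5), ``these generate $\SymT$,'' fails as stated, and your visible hesitation over the identities for $M^2-a^2$ is a symptom of having restricted to symmetric elements too early. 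The standard repair — and what the paper does, citing Vidav — is to work instead with
\[
H=\{a\in\T \mid \exists N\in\N \text{ s.t. } N-a^\star a\in\cQ\},
\]
with no symmetry restriction. Closure of $H$ under multiplication is the painless identity $N^2-(ab)^\star(ab)=b^\star(N-a^\star a)b+N(N-b^\star b)$, so $H$ is a $\star$-subalgebra containing $1$ and all $x_j$, hence every word $w$; then every symmetrized word is bounded via $w+w^\star=\tfrac12\big((1+w)^\star(1+w)-(1-w)^\star(1-w)\big)$ with $1\pm w\in H$, and for symmetric $b$ membership in $H$ is equivalent to $N\pm b\in\cQ$ for some $N$ (one direction by $\tfrac12(N-b^2)+\tfrac12(1\pm b)^2=\tfrac{N+1}{2}\pm b$). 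With $H$ in place, your trace step and the final bookkeeping go through exactly as you describe.
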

\begin{proof}
($\Rightarrow$) is obvious. For the converse assume $N - \sum_{i=1}^n x_j^2 \in \cQ$ for some $N\in\N$. Then
the set $\cQ \cap \RX$ is an archimedean quadratic module. Thus, for all $a=a^\star\in \RX$ there exists  $N \in \N$ such that 
\begin{align}
\label{eq:cyclicqmarch1}
N - a \in \cQ \cap \RX \,.
\end{align}
In addition, the set $H$ of bounded elements, defined by
\[
H = \{a \in \T \mid \exists N \in \N \text{ s.t. } N -a^\star a \in \cQ \} \,,
\]
is closed under involution, addition, subtraction and multiplication, i.e., is a $\star$-subalgebra of $\T$ \cite{vidav}.
A symmetric element $b\in\T$ is in $H$ if and only if there is some $N\in\N$
with $N\pm b\in \cQ$.

For every $a\in\ax$ we have
\begin{equation}\label{eq:cs1}
\Trace(aa^\star)-\Trace(a)^2 = \Trace\large((a-\Trace(a))(a-\Trace(a))^\star\large)\in \cQ.
\end{equation}
By \eqref{eq:cyclicqmarch1} and the fact that $\cQ$ is cyclic, 
there is some $N\in\N$ with $N-\Trace(aa^\star)\in \cQ$.
Adding this to \eqref{eq:cs1} yields
$N-\Trace(a)^2\in \cQ$. Thus, by the definition of $H$, $\Trace(a)\in \cQ$.
The desired result now follows since $H$ is a subalgebra of $\T$.
\end{proof}
\begin{proposition}\label{p:poly}
Let $(\cF,\tau)$ be a tracial pair and $X=X^*\in\cF$. The following are equivalent:
\begin{enumerate}[\rm (i)]
	\item $X\succeq0$;
	\item $\tau(XY)\ge0$ for all positive semidefinite contractions $Y\in\cF$;
	\item $\tau(Xp(X)^2)\ge0$ for all $p\in\R[t]$.
\end{enumerate}
\end{proposition}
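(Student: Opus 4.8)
The plan is to establish the cycle (i)$\Rightarrow$(ii)$\Rightarrow$(iii)$\Rightarrow$(i), keeping everything in the real finite von Neumann algebra $\cF$ with its faithful normal tracial state $\tau$. For (i)$\Rightarrow$(ii): if $X\succeq 0$ and $Y$ is a positive semidefinite contraction, then $\tau(XY)=\tau(X^{1/2}YX^{1/2})\ge 0$, since $X^{1/2}YX^{1/2}\succeq 0$ and $\tau$ is a positive functional (the square root $X^{1/2}$ and the cyclicity/positivity of $\tau$ are all available in the finite von Neumann algebra setting). For (ii)$\Rightarrow$(iii): given $p\in\R[t]$, the element $Y=p(X)^2$ is positive semidefinite; it may fail to be a contraction, but $\|p(X)\|^2\le M$ for some $M>0$ (by the continuous functional calculus applied to the bounded self-adjoint operator $X$), so $M^{-1}p(X)^2$ is a positive semidefinite contraction, and (ii) gives $\tau(X\cdot M^{-1}p(X)^2)\ge 0$, i.e.\ $\tau(Xp(X)^2)\ge 0$.

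The substantive direction is (iii)$\Rightarrow$(i), and this is where I expect the main obstacle to lie. Suppose for contradiction that $X\not\succeq 0$. Since $X$ is a self-adjoint element of a von Neumann algebra, it has a spectral measure $E$ on $\sigma(X)\subseteq[-\|X\|,\|X\|]$, and $X\not\succeq 0$ means $E([-\|X\|,-\delta))\ne 0$ for some $\delta>0$; because $\tau$ is faithful, $\tau\big(E([-\|X\|,-\delta))\big)>0$. The idea is to approximate the spectral projection $E([-\|X\|,-\delta))$ — equivalently, an indicator-type function supported on the negative part of the spectrum — by a polynomial in $X$, so that $Xp(X)^2$ concentrates its mass on the negative spectrum and $\tau(Xp(X)^2)<0$. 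Concretely, pick a continuous function $g$ on $[-\|X\|,\|X\|]$ with $0\le g\le 1$, $g\equiv 1$ on $[-\|X\|,-\delta]$ and $\supp g\subseteq[-\|X\|,-\delta/2]$; then $\tau(Xg(X)^2)\le -\tfrac{\delta}{2}\,\tau\big(E([-\|X\|,-\delta))\big)<0$. By the Weierstrass approximation theorem choose $p\in\R[t]$ with $\|g-p\|_{\infty,[-\|X\|,\|X\|]}$ small; then $\|g(X)-p(X)\|$ is small in operator norm, hence $\|Xg(X)^2-Xp(X)^2\|$ is small, and since $\tau$ is norm-continuous ($|\tau(A)|\le\|A\|$) we get $\tau(Xp(X)^2)<0$, contradicting (iii).

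The only delicate point is to confirm that the Borel functional calculus and the faithfulness/normality of $\tau$ suffice to conclude $\tau(E([-\|X\|,-\delta)))>0$ from $X\not\succeq 0$, and that the continuous functional calculus $p\mapsto p(X)$ is a contractive $\star$-homomorphism into $\cF$ with $\|p(X)\|=\sup_{t\in\sigma(X)}|p(t)|$ — all standard facts for self-adjoint operators in a von Neumann algebra, and nothing beyond what was already invoked in Proposition \ref{prop:moment} and the surrounding discussion. Alternatively, one can avoid spectral projections entirely by working with the Borel measure $\mu$ on $\sigma(X)$ defined by $\mu(B)=\tau(E(B))$: statement (iii) becomes $\int t\,p(t)^2\,d\mu(t)\ge 0$ for all polynomials $p$, and a density argument in $L^2(\mu)$ (polynomials are dense since $\sigma(X)$ is compact) upgrades this to $\int t\,h(t)^2\,d\mu(t)\ge 0$ for all continuous $h$, forcing $\mu$ to be supported on $[0,\infty)$, i.e.\ $\sigma(X)\subseteq[0,\infty)$ and $X\succeq 0$. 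I would present whichever of these two packagings is shorter.
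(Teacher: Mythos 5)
Your proof is correct and follows essentially the same route as the paper's: (i)$\Rightarrow$(ii) is immediate from positivity and traciality, (ii)$\Rightarrow$(iii) uses the same normalization of $p(X)^2$ to a contraction, and (iii)$\Rightarrow$(i) reduces to the commutative setting and exploits density of polynomials. The only cosmetic difference is that you work directly with the scalar spectral measure $\mu(B)=\tau(E(B))$ of $X$ (your second packaging is the cleaner of the two), whereas the paper passes through the identification of the abelian von Neumann algebra generated by $X$ with $\bigl(L^\infty(\cX,\mu),\int\cdot\,d\mu\bigr)$; both amount to the same argument.
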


\begin{proof}
(i)$\Rightarrow$(ii) is clear and (ii)$\Rightarrow$(iii) holds since for $p(X)\neq0$,
$$\tau(Xp(X)^2)=\|p(X)\|^2\,\tau\Big(X\left(\|p(X)\|^{-2}p(X)^2\right)\Big)$$
and $\|p(X)\|^{-2}p(X)^2$ is a positive semidefinite contraction. To prove (iii)$\Rightarrow$(i), let $\cF_1\subseteq \cF$ be the weak operator topology closure of the algebra generated by $X$. Then $\cF_1$ is an abelian von Neumann algebra and therefore $(\cF_1,\tau|_{\cF_1}) \cong (L^\infty(\cX,\mu),\int\cdot\,d\mu)$ for some standard measure space $(\cX, \mu)$ by \cite[Theorem III.1.18]{Tak02}. For $f\in L^\infty(\cX,\mu)$ we have $f\succeq0$ if and only if $\int fg^2\,d\mu\ge0$ for all $g\in L^\infty(\cX,\mu)$. If $f$ is the image of $X$ under the above isomorphism, then $\{p(f)\mid p\in\R[t]\}$ is dense in $\cF_1$. Hence $\int fp(f)^2\,d\mu\ge0$ for all univariate polynomials $p$ implies $f\succeq0$, so (iii)$\Rightarrow$(i) holds.
\end{proof}

The following is the cyclic version of the Helton-McCullough theorem \cite{Helton04}. Note that while the constraints in Theorem \ref{thm:psatz} are arbitrary trace polynomials, the objective function needs to be a pure trace polynomial. A direct analog for non-pure trace objective polynomials fails, see Example \ref{ex:unclean} below.

\begin{theorem}
\label{thm:psatz}
Let $\cQ \subseteq \Sym \T$ be an archimedean cyclic quadratic module and $a \in \emph{\skinnyT}$. The following are equivalent:
\begin{enumerate}[\rm (i)]
\item\label{it:j2} $a(\uX)\geq 0$ for all $\uX \in \mathcal{D}_\cQ^{\trvN}$;\vspace{0.25ex}
\item\label{it:j3} $a(\uX)\geq 0$ for all
$\uX \in \mathcal{D}_\cQ^{\II_1}$;\vspace{0.25ex}
\item\label{it:j1} $a+\varepsilon \in \cQ$ for all $\varepsilon>0$.
	\end{enumerate}
\end{theorem}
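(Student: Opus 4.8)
The plan is to follow the same skeleton as the proof of Theorem \ref{thm:nocyc}, but replacing the Kadison--Dubois/commutative spectrum argument with a GNS-type construction adapted to cyclic quadratic modules. First, \ref{it:j2}$\Leftrightarrow$\ref{it:j3} is immediate from Proposition \ref{p:embed}, exactly as before: a tracial pair evaluation embeds into a $\II_1$-factor evaluation, and conversely $\II_1$-factors are tracial pairs. The direction \ref{it:j1}$\Rightarrow$\ref{it:j2} is the ``easy'' soundness direction: if $\uX\in\mathcal D_\cQ^{\cF,\tau}$, then every generator $s\in\cQ$ evaluates to a positive semidefinite operator, and since the defining operations of a cyclic quadratic module ($+$, conjugation $h(\cdot)h^\star$, and $\Trace$, the latter interpreted via $\tau$) all preserve ``$\tau$-evaluates to $\geq 0$'' — here one uses that $\tau$ is a faithful normal tracial \emph{state}, so $\tau$ of a positive operator is $\geq0$ and $\tau(h q h^\star)=\tau(q^{1/2}h^\star h q^{1/2})\geq0$ when $q\succeq0$ — every element of $\cQ$ evaluates to something with nonnegative $\tau$. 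Applying this to $a+\varepsilon$ for all $\varepsilon>0$ and using that $a(\uX)\in\R$ (it is a pure trace polynomial, hence a scalar) gives $a(\uX)\geq0$.

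The substantial direction is \ref{it:j2}$\Rightarrow$\ref{it:j1}. I would argue by contrapositive: suppose $a+\varepsilon\notin\cQ$ for some $\varepsilon>0$. Since $\cQ$ is archimedean, $\cQ\cap\skinnyT$ — or rather the relevant cone of pure trace polynomials — together with a separation/Hahn--Banach argument in $\skinnyT$ (a real vector space, with $\cQ$ a convex cone not containing $a+\varepsilon$ in its algebraic interior, using archimedeanity to ensure the cone has nonempty interior in the appropriate norm, as in Prop.\ \ref{prop:KD}'s proof scheme) produces a nonzero tracial linear functional $L:\T\to\R$ with $L(\cQ)\subseteq\R_{\geq0}$, $L(1)=1$, and $L(a)<0$ (one normalizes using $L(1)>0$, which holds because $\cQ$ is archimedean and $1\in\cQ$ with $N-1\in\cQ$-type bounds force $L(1)>0$). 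Now the key step: the GNS construction. From $L$ symmetric, unital, tracial, and nonnegative on $\cQ\supseteq\{hh^\star\}$, build the Hilbert space $H=\overline{\T/\{f:L(f^\star f)=0\}}$; left multiplication gives a $\star$-representation $\pi$ of $\T$, and because $L$ is tracial the vector state $\langle\pi(\cdot)\hat1,\hat1\rangle$ is a trace on $\pi(\T)''$; archimedeanity of $\cQ$ forces the images $\pi(x_j)$ to be bounded self-adjoint operators (from $N-\sum x_j^2\in\cQ$, via Prop.\ \ref{prop:cyclicqmarch} and boundedness of the $\star$-algebra $H$ of Prop.\ \ref{prop:cyclicqmarch}'s proof). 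Taking $\cF=\pi(\T)''$ with the extended normal tracial state $\tau$, and noting one may need to pass to the GNS \emph{faithful part} (quotient by the kernel of $\tau$, using that a von Neumann algebra with a normal tracial state that is faithful on a weakly dense subalgebra can be cut down) so that $\tau$ is faithful, one obtains a tracial pair. For each $s\in\cQ$ one has $\langle\pi(s)\xi,\xi\rangle=L(h_s^\star s h_s)\geq0$ over a dense set of $\xi=\pi(h_s)\hat1$, whence $s(\uX)=\pi(s)\succeq0$, so $\uX=(\pi(x_1),\dots,\pi(x_n))\in\mathcal D_\cQ^{\cF,\tau}$; and $a(\uX)=\tau(a(\uX))=L(a)<0$, contradicting \ref{it:j2}. (The point where it is essential that $a\in\skinnyT$ is that $a(\uX)$ is then a genuine scalar equal to $L(a)$; for a non-pure objective $a(\uX)$ would only be an operator, and $\langle a(\uX)\hat1,\hat1\rangle=L(a)<0$ would not contradict $a(\uX)\succeq0$ — this is precisely what Example \ref{ex:unclean} exploits.)

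The main obstacle I anticipate is \textbf{the faithfulness of the trace on the resulting von Neumann algebra}. The raw GNS construction only guarantees a \emph{normal tracial state}, not a \emph{faithful} one, whereas the definition of a tracial pair (Def.\ \ref{def:DSII}) demands faithfulness. Resolving this requires care: one cuts $\cF=\pi(\T)''$ by the central support projection of the GNS vector (equivalently, quotients by the $\sigma$-weakly closed ideal $\{T\in\cF:\tau(T^\star T)=0\}$), and must check that (a) the images of the $x_j$ and of each $s\in\cQ$ behave well under this cut — positivity of $s(\uX)$ is preserved since it passes to a quotient/corner — and (b) the scalar $a(\uX)$ is unchanged, which holds because $\tau$ restricted to the original pure-trace-polynomial values equals $L$ and the cut preserves $\tau$-values. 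A secondary technical point is justifying the separation argument over the infinite-dimensional space $\skinnyT$ (or $\T$): one must work in the archimedean seminorm/order-unit topology as in the proof of the Kadison--Dubois theorem, ensuring $\cQ$ contains an order-unit neighborhood of $1$ so that Eidelheit--Hahn--Banach applies and the functional is automatically continuous and tracial. Once these two points are handled, the rest is a routine adaptation of the trace-free Helton--McCullough argument.
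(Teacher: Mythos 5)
Your treatment of \ref{it:j2}$\Leftrightarrow$\ref{it:j3} and of \ref{it:j1}$\Rightarrow$\ref{it:j2} is fine, and your separation-plus-GNS strategy is exactly the one the paper carries out in Appendix~\ref{app}. However, there is a genuine gap at the decisive step, and it is not the faithfulness issue you single out as the main obstacle. The Eidelheit--Kakutani separation only yields a \emph{linear} functional $L$ on $\Sym\T$ that is nonnegative on $\cQ$ and tracial (traciality is itself obtained by separating $a+\varepsilon$ from the enlarged cone $\cQ+U$ with $U=\{p\in\Sym\T:\Trace(p)=0\}$, using that $a$ is pure trace and $\Trace(\cQ)\subseteq\cQ$ --- a point you gloss over); it is \emph{not} multiplicative on the commutative subring $\skinnyT$. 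In the GNS representation $\pi$ the images $\pi(\Trace(w))$ of the formal trace symbols are therefore central operators that need not be scalars, and your two key identities $s(\uX)=\pi(s)$ and $a(\uX)=L(a)$ both fail in general: $s(\uX)$ is defined by substituting the \emph{numbers} $\tau(w(\uX))$ for the symbols $\Trace(w)$, while $\pi(s)$ substitutes the operators $\pi(\Trace(w))$; likewise, for $a=\Trace(x_1)\Trace(x_2)$ one has $a(\uX)=\tau(X_1)\tau(X_2)$, whereas $L(\Trace(x_1)\Trace(x_2))$ is a joint moment that need not factor. Without knowing $\pi(\skinnyT)=\R$ you can conclude neither $\uX\in\mathcal{D}_\cQ^{\cF,\tau}$ nor $a(\uX)<0$. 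The paper closes this gap by choosing $L$ to be an \emph{extreme point} of the weak*-compact convex set $\cC$ of separating tracial functionals (Banach--Alaoglu plus Krein--Milman) and showing that a nontrivial projection in the weak closure of $\pi(\skinnyT)$ would split $L$ into a proper convex combination of elements of $\cC$. This extremality argument is the cyclic substitute for the Kadison--Dubois theorem, which in Theorem~\ref{thm:nocyc} hands you a ring \emph{homomorphism} $\varphi:\skinnyT\to\R$ for free; it is the idea your proposal is missing. Once $\pi(\skinnyT)=\R$ is secured, faithfulness is not a real obstacle: the GNS null space is a left ideal annihilated by $L$, so $\tau$ is already faithful on $\pi(\T)$ and extends to a faithful normal tracial state on the weak closure --- no cutting by a central support is needed.

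Separately, note that the paper's main-text proof of \ref{it:j3}$\Rightarrow$\ref{it:j1} avoids all of this analysis: Proposition~\ref{p:poly} gives $\mathcal{D}_\cQ^{\II_1}=\mathcal{D}_{\Trace(\cQ)[N]}^{\II_1}$, an explicit algebraic computation shows $\cM(\Trace(\cQ)(N))\subseteq\cQ$, and Theorem~\ref{thm:nocyc} then finishes the argument. That reduction is shorter and is worth knowing even if you prefer the self-contained GNS route.
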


\begin{proof}
Implications \ref{it:j1}$\Rightarrow$\ref{it:j2}$\Rightarrow$\ref{it:j3} are straightforward, so consider \ref{it:j3}$\Rightarrow$\ref{it:j1}. By Proposition \ref{p:poly} we have $\mathcal{D}_\cQ^{\II_1}=\mathcal{D}_{\Trace(\cQ)}^{\II_1}$. Since $\cQ$ is archimedean, there exists $N>0$ such that $N-x_j^2\in \cQ$ for $j=1,\dots,n$. Consequently
$$\mathcal{D}_\cQ^{\II_1}=\mathcal{D}_{\cQ[N]}^{\II_1}=\mathcal{D}_{\Trace(\cQ)[N]}^{\II_1}.$$
For every $j,k$ we have
$$N^k-\Trace(x_j^{2k})=\Trace\left(\left(\sum_{i=0}^{k-1}N^{k-1-i}x_j^{2i}\right)(N-x_j^2)\right) \in \cQ.$$
Consequently $\Trace(\cQ)(N)\subseteq \cQ$ and thus $\cM(\Trace(\cQ)(N)) \subseteq \cQ$. Then $a+\varepsilon \in \cQ$ for all $\varepsilon>0$ by Theorem \ref{thm:nocyc}.
\end{proof}

For the reader unfamiliar with real algebraic geometry and noncommutative moment problems, we present a self-contained proof of Theorem \ref{thm:psatz} relying only on convex separation results and basic properties of von Neumann algebras in Appendix \ref{app}.

\revise{
Having reached this point, it is tempting to think that membership in $\cQ$ is also enough to characterize all positive tracial polynomials, not just the pure ones. As it turns out, this intuition is wrong: in the following lines, we provide a simple counterexample.}
Given a set of symmetric polynomials $S\subseteq\RX$ let $\cM(S)$ denote the (free) quadratic module generated by $S$ \cite[Section 1.4]{burgdorf16}. Hence $\cM(S)$ is the smallest set that contains $S\cup\{1\}$, is closed under addition, and $f\in\cM(S)$ implies $hfh^\star\in\cM(S)$ for every $h\in\RX$.

\begin{lemma}\label{l:split}
Let $S_1\subseteq\skinnyT$ and $S_2\subseteq \RX$. If $s(0)\ge0$ for all $s\in S_1$, then
$$\cQ(S_1\cup S_2)\cap\RX = \cM(S_2).$$
\end{lemma}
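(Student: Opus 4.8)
The plan is to prove the two inclusions separately, with the nontrivial direction being $\cQ(S_1\cup S_2)\cap\RX \subseteq \cM(S_2)$.

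For the easy inclusion $\cM(S_2)\subseteq \cQ(S_1\cup S_2)\cap\RX$: since $S_2\subseteq\RX$, every generator of $\cM(S_2)$ lies in $\RX$, and the defining closure operations of $\cM(S_2)$ (containing $1$, closure under addition, and $f\mapsto hfh^\star$ for $h\in\RX$) are all special cases of the closure operations defining $\cQ(S_1\cup S_2)$, using that $\RX\subseteq\T$. Since also $\RX$ is closed under these operations, the whole of $\cM(S_2)$ stays inside $\RX$, giving the inclusion.

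For the hard inclusion, I would use the structural description of $\cQ(S):=\cQ(S_1\cup S_2)$ from Lemma \ref{lemma:cycgen}(2): every element is a sum of terms of the form $q_1$, $h_1 s_1 h_1^\star$, and $\Trace(h_2 s_2 h_2^\star) q_2$, where $h_i\in\T$, $q_i\in\cQ(\emptyset)$, and $s_i\in S_1\cup S_2$. The key idea is to evaluate at a generic tuple: define, for each element $f\in\T$, its image $\pi(f)\in\RX$ by substituting $\Trace(w)\mapsto w(0)$ — i.e. applying the homomorphism $\skinnyT\to\R$, $\Trace(w)\mapsto \delta_{w,1}$ (the value of $\Trace(w)$ at the origin $\uX=0$, where only the empty word contributes) — and extending $\skinnyT$-linearly to $\T=\skinnyT\mx$. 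Actually more robustly: a trace polynomial restricted to $\RX$ must already be purely free, so I want to show that an element of $\cQ(S)$ lying in $\RX$ must, term by term, collapse to something in $\cM(S_2)$. The clean route is: let $p\in\cQ(S)\cap\RX$. Write $p$ as a sum of the three types of terms. Now apply the $\R$-algebra homomorphism $\T\to\RX$ obtained by sending each trace symbol $\Trace(w)$ to the scalar $\Trace(w)(0)=0$ unless $w=1$ in which case $\Trace(1)=1$; this is a well-defined $\star$-homomorphism fixing $\RX$ pointwise (it amounts to "tensoring with the evaluation at $\uX=0$ in the trace slots"). Under this map, $q\in\cQ(\emptyset)$ maps into $\cM(\emptyset)=\SigmaX$ (sums of hermitian squares), since $\Trace(h_ih_i^\star)\mapsto$ a nonnegative scalar, in fact a sum of squares of scalars, and $h_0h_0^\star\mapsto$ a hermitian square in $\RX$; the term $h_1 s_1 h_1^\star$ with $s_1\in S_1$ maps to $s_1(0)\cdot(\text{hermitian square})$, which is in $\cM(\emptyset)\subseteq\cM(S_2)$ by the hypothesis $s_1(0)\ge0$ (here $s_1\in\skinnyT$ so its image is the scalar $s_1(0)$); the term $h_1 s_1 h_1^\star$ with $s_1\in S_2$ maps to $\bar h_1 s_1 \bar h_1^\star\in\cM(S_2)$; and $\Trace(h_2 s_2 h_2^\star)q_2$ maps to a nonnegative scalar (or a scalar times $s_2(0)\ge0$) times an element of $\SigmaX$, hence lies in $\cM(S_2)$. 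Since $p\in\RX$, the homomorphism fixes $p$, so $p$ equals this image, which is a sum of elements of $\cM(S_2)$, hence $p\in\cM(S_2)$.

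The main obstacle is verifying that the evaluation-at-$0$ map in the trace slots is genuinely a well-defined $\star$-algebra homomorphism $\T\to\RX$ fixing $\RX$ — one must check compatibility with $\star$-cyclic equivalence of trace symbols (automatic, since $\Trace(w)(0)$ depends only on whether $w=1$, which is a cyclic invariant) and that it sends each of the three generator types into $\cM(S_2)$, which hinges precisely on the hypothesis $s(0)\ge0$ for $s\in S_1$ to control the pure-trace constraints. Everything else is bookkeeping via Lemma \ref{lemma:cycgen}.
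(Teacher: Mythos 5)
Your proof is correct and follows essentially the same route as the paper: the paper defines $\pi_1:\skinnyT\to\R$, $a\mapsto a(0)$, extends it to the homomorphism $\pi=\pi_1\otimes\id_{\RX}:\T\to\RX$, and concludes from $\pi(\cQ(S_1\cup S_2))=\cM(S_2)$ (using $\pi_1(S_1)\subseteq\R_{\ge0}$) together with $\pi|_{\RX}=\id_{\RX}$. Your term-by-term verification via Lemma~\ref{lemma:cycgen} just spells out the details the paper leaves implicit.
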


\begin{proof}
Let $\pi_1:\skinnyT\to\R$ be given by $\pi_1(a)=a(0)$, and consider the homomorphism $\pi=\pi_1\otimes\id_{\RX}:\T\to\RX$. Then $\pi(\cQ(S_1\cup S_2))= \cM(S_2)$ because $\pi_1(S_1)\subseteq\R_{\ge0}$ and $\pi|_{\RX}=\id_{\RX}$. So the statement follows.
\end{proof}

\begin{example}\label{ex:unclean}
Let $n=1$. Let $\cQ$ be the archimedean cyclic quadratic module in $\Sym\T$ generated by
$$\{1-x_1^2\}\cup \{\Trace(x_1p^2(x_1))\mid p\in\R[t] \}.$$
By Proposition \ref{p:poly}, $X_1\in \mathcal{D}_\cQ^{\cF,\tau}$ implies $X_1\succeq0$ for any tracial pair $(\cF,\tau)$. On the other hand, if $\varepsilon\in[0,1)$ then $x_1+\varepsilon \notin \cM (\{1-x_1^2\})$ and therefore $x_1+\varepsilon \notin \cQ$ by Lemma \ref{l:split}.
\end{example}
\revise{
We emphasize that Example  \ref{ex:unclean} shows that Theorem \ref{thm:psatz} does not
hold in general for non-pure objectives.}
To mitigate the absence of a non-pure analog of Theorem \ref{thm:psatz}, we require the following technical lemma.

\begin{lemma}\label{l:univar}
Let $\varepsilon>0$ and $n=\lceil 1/\varepsilon \rceil$. If $s_2=\frac{\varepsilon}{2}(t-1)^{2n}$ and $s_1=s_2+t$, then
\begin{enumerate}[\rm (a)]
	\item $s_1$ is positive on $\R$, and thus a sum of (two) squares in $\R[t]$;
	\item $\frac{\varepsilon}{2}-s_2$ is nonnegative on $[0,1]$, and thus an element of $\cM(\{t,1-t\})$.
\end{enumerate}
\end{lemma}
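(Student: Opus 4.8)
The plan is to verify both parts by elementary univariate analysis, since $s_1$ and $s_2$ are concrete polynomials in one variable.

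For part (a), I would first observe that $s_1(t) = s_2(t) + t = \frac{\varepsilon}{2}(t-1)^{2n} + t$. On $t \geq 0$ this is clearly positive (sum of two nonnegative terms, and $s_2(t) = 0$ only at $t=1$ where the $t$ term contributes $1 > 0$). The only work is the region $t < 0$: there we need $\frac{\varepsilon}{2}(t-1)^{2n} \geq -t = |t|$, i.e. $\frac{\varepsilon}{2}(1+|t|)^{2n} \geq |t|$ for $|t| > 0$. Writing $u = |t| > 0$, it suffices that $\frac{\varepsilon}{2}(1+u)^{2n} \geq u$; since $n = \lceil 1/\varepsilon\rceil \geq 1$ we have $(1+u)^{2n} \geq (1+u)^2 \geq 1 + 2u > 2u \geq \varepsilon u \cdot \tfrac{2}{\varepsilon}$... more cleanly, $(1+u)^{2n} \geq 1 + 2nu \geq 2nu$ by the binomial theorem, and $\frac{\varepsilon}{2}\cdot 2nu = \varepsilon n u \geq u$ because $\varepsilon n = \varepsilon \lceil 1/\varepsilon \rceil \geq 1$. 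Hence $s_1 > 0$ on all of $\R$, so $s_1$ is a strictly positive univariate polynomial and therefore a sum of two squares in $\R[t]$ (every nonnegative univariate real polynomial is a sum of two squares).

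For part (b), I would show $\frac{\varepsilon}{2} - s_2(t) = \frac{\varepsilon}{2}\bigl(1 - (t-1)^{2n}\bigr) \geq 0$ on $[0,1]$. For $t \in [0,1]$ we have $t - 1 \in [-1,0]$, so $(t-1)^{2n} = |t-1|^{2n} \leq 1$, giving $1 - (t-1)^{2n} \geq 0$ and hence $\frac{\varepsilon}{2} - s_2 \geq 0$ on $[0,1]$. It remains to conclude membership in $\cM(\{t, 1-t\})$. This follows from the one-dimensional archimedean Positivstellensatz (Putinar), but more concretely one can give an explicit certificate: the quadratic module generated by $t$ and $1-t$ on $[0,1]$ is archimedean, and any polynomial strictly positive on $[0,1]$ lies in it; since here we only have nonnegativity (equality at $t=1$), I would instead write an explicit decomposition, e.g. factor $1 - (t-1)^{2n} = \bigl(1-(t-1)^2\bigr)\sum_{i=0}^{n-1}(t-1)^{2i} = t(2-t)\sum_{i=0}^{n-1}(t-1)^{2i}$, and note $t(2-t) = t\cdot 1 + t(1-t) \in \cM(\{t,1-t\})$ while $\sum_{i=0}^{n-1}(t-1)^{2i}$ is a sum of squares, so the product, scaled by $\frac{\varepsilon}{2} > 0$, lies in $\cM(\{t,1-t\})$.

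The main obstacle, such as it is, is purely bookkeeping: getting the bound $\varepsilon \lceil 1/\varepsilon\rceil \geq 1$ used in part (a) and the factorization $1 - (t-1)^{2n} = t(2-t)\sum_{i=0}^{n-1}(t-1)^{2i}$ used to produce an explicit quadratic-module certificate in part (b); neither step is conceptually difficult, and the choice $n = \lceil 1/\varepsilon\rceil$ is exactly what makes the inequality in (a) work.
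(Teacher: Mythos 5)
Your proof is correct. Part (a) is essentially the paper's argument in different clothing: the paper bounds $s_2$ on $(-\infty,0]$ via the derivative estimate $s_2'(\alpha)=\varepsilon n(\alpha-1)^{2n-1}\le -\varepsilon n\le -1$, while you use the binomial bound $(1+u)^{2n}\ge 1+2nu$; both hinge on exactly the same inequality $\varepsilon\lceil 1/\varepsilon\rceil\ge 1$, so this is a cosmetic difference. Part (b) is where you genuinely diverge: the paper establishes nonnegativity on $[0,1]$ the same way you do, but then invokes Fekete's representation theorem (polynomials nonnegative on $[0,1]$ lie in the preordering generated by $t$, $1-t$, $t(1-t)$, combined with the identity $t(1-t)=(1-t)^2t+t^2(1-t)$) to conclude membership in $\cM(\{t,1-t\})$ --- a necessary detour since $\frac{\varepsilon}{2}-s_2$ vanishes at $t=0$ and Putinar-type strict-positivity results do not apply. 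Your factorization $1-(t-1)^{2n}=t(2-t)\sum_{i=0}^{n-1}(t-1)^{2i}$ with $t(2-t)=t+t(1-t)$ replaces that citation by a fully explicit certificate; unwinding it gives $\frac{\varepsilon}{2}-s_2=\sigma_1 t+\sigma_2(1-t)$ with $\sigma_1=\frac{\varepsilon}{2}\big(1+(1-t)^2\big)\sum_i(t-1)^{2i}$ and $\sigma_2=\frac{\varepsilon}{2}t^2\sum_i(t-1)^{2i}$ sums of squares, which is precisely the form $s_3+s_4t+s_5(1-t)$ that the proof of Corollary~\ref{cor:linslackpsatz} extracts from this lemma. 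So your route is self-contained, avoids the external reference, and is arguably better suited to the downstream use; the paper's route is shorter on the page but less constructive.
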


\begin{proof}
(a) Clearly $s_1(\alpha)>0$ for $\alpha\ge0$. Since $\frac{{\rm d}s_2}{{\rm d} \revise{\alpha}}(\alpha)=\varepsilon n (\alpha-1)^{2n-1}<-1$ for every $\alpha\le 0$, we also have $s_2(\alpha)>-\alpha$ for $\alpha\le0$. So $s_1$ is positive on $\R$ and thus a sum of two squares by an easy application of the fundamental theorem of algebra (see e.g. \cite[Proposition 1.2.1]{marshallbook}).

(b) $\frac{\varepsilon}{2}-s_2$ is nonnegative on $[0,1]$ because $(\alpha-1)^{2n}\le 1$ for $\alpha\in[0,1]$. Since $t(1-t)=(1-t)^2t+t^2(1-t) \in \cM(\{t,1-t\})$, a result of Fekete \cite[Problem VI.46]{PS76} (see \cite[Proposition 2.7.3]{marshallbook} for a modern treatment) implies $\frac{\varepsilon}{2}-s_2\in \cM(\{t,1-t\})$. 
\end{proof}

Although the tracial version of the Helton-McCullough Positivstellensatz \cite{Helton04} fails, we have the following positivity certificate for non-pure trace polynomials.

\begin{corollary}\label{cor:linslackpsatz}
Let $\cQ \subseteq \Sym \T$ be an archimedean cyclic quadratic module and $a \in \Sym\T$. The following are equivalent:
\begin{enumerate}[\rm (i)]
\item\label{it:two1} $a(\uX)\succeq 0$ for all $\uX \in \mathcal{D}_\cQ^{\trvN}$;\vspace{0.25ex}
\item\label{it:two2} $a(\uX)\succeq 0$ for all 
$\uX \in \mathcal{D}_\cQ^{\II_1}$;\vspace{0.25ex}
\item\label{it:three1} for every $\varepsilon>0$, there exist sums of (two) squares $s_1,s_2\in\R[t]$ such that
\begin{equation}\label{e:woy}
a=s_1(a)-s_2(a),\qquad \varepsilon-\Trace(s_2(a))\in \cQ;
\end{equation}
\item\label{it:one1} for every $\varepsilon>0$, there exist sums of (two) squares $s_1,s_2\in\R[t]$ and $q\in \cQ$ such that
\begin{equation}\label{e:wy}
\Trace(ay)+\varepsilon =\Trace(s_1(a)y+s_2(a)(1-y))+q
\end{equation}
where $y$ is an auxiliary symmetric free variable. That is, $\Trace(ay)+\varepsilon$ is in the cyclic quadratic module generated by $\cQ,y,1-y$ (inside the free trace ring generated by $\ux,y$).
\end{enumerate}
\end{corollary}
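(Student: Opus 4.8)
The plan is to prove the cycle \ref{it:two1}$\Leftrightarrow$\ref{it:two2}, \ref{it:two1}$\Rightarrow$\ref{it:three1}$\Rightarrow$\ref{it:one1}$\Rightarrow$\ref{it:two1}, reserving essentially all the work for \ref{it:two1}$\Rightarrow$\ref{it:three1}. Three of these links are short. For \ref{it:two1}$\Leftrightarrow$\ref{it:two2}: the inclusion $\mathcal{D}_\cQ^{\II_1}\subseteq\mathcal{D}_\cQ^{\trvN}$ gives \ref{it:two1}$\Rightarrow$\ref{it:two2}, and Proposition~\ref{p:embed} — embed any tracial pair into a $\II_1$-factor, which preserves the inequalities $s(\uX)\succeq0$ and the positivity of $a(\uX)$ — gives \ref{it:two2}$\Rightarrow$\ref{it:two1}. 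For \ref{it:three1}$\Rightarrow$\ref{it:one1}: write $ay=s_1(a)\,y+s_2(a)(1-y)-s_2(a)$, apply the universal trace, and use cyclicity of $\Trace$ together with $h^\star\cQ h\subseteq\cQ$ to see that $\Trace\bigl(s_1(a)y+s_2(a)(1-y)\bigr)$ (with $s_1,s_2$ written as sums of squares of univariate polynomials evaluated at $a$) lies in the cyclic quadratic module generated by $\cQ,y,1-y$; the term $q:=\varepsilon-\Trace(s_2(a))\in\cQ$ supplied by \ref{it:three1} then produces exactly \eqref{e:wy}. For \ref{it:one1}$\Rightarrow$\ref{it:two1}: every element of the cyclic quadratic module generated by $\cQ,y,1-y$ evaluates, at a point $(\uX,Y)$ with $\uX\in\mathcal{D}_\cQ^{\cF,\tau}$ and $Y\in\cF$ a positive semidefinite contraction (so $Y\succeq0$ and $1-Y\succeq0$), to a positive semidefinite operator, its $\Trace$ to a nonnegative real; hence $\tau(a(\uX)Y)+\varepsilon\ge0$ for every $\varepsilon>0$ and every such $Y$, so $\tau(a(\uX)Y)\ge0$ for all positive semidefinite contractions $Y$, and Proposition~\ref{p:poly} forces $a(\uX)\succeq0$.

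The heart of the matter is \ref{it:two1}$\Rightarrow$\ref{it:three1}. Since $\cQ$ is archimedean, fix $M>0$ with $M-a\in\cQ$ and set $b:=M^{-1}a$; then $1-b=M^{-1}(M-a)\in\cQ$, while \ref{it:two1} gives $0\preceq b(\uX)\preceq 1$ for all $\uX\in\mathcal{D}_\cQ^{\trvN}$. Given a prescribed slack, pick a small $\varepsilon>0$ and apply Lemma~\ref{l:univar} to get sums of (two) squares $s_1,s_2\in\R[t]$ with $s_1-s_2=t$ and $\tfrac{\varepsilon}{2}-s_2=\sigma_0+\sigma_1 t+\sigma_2(1-t)$ for sums of squares $\sigma_0,\sigma_1,\sigma_2\in\R[t]$. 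Substituting $t\mapsto b$ in the free trace ring and applying the universal trace yields $\tfrac{\varepsilon}{2}-\Trace(s_2(b))=\Trace(\sigma_0(b))+\Trace(\sigma_1(b)\,b)+\Trace(\sigma_2(b)(1-b))$. The first summand is a sum of terms $\Trace(hh^\star)\in\cQ(\emptyset)\subseteq\cQ$ by Lemma~\ref{lemma:cycgen}; the third, using cyclicity of $\Trace$, $1-b\in\cQ$, $h^\star\cQ h\subseteq\cQ$ and $\Trace(\cQ)\subseteq\cQ$, is a sum of terms $\Trace\bigl(h(1-b)h^\star\bigr)\in\cQ$.

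The remaining point, and the one I expect to require the most care, is the cross term $\Trace(\sigma_1(b)\,b)$: because $b\notin\cQ$ in general it need not lie in $\cQ$ — indeed Example~\ref{ex:unclean} shows that membership in $\cQ$ cannot characterize positivity of non-pure trace polynomials, which is exactly why the two-slack form \eqref{e:woy} is forced. The way out is that $\Trace(\sigma_1(b)\,b)$ is a \emph{pure} trace polynomial that is nonnegative on $\mathcal{D}_\cQ^{\trvN}$ — for $\uX$ there, $b(\uX)$ and $\sigma_1(b(\uX))$ are commuting positive semidefinite operators, hence $b(\uX)\sigma_1(b(\uX))\succeq0$ and $\tau$ of it is $\ge0$ — so Theorem~\ref{thm:psatz} applies and gives $\Trace(\sigma_1(b)\,b)+\delta\in\cQ$ for every $\delta>0$. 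Summing the three contributions, $\tfrac{\varepsilon}{2}+\delta-\Trace(s_2(b))\in\cQ$. Finally I would undo the scaling by setting $\tilde s_i(t):=M\,s_i(t/M)$ — still sums of squares, with $\tilde s_1(a)-\tilde s_2(a)=Mb=a$ and $\Trace(\tilde s_2(a))=M\,\Trace(s_2(b))$ — to obtain $M\bigl(\tfrac{\varepsilon}{2}+\delta\bigr)-\Trace(\tilde s_2(a))\in\cQ$; since $\cQ$ is closed under multiplication by nonnegative scalars and contains all nonnegative reals, choosing $\varepsilon,\delta$ small enough makes $M(\tfrac{\varepsilon}{2}+\delta)$ less than the prescribed slack, which yields \eqref{e:woy} after renaming $\tilde s_i$ to $s_i$. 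The leftover verifications — the commuting functional calculus for the cross term and the routine bookkeeping of the rescaling — present no difficulty.
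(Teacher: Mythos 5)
Your proposal is correct and follows essentially the same route as the paper: the same cycle of implications, the same use of Lemma~\ref{l:univar} after rescaling so that $1-a\in\cQ$, and the same treatment of the cross term $\Trace(\sigma_1(b)\,b)$ via the pure-trace Positivstellensatz (Theorem~\ref{thm:psatz}, with Proposition~\ref{p:poly} supplying the nonnegativity on $\mathcal{D}_\cQ^{\trvN}$). The only cosmetic difference is that you carry the scaling factor and the auxiliary slack $\delta$ explicitly, whereas the paper absorbs both by a ``without loss of generality'' normalization and by taking the slack equal to $\varepsilon/2$ at once.
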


\begin{proof}
\ref{it:two2}$\Leftrightarrow$\ref{it:two1} holds by Proposition \ref{p:embed}.
\revise{To prove \ref{it:three1}$\Rightarrow$\ref{it:one1}, note that \eqref{e:woy} implies (after multiplication by $y$ and taking the trace) that $\Trace(ay)+\varepsilon =\Trace(s_1(a)y + s_2(a)(1-y)) + \varepsilon -  \Trace(s_2(a))$. The implication follows by taking $q=\varepsilon-\Trace(s_2(a))\in \cQ$.}
\ref{it:one1}$\Rightarrow$\ref{it:two1} Let $(\cF,\tau)$ be a tracial pair. If $\Trace(ay)+\varepsilon$ belongs to the cyclic quadratic module generated by $\cQ,y,1-y$ for every $\varepsilon>0$, then $\Trace(a(\uX)Y)\ge0$ for all $\uX \in \mathcal{D}_\cQ^{\cF,\tau}$ and positive semidefinite contractions $Y\in\cF$. Therefore $a(\uX)\succeq0$ for all $\uX \in \mathcal{D}_\cQ^{\cF,\tau}$ by Proposition \ref{p:poly}.

\ref{it:two1}$\Rightarrow$\ref{it:three1} Since $\cQ$ is archimedean, there exists $N>0$ such that $N-a\in \cQ$. After rescaling $a$ we can without loss of generality assume that $1-a\in \cQ$. Suppose that \ref{it:two1} holds. Given an arbitrary $\varepsilon>0$ let $s_1,s_2\in\R[t]$ be sums of squares as in Lemma \ref{l:univar}. Then there are sums of squares $s_3,s_4,s_5\in\R[t]$ such that
$$s_1-s_2=t,\qquad \tfrac{\varepsilon}{2}-s_2 =s_3+s_4t+s_5(1-t).$$
By Proposition \ref{p:poly} and Theorem \ref{thm:psatz} we have $\Trace(s_4(a)a)+\tfrac{\varepsilon}{2}\in \cQ$.
Hence
$$\big(s_3(a)+\Trace(s_4(a)a)+\tfrac{\varepsilon}{2}\big)+s_5(a)(1-a)\in \cQ$$
and therefore
\[a=s_1(a)-s_2(a),\qquad \varepsilon-\Trace(s_2(a))\in \cQ. \qedhere\]
\end{proof}

\section{SDP hierarchy for trace optimization}
\label{sec:hierarchy}

In this section we apply Theorem \ref{thm:nocyc} to optimization of pure trace objective functions subject to (pure) trace constraints and a norm  boundedness condition. Doing so, we obtain a converging hierarchy of SDP relaxations in Section \ref{sec:pure_constrained}. 
When flatness occurs in this hierarchy, one can extract a finite-dimensional minimizer as shown in Section \ref{sec:gns}. 
Finally, we apply Proposition \ref{p:poly} to handle the more general case of trace polynomials subject to trace constraints and a norm boundedness condition in Section \ref{sec:constrained}.

We define the set of \emph{tracial words} (abbreviated as $\fatT$-words) by
$\{\prod_i\Trace(u_i) v \mid u_i,v \in \mx \}$,
which is a subset of $\T$.
The set of \emph{pure trace words} (abbreviated as $\skinnyT$-words) is the subset of $\fatT$-words belonging to $\skinnyT$.
For instance,  $\Trace(x_1)^2$ is a  $\skinnyT$-word and $\Trace(x_1) x_1$ is a $\fatT$-word.
For $u_i,v \in \mx$, we define the {\em tracial degree} of $\prod_i\Trace(u_i) v$ as the sum of the degrees of the $u_i$ and the degree of $v$.
The tracial degree of a trace polynomial $f \in \fatT$ is the length of the longest tracial word involved in $f$ up to cyclic equivalence.
Let us denote by $\W^{\fatT}_d$ (resp. $\W^{\skinnyT}_d$) the vector of all $\fatT$-words (resp. \revise{$\skinnyT$}-words) \revise{of tracial degree at most $d$} w.r.t.~to the \revise{degree} lexicographic order.
Finally, let $\TX_d$ (resp. $\skinnyT_d$) denote the span of entries of $\W^{\fatT}_d$ (resp. $\W^{\skinnyT}_d$) in $\TX$ (resp. $\skinnyT$), and let $\bsigma^{\fatT}(n,d)$ (resp. $\bsigma^{\skinnyT}(n,d)$) the dimension of $\TX_d$ (resp. $\skinnyT_d$), that is, the length of $\W^{\fatT}_d$ (resp. $\W^{\skinnyT}_d$).

\revise{Precise values of $\bsigma^{\skinnyT}(n,d)\le\bsigma^{\fatT}(n,d)$ are related to bracelet counting in combinatorics. To get a crude estimate, notice that the number of tracial words of degree $d$ is at least $n^d$ and at most $n^d\cdot 2^d$. Thus
$$\frac{n^{d+1}-1}{n-1}\le\bsigma^{\fatT}(n,d)\le \frac{(2n)^{d+1}-1}{2n-1}.$$
}

We introduce the notion of trace Hankel and (pure) trace localizing matrices, which can be viewed as tracial analogs of the noncommutative localizing and Hankel matrices (see e.g. \cite[Lemma~1.44]{burgdorf16}).
Given $s \in \fatT$, let us denote $d_s := \lceil \deg s / 2 \rceil$. To $s$ and a linear functional $L : \skinnyT_{2 d} \to \R$, one associates the following three matrices:
\begin{enumerate}[(a)]
\item the {\em tracial Hankel matrix} $\M_d^{\fatT}(L)$ is the symmetric matrix of size $\bsigma^{\fatT} (n,d)$,  indexed by $\fatT$-words $u,v \in \TX_{d}$, with $(\M_d^{\fatT}(L) )_{u,v} = L (\Trace(u^\star v))$;
\item if $s \in \skinnyT$, then the {\em pure trace localizing matrix} $\M^{\skinnyT}_{d -  d_s}(s \, L)$ is the symmetric matrix of size $\bsigma^{\skinnyT} (n,d - d_s)$, indexed by $\skinnyT$-words  $u,v \in \skinnyT_{d - d_s}$, with $(\M_{d - d_s}^{\skinnyT}(s \, L))_{u,v} = L (u v s)$;
\item the {\em trace localizing matrix} $\M^{\fatT}_{d -  d_s}(s \, L)$ is the symmetric matrix of size $\bsigma^{\fatT} (n,d - d_s)$, indexed by $\fatT$-words  $u,v \in \fatT_{d - d_s}$, with $(\M_{d - d_s}^{\fatT}(s \, L))_{u,v} = L ( \Trace(u^\star s v))$.
\end{enumerate}
\begin{definition}
\label{def:hankel_condition}
A matrix $\M$ indexed by $\fatT$-words of degree  $\leq d$ satisfies the tracial Hankel condition if and only if 
\begin{align}
\label{eq:hankel_condition}
\M_{u,v} = \M_{w,z} \text{ whenever } \Trace(u^\star v) = \Trace (w^\star z) \,.
\end{align}
\end{definition}
\begin{remark}
\label{rk:Hankelbij}
Linear functionals on $\skinnyT_{2 d}$ and matrices from $\Sbb_{\bsigma^{\fatT}(n,d)}$ satisfying the tracial Hankel condition \eqref{eq:hankel_condition} are in bijective correspondence.
To a linear functional $L : \skinnyT_{2 d} \to \R$, one can assign the matrix $\M_d^{\fatT}(L)$, defined by $(\M_d^{\fatT}(L) )_{u,v} = L (\revise{\Trace}(u^\star v))$, satisfying the tracial Hankel condition, and vice versa. 
\end{remark}
One can relate the positivity of $L$ and the positive semidefiniteness of its tracial Hankel matrix $\M_d^{\fatT}(L)$. The proof of the following lemma is straightforward and analogous to its free counterpart \cite[Lemma 1.44]{burgdorf16}.
\begin{lemma}
\label{lemma:pureHankel}
Given a linear functional $L : \emph{T}_{2 d} \to \R$, one has $L(\Trace(f^\star f)) \geq 0$ for all $f \in \fatT_{d}$, if and only if, $\M_d^\T (L) \succeq 0$. 
Given $s \in \emph{T}$, one has $L(a^2 s) \geq 0$ for all $a \in \emph{T}_{d - d_s }$, if and only if, $\M_{d - d_s}^{\emph{\skinnyT}} (s \, L) \succeq 0$.
Given $s \in \fatT$, one has $L(\Trace(f^\star \, s \, f)) \geq 0$ for all $f \in \fatT_{d - d_s }$, if and only if, $\M_{d - d_s}^{\fatT} (s \, L) \succeq 0$.
\end{lemma}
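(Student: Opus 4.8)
The plan is to prove each of the three equivalences by the same two-step argument: (1) express the relevant quadratic form in the linear functional $L$ explicitly in the basis of $\fatT$-words (resp. $\skinnyT$-words), and (2) recognize that form as $v^\top \M v$ for the corresponding Hankel or localizing matrix, so that nonnegativity of the form for all coefficient vectors $v$ is exactly positive semidefiniteness of $\M$. I would treat the trace Hankel case first, since it is the cleanest, and then note that the two localizing statements follow by the identical pattern with $s$ (or $\Trace(\cdot\,s\,\cdot)$) inserted.

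For the first equivalence, write a generic $f\in\fatT_d$ as $f=\sum_{u}c_u\,u$ where $u$ ranges over the entries of $\W^{\fatT}_d$ and $c_u\in\R$. Then, using linearity of $L$ and the fact that $\Trace(f^\star f)=\sum_{u,v}c_u c_v\,\Trace(u^\star v)$, one gets $L(\Trace(f^\star f))=\sum_{u,v}c_u c_v\,L(\Trace(u^\star v))=c^\top\M_d^{\fatT}(L)\,c$. Since $f\mapsto c$ is a bijection between $\fatT_d$ and $\R^{\bsigma^{\fatT}(n,d)}$, the condition ``$L(\Trace(f^\star f))\ge0$ for all $f\in\fatT_d$'' is equivalent to ``$c^\top\M_d^{\fatT}(L)c\ge0$ for all $c\in\R^{\bsigma^{\fatT}(n,d)}$'', i.e. $\M_d^{\fatT}(L)\succeq0$. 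Here one must only check that $\M_d^{\fatT}(L)$ is genuinely well-defined and symmetric, which is Remark \ref{rk:Hankelbij} together with $\Trace(u^\star v)\overset{\text{cyc}^\star}{\sim}\Trace(v^\star u)$, so $L$ is applied to a well-defined element of $\skinnyT$.

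For the second equivalence, write $a\in\fatT_{d-d_s}$ as $a=\sum_u c_u u$; since $a$ is not assumed symmetric one uses $a^2 s$ directly, but for the matrix to be symmetric one should instead pair off $u,v$ via $\frac12(uv+vu)s$, or simply observe that $L(a^2 s)=\sum_{u,v}c_uc_v\,L(uvs)$ and that $L(uvs)=L(vus)$ may be assumed after symmetrizing $L$ on $\skinnyT$ (as arranged in the surrounding setup, where $L$ restricted to $\skinnyT$ is tracial and hence invariant under cyclic rotation, in particular $uvs\overset{\text{cyc}^\star}{\sim}vus$ when $s\in\skinnyT$ since $\skinnyT$ is commutative). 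Then $L(a^2s)=c^\top\M_{d-d_s}^{\skinnyT}(s\,L)\,c$, and the equivalence follows as before. The third equivalence is identical to the first with $u^\star v$ replaced by $u^\star s v$; symmetry of $\M_{d-d_s}^{\fatT}(s\,L)$ uses $\Trace(u^\star s v)=\Trace((v^\star s u)^\star)\overset{\text{cyc}^\star}{\sim}\Trace(v^\star s u)$ since $s=s^\star$.

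The only mild subtlety — the ``main obstacle'' such as it is — is bookkeeping about symmetry and well-definedness: one must be careful that the entries $L(\Trace(u^\star v))$, $L(uvs)$, $L(\Trace(u^\star s v))$ depend only on the $\star$-cyclic equivalence class of the relevant word (so that $L$ is actually defined on them and so that the resulting matrix is symmetric), and that the spanning vectors $\W^{\fatT}_d$, $\W^{\skinnyT}_{d-d_s}$ are exactly the ones indexing the matrices, so that the quadratic form ranges over all of $\R^{\bsigma}$. Both points are immediate from the definitions in Section \ref{sec:hierarchy} and from $s=s^\star$; no further input is needed, which is why the statement is indeed straightforward and parallels \cite[Lemma 1.44]{burgdorf16}.
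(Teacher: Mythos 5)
Your proof is correct and is exactly the Gram-matrix argument the paper has in mind (it omits the proof entirely, calling it straightforward and analogous to the free counterpart \cite[Lemma 1.44]{burgdorf16}). One minor slip: in the second equivalence $a$ ranges over $\skinnyT_{d-d_s}$, not $\fatT_{d-d_s}$, so $a$ is automatically symmetric and $L(uvs)=L(vus)$ holds simply because $\skinnyT$ is a commutative algebra --- a point you do eventually make, so nothing is lost.
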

\subsection{SDP hierarchy for pure trace polynomial optimization}
\label{sec:pure_constrained}

For a finite $S\subseteq\skinnyT$, $N>0$ and $d\in\N$ define
\begin{equation}
\cM(S(N))_d:=\left\{
\sum_{i=1}^{K} a_i^2s_i\mid K\in\N,\ a_i\in\skinnyT,\ s_i \in S(N),\ \deg(a_i^2s_i)\le 2d
\right\}.
\end{equation}
Given $b \in \skinnyT$ and $p \in \RX$, note that $b^2 \Trace(p p^\star) = \Trace ((b p) (b p)^\star)$.
Therefore, elements of $\cM(S(N))_d$ correspond to sums of elements of the form
\begin{align}
\label{eq:elementsM}
a_1^2 s \,, \quad a_2^2 \big(N^{k} - \Trace (x_j^{2k})\big) \,, \quad \Trace (f f^\star) \,,
\end{align}
which are of degree at most $2d$, for $a_i \in \skinnyT$, $s \in S$, $1\le j\le n$, $k\in\N$, $f \in \fatT$.

Given a pure trace polynomial $a\in\skinnyT$,  one can then use $\cM(S(N))_d$ for $d=1,2,\dots$ to design a hierarchy of semidefinite relaxations for minimizing $a\in\skinnyT$ over the von Neumann semialgebraic sets $\mathcal{D}_{S[N]}^{\trvN}$ or $\mathcal{D}_{S[N]}^{\twoone}$.

Let us define  $a_{\min}$ and $a_{\min}^{\twoone}$ as follows:
\begin{align}
a_{\min} &:= \inf \{ a(\uA) \mid \uA \in \mathcal{D}_{S[N]} \} \,, \\ \label{eq:pure_constr}
a_{\min}^{\twoone} &:= \inf \{ a(\uA) \mid \uA \in \mathcal{D}_{S[N]}^{\twoone} \} 
=\inf \{ a(\uA) \mid \uA \in \mathcal{D}_{S[N]}^{\trvN} \}\,.
\end{align}
Here the equality in \eqref{eq:pure_constr} holds by Proposition \ref{p:embed}.
%%
%Since the bounded operators on an infinite-dimensional Hilbert space do not admit a trace, we obtain lower bounds on the minimal trace $a_{\min}$ by considering the set $\mathcal{D}_{S[N]}^{\twoone}$ given in Definition~\ref{def:DSII}, obtained by restricting from the algebra of all bounded operators $\mathcal{B}(\mathcal{H})$ on a Hilbert space $\mathcal{H}$ to finite von Neumann algebras~\cite{Tak02} of type $\twoone$. 
%For this reason, we introduce $a_{\min}^{\twoone}$ as the minimum of $a$ on $\mathcal{D}_{S[N]}^{\twoone}$. 
Since $\mathcal{D}_{S[N]}$ is a subset of $\mathcal{D}_{S[N]}^{\trvN}$, one has $a_{\min}^{\twoone} \leq a_{\min}$.
Let $d_{\min} := \max \{ d_s : s \in \{a \} \cup  S(N) \}$.
Then, one can under-approximate $a_{\min}^{\twoone}$ via the following hierarchy of SDP programs, indexed by $d \geq d_{\min}$: 
\begin{align}
\label{eq:pure_constr_dual}
a_{\min,d} = \sup \{ m \mid a - m \in \cM(S(N))_d \} \,.
\end{align}
\begin{lemma}
\label{lemma:pure_dual}
The dual of \eqref{eq:pure_constr_dual} is the following SDP problem:
\begin{equation}
\label{eq:pure_constr_primal}
\begin{aligned}
\inf_{\substack{L : \emph{T}_{2 d} \to \R \\ L \emph{ linear}}} \quad  & L(a)  \\	
\emph{s.t.} 
\quad & (\M_d^\fatT(L))_{u,v} = (\M_d^\fatT(L))_{w,z}  \,, \quad \text{whenever } \Trace(u^\star v) =  \Trace( w^\star z) \,, \\
\quad & (\M_d^\fatT(L))_{1,1} = 1 \,, \\
\quad & \M_d^{\fatT}(L) \succeq 0 \,, \\
\quad & \M_{d - d_s}^{\emph{T}}(s \, L) \succeq 0  \,,  \quad \text{for all }  s \in S  \,,\\
\quad & \M_{d - k}^{\emph{T}}( (N^{k} - \Trace (x_j^{2k}) ) \, L) \succeq 0  \,,  \quad \text{for all }  j=1,\dots,n\,, k\le d  \,.
\end{aligned}
\end{equation}
\end{lemma}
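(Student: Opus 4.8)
The plan is to treat this as a routine semidefinite-programming duality computation: I would first recast \eqref{eq:pure_constr_dual} in standard conic form, then write down its Lagrangian dual and read off \eqref{eq:pure_constr_primal}.

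First I would make the constraint $a-m\in\cM(S(N))_d$ explicit in terms of Gram matrices. By the description of $\cM(S(N))_d$ recorded above and the standard Gram-matrix parametrization of sums of (hermitian) squares (cf.\ Lemma~\ref{lemma:pureHankel}), this membership is equivalent to the existence of a symmetric positive semidefinite matrix $G_0$ of size $\bsigma^{\fatT}(n,d)$, indexed by $\fatT$-words of tracial degree $\le d$ and accounting for the $\Trace(ff^\star)$ summands, positive semidefinite matrices $G_s$ of size $\bsigma^{\skinnyT}(n,d-d_s)$ for $s\in S$, and positive semidefinite matrices $G_{j,k}$ of size $\bsigma^{\skinnyT}(n,d-k)$ for $1\le j\le n$, $k\le d$, the latter two families being indexed by $\skinnyT$-words of the indicated truncated tracial degree, such that, after equating coefficients in the basis of pure trace words of tracial degree $\le 2d$,
\begin{multline*}
a-m=\sum_{u,v}(G_0)_{u,v}\,\Trace(u^\star v)+\sum_{s\in S}\sum_{u,v}(G_s)_{u,v}\,uvs\\
{}+\sum_{j=1}^{n}\sum_{k\le d}\sum_{u,v}(G_{j,k})_{u,v}\,uv\bigl(N^{k}-\Trace(x_j^{2k})\bigr).
\end{multline*}
Thus \eqref{eq:pure_constr_dual} is the problem of maximising $m$ over $\bigl(m,G_0,(G_s)_s,(G_{j,k})_{j,k}\bigr)$ subject to finitely many linear equations --- one for each pure trace word of tracial degree $\le 2d$ --- together with the displayed positive semidefinite constraints; that is, an SDP in standard dual form.

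Next I would form its Lagrangian dual. Attaching a real multiplier to the equation indexed by the pure trace word $w$ and collecting these into a single linear functional $L\colon\skinnyT_{2d}\to\R$ (so that the multiplier of the $w$-equation is $L(w)$), and using the identities $L(\Trace(u^\star v))=(\M_d^{\fatT}(L))_{u,v}$, $L(uvs)=(\M_{d-d_s}^{\skinnyT}(s\,L))_{u,v}$ and $L(uv(N^{k}-\Trace(x_j^{2k})))=(\M_{d-k}^{\skinnyT}((N^{k}-\Trace(x_j^{2k}))\,L))_{u,v}$, the Lagrangian reduces to
\begin{multline*}
m\bigl(1-L(1)\bigr)+L(a)-\bigl\langle G_0,\M_d^{\fatT}(L)\bigr\rangle-\sum_{s\in S}\bigl\langle G_s,\M_{d-d_s}^{\skinnyT}(s\,L)\bigr\rangle\\
{}-\sum_{j=1}^{n}\sum_{k\le d}\bigl\langle G_{j,k},\M_{d-k}^{\skinnyT}\bigl((N^{k}-\Trace(x_j^{2k}))\,L\bigr)\bigr\rangle,
\end{multline*}
where $\langle\cdot,\cdot\rangle$ denotes the trace inner product on symmetric matrices. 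Maximising over $m\in\R$ is finite precisely when $L(1)=1$, i.e.\ $(\M_d^{\fatT}(L))_{1,1}=1$; maximising over the positive semidefinite Gram matrices is finite, and then contributes $0$, precisely when $\M_d^{\fatT}(L)\succeq0$, $\M_{d-d_s}^{\skinnyT}(s\,L)\succeq0$ for every $s\in S$, and $\M_{d-k}^{\skinnyT}((N^{k}-\Trace(x_j^{2k}))\,L)\succeq0$ for all $j$ and all $k\le d$. What is left of the Lagrangian is $L(a)$, so the dual minimises $L(a)$ over such $L$. Finally, by Remark~\ref{rk:Hankelbij}, parametrizing by a linear functional on $\skinnyT_{2d}$ is the same as parametrizing by a symmetric matrix satisfying the tracial Hankel condition~\eqref{eq:hankel_condition}; rewriting the dual in that language yields precisely \eqref{eq:pure_constr_primal}.

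The one genuinely delicate step is the bookkeeping in the first part above: one must track which $\fatT$- or $\skinnyT$-words, and of which truncated tracial degree, index each Gram matrix --- here the convention $d_s=\lceil\deg s/2\rceil$ (so that $d_{N^{k}-\Trace(x_j^{2k})}=k$) produces the degree drops $d-d_s$ and $d-k$ --- and then verify that the Frobenius pairings emerging in the Lagrangian are exactly the tracial Hankel and (pure) trace localizing matrices introduced earlier in this section, which is the content of Lemma~\ref{lemma:pureHankel}. Since the lemma only identifies the dual program, no Slater condition or duality-gap analysis is required here; the automatic weak-duality inequality --- that $a_{\min,d}$ does not exceed the optimal value of \eqref{eq:pure_constr_primal} --- together with Theorem~\ref{thm:nocyc} is what will drive the convergence of the hierarchy in the subsequent results.
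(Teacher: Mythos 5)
Your argument is correct and is essentially the paper's own proof: the paper likewise performs the standard Lagrangian duality computation, working with the dual cone $(\cM(S(N))_d)^{\vee}$ and the sup--inf exchange, and then identifies the resulting nonnegativity conditions with the positive semidefiniteness of the tracial Hankel and localizing matrices via Remark~\ref{rk:Hankelbij} and Lemma~\ref{lemma:pureHankel}. Your explicit Gram-matrix parametrization of membership in $\cM(S(N))_d$ is just a more concrete rendering of the same step, and your closing observation that only weak duality is needed here (strong duality being deferred to Theorem~\ref{th:pure_hierarchy}) matches the paper's structure.
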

\begin{proof}
Let us denote by $(\cM(S(N))_d)^{\vee}$ the dual space of $\cM(S(N))_d$.
From \eqref{eq:elementsM}, one has
\begin{align*}
(\cM(S(N))_d)^{\vee} =  \{ &  L : \text{T}_{2d} \to \R \mid L  \text{ linear}, \\ 
& L(a^2 s) \geq 0 \,, \forall s \in S \,, \forall a \in \skinnyT_{d - d_s} \,, \\
& L(a^2  (N^{k} - \Trace (x_j^{2k} )\revise{)} \geq 0 \,,  \forall j=1,\dots,n \,, \forall k\le d \,, \forall a \in \skinnyT_{d - k} \,, \\
& L (\Trace(f f^\star)) \geq 0 \,, \forall f \in \TX_d \}
\end{align*}
By using a standard Lagrange duality approach, we obtain the dual of SDP \eqref{eq:pure_constr_dual}:
\begin{align}
a_{\min,d}  = &\sup_{a - m \in \cM(S(N))_d} m = \sup_{m} \inf_{L \in (\cM(S(N))_d)^{\vee} } (m + L(a -m)) \label{eq:pdfirst}\\
\leq & \inf_{L \in (\cM(S(N))_d)^{\vee} } \sup_{m}   \, (m + L(a -m)) \label{ineq:pdsecond}\\
= & \inf_{L \in (\cM(S(N))_d)^{\vee} } (L(a) +  \sup_{m} m (1 - L(1))) \label{eq:pdthird}\\
= & \inf_L \{L(f) \mid L \in (\cM(S(N))_d)^{\vee} \,, L(1) = 1 \} \label{eq:pdlast}\,,
\end{align}
The second equality in \eqref{eq:pdfirst} comes from the fact that the inner minimization problem gives minimal value 0 if and only if $a - m \in \cM(S(N))_d$.
The inequality in \eqref{ineq:pdsecond}  trivially holds.
The inner maximization problem in \eqref{eq:pdthird} is bounded with maximum value 0 if and only $L(1) = 1$.
Eventually, \eqref{eq:pdlast} is equivalent to SDP~\eqref{eq:pure_constr_primal} by Remark
\ref{rk:Hankelbij} and Lemma \ref{lemma:pureHankel}.
\end{proof}
Before proving that SDP \eqref{eq:pure_constr_dual} satisfies strong duality, we recall that an $\varepsilon$-neighborhood of 0 is the set $\mathcal{N}_{\varepsilon}$ defined for a given $\varepsilon > 0$ by:
\[
\mathcal{N}_{\varepsilon} := \bigcup_{k \in \N} \left\lbrace \underline{A} := (A_1,\dots,A_n) \in \Sbb_k^n : \varepsilon^2 - \sum_{i=1}^n A_i^2 \succeq 0  \right\rbrace \,.
\]
\begin{lemma}
\label{lemma:epsneighborhood}
If $f \in \fatT$ vanishes on an $\varepsilon$-neighborhood of 0, then $f = 0$.
\end{lemma}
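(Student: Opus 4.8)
\textbf{Proof plan for Lemma \ref{lemma:epsneighborhood}.}

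The plan is to reduce the statement to the corresponding fact for ordinary noncommutative polynomials, which in turn follows from the classical observation that a nonzero polynomial cannot vanish on an open neighborhood in matrix space. First I would fix $f\in\fatT$ and, assuming $f$ vanishes on an $\varepsilon$-neighborhood $\mathcal N_\varepsilon$ of $0$, aim to show that every coefficient of $f$ (viewed as an element of $\skinnyT\langle\ux\rangle$, i.e.\ a $\skinnyT$-linear combination of words in $\mx$) is zero. The key structural remark is that a trace polynomial becomes, after evaluation at a tuple $\uA\in\Sbb_k^n$, an ordinary noncommutative polynomial evaluation: each trace symbol $\Trace(w)$ is replaced by the \emph{scalar} $\Trace(w(\uA))\in\R$, so $f(\uA)$ is a matrix whose entries are polynomial functions of the entries of $A_1,\dots,A_n$. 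Hence it suffices to argue polynomial-function-wise.

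The key steps, in order, are as follows. (1) Observe that for each fixed $k$, the map $\uA\mapsto f(\uA)$ from $\Sbb_k^n$ to $\Sbb_k$ (or $\Mbb_k$) has entries that are genuine polynomial functions in the $\binom{k+1}{2}n$ real coordinates of $\uA$, since both multiplication of matrices and the normalized trace are polynomial operations. (2) Since $\mathcal N_\varepsilon\cap\Sbb_k^n$ contains the open ball $\{\uA: \sum_i A_i^2\prec \varepsilon^2 I\}$, which is a nonempty Euclidean-open subset of $\Sbb_k^n$, vanishing of $f$ on $\mathcal N_\varepsilon$ forces each of these polynomial functions to vanish identically on $\Sbb_k^n$; thus $f(\uA)=0$ for \emph{all} $\uA\in\Sbb_k^n$ and all $k\in\N$. (3) Now invoke the known fact from the trace-free theory together with the structure of trace symbols: a trace polynomial that vanishes on all matrix tuples of all sizes is the zero element of $\fatT$. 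Concretely, one can write $f=\sum_{w\in\mx} c_w\, w$ with $c_w\in\skinnyT$ (finitely many nonzero), evaluate on block-diagonal or generic tuples to separate the words $w$, and use that the tracial moment functions $\uA\mapsto\Trace(w(\uA))$ are algebraically independent enough (after reducing modulo $\star$-cyclic equivalence) to conclude each $c_w$, as a polynomial in the trace symbols, is the zero polynomial. This last point is precisely the faithfulness of the evaluation map $\fatT\to\prod_k \mathrm{Fun}(\Sbb_k^n,\Mbb_k)$, which is standard (see \cite{P76}, and compare \cite{klep2018positive}); alternatively it is immediate for the purposes here since $\skinnyT$ is a polynomial ring and a polynomial identity in the trace symbols holding for all matrix evaluations forces the polynomial to be zero by the algebraic independence of $\{\Trace(w):w\in\mx/\mathrm{cyc}^\star\}$ as functions.

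The main obstacle is step (3): making precise that $\fatT$ embeds faithfully into functions on matrix tuples, i.e.\ that the only trace polynomial vanishing on all $\Sbb_k^n$ is $0$. One must be careful that the trace symbols $\Trace(w)$ are treated as free commutative indeterminates up to $\star$-cyclic equivalence, so one needs the genuine algebraic independence of the corresponding trace functions $\uA\mapsto\Trace(w(\uA))$ across all sizes $k$; for fixed $k$ there are relations (Cayley–Hamilton), but letting $k\to\infty$ removes them. I would cite the relevant faithfulness statement rather than reprove it. Everything else — the reduction from "vanishes on an $\varepsilon$-ball" to "vanishes identically" via the identity theorem for polynomials, and the bookkeeping that evaluation of $f$ is polynomial in the matrix entries — is routine.
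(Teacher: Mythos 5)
Your proposal is correct and follows essentially the same route as the paper: use that the entries of $f(\uA)$ are polynomial in the entries of $\uA$, so vanishing on the nonempty open subset of $\Sbb_k^n$ inside $\mathcal{N}_\varepsilon$ forces vanishing on all of $\Sbb_k^n$, and then invoke Procesi's faithfulness result for trace identities. The only (cosmetic) difference is that the paper works with the single size $k=d+1$, citing the sharper fact that a trace polynomial of degree $d$ cannot be a trace identity on $(d+1)\times(d+1)$ matrices, whereas you let $k\to\infty$ to dispose of the Cayley--Hamilton relations; both reduce to the same citation of \cite{P76}.
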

\begin{proof}
We rely on the standard multilinearization trick and the fact that a trace polynomial $f\in \TX_d$ cannot be a trace identity  on $(d+1) \times (d+1)$ matrices, as a consequence of \cite[Theorem 4.5 (b)]{P76}.
Since $f$ vanishes on all $n$-tuples of $(d+1)\times (d+1)$ matrices  $\uA \in \mathcal{N}_{\varepsilon}$, one has $f = 0$.
\end{proof}
\begin{theorem}
\label{th:pure_hierarchy}
Let $S[N]$ be as in \eqref{eq:SN} and 
suppose that $\mathcal{D}_S$ contains an $\varepsilon$-neighbor\-hood of 0.
Then SDP \eqref{eq:pure_constr_dual} satisfies strong duality, i.e., there is no duality gap between SDP \eqref{eq:pure_constr_primal} and SDP \eqref{eq:pure_constr_dual}. 
\end{theorem}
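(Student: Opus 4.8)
The statement is a no-duality-gap result for the SDP pair \eqref{eq:pure_constr_dual}--\eqref{eq:pure_constr_primal}, so the natural route is to invoke a standard strong-duality criterion for conic/semidefinite programming: if one of the two programs is feasible with a bounded value and the other has a nonempty relative interior (a Slater-type condition), then strong duality holds and the optimum of one is attained. Since \eqref{eq:pure_constr_dual} is the "sup" side and \eqref{eq:pure_constr_primal} is the "inf" side (the dual computed in Lemma \ref{lemma:pure_dual}), I would aim to verify a Slater condition for the primal moment SDP \eqref{eq:pure_constr_primal}, i.e. produce a linear functional $L$ that is strictly feasible in the sense that all the matrix inequalities $\M_d^{\fatT}(L)\succ0$ and $\M_{d-d_s}^{\fatT}(sL)\succ0$ (and the localizing matrices for $N^k-\Trace(x_j^{2k})$) hold strictly, while still satisfying the Hankel and normalization equality constraints. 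Equivalently, and more conceptually, the hypothesis that $\mathcal D_S$ contains an $\varepsilon$-neighborhood of $0$ is exactly what gives us enough "room" to build such an $L$.

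\textbf{Construction of the strictly feasible functional.} The key step is: take a generic tuple $\uA\in\mathcal N_{\varepsilon'}$ of $(d+1)\times(d+1)$ symmetric matrices with $\varepsilon'<\varepsilon$ small enough that $\uA\in\mathcal D_{S[N]}$ as well (shrinking $\varepsilon'$ further so that $N-A_j^2\succ0$), together with a suitable probability-like averaging, and define $L(p):=\int \Trace(p(\uA))\,d\mu(\uA)$ for a measure $\mu$ supported on an open subset of such tuples. Then $L$ is tracial, $L(1)=1$, and $\M_d^{\fatT}(L)=\int \M_d^{\fatT}(\delta_{\uA})\,d\mu$ is a genuine average of PSD Gram-type matrices $v\mapsto \Trace(v(\uA)^\star v(\uA))$; its kernel consists of trace polynomials vanishing on the support of $\mu$, which by Lemma \ref{lemma:epsneighborhood} (openness of the support forces vanishing on an $\varepsilon''$-neighborhood of $0$, hence identically) is zero. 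Thus $\M_d^{\fatT}(L)\succ0$. The same averaging argument, applied to the localizing forms $v\mapsto \Trace(v^\star s v)(\uA)$ with $s(\uA)\succ0$ (strict since we are in the interior of $\mathcal D_{S[N]}$), gives $\M_{d-d_s}^{\fatT}(sL)\succ0$, and likewise for the $N^k-\Trace(x_j^{2k})$ constraints. So $L$ is Slater-feasible for \eqref{eq:pure_constr_primal}.

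\textbf{Concluding via conic duality.} With a Slater point for the primal in hand, and noting that the dual \eqref{eq:pure_constr_dual} is feasible (e.g. $m$ very negative works, since $a-m\succeq a_{\min,d_{\min}}$-type bounds or simply $a+M\in\cM(S(N))_d$ for large $M$ by archimedeanity of $\cM(S(N))$, Lemma \ref{l:arch}, once $d$ is large; for the minimal levels one argues directly) and has finite value bounded above by $L(a)<\infty$, the standard SDP strong-duality theorem yields $a_{\min,d}=\min\{L(a): L\ \text{primal feasible}\}$ with the primal optimum attained and no gap. I would phrase this as: "By Lemma \ref{lemma:pure_dual}, \eqref{eq:pure_constr_primal} is the dual of \eqref{eq:pure_constr_dual}; by the construction above it has a strictly feasible point; hence by the strong duality theorem for semidefinite programming there is no duality gap." The main obstacle, and the only place real work is needed, is the strict feasibility argument — specifically, arguing that the averaged Hankel and localizing matrices are \emph{positive definite}, not merely semidefinite, which is where Lemma \ref{lemma:epsneighborhood} is essential (it upgrades "$f$ vanishes on an open set of small tuples" to "$f=0$", killing the kernel). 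A secondary nuisance is bookkeeping the degree truncation $d\ge d_{\min}$ so that all the relevant $\fatT$-words of degree $\le d$ are genuinely separated by evaluations at $(d+1)\times(d+1)$ matrices, which again is exactly the content of Lemma \ref{lemma:epsneighborhood} and the Procesi bound \cite[Theorem 4.5(b)]{P76} it rests on.
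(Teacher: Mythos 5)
Your proposal is correct and takes essentially the same route as the paper: the paper likewise establishes a Slater condition for the moment SDP \eqref{eq:pure_constr_primal} by averaging the evaluation functionals $f\mapsto\Trace f(\uA)$ over a countable dense family of tuples $\uA\in\mathcal D_{S[N]}$ (with rational entries, combined via a weighted sum $\sum_k 2^{-k}L_{\uA^{(k)}}/\|L_{\uA^{(k)}}\|$), and uses Lemma \ref{lemma:epsneighborhood} in exactly the way you describe to upgrade the Hankel and localizing matrices from positive semidefinite to positive definite, before concluding by standard SDP duality as in \cite[Proposition 4.4]{cafuta2012constrained}. Your integral against a measure with open support is merely a continuous variant of the paper's discrete averaging, so the two arguments coincide in substance.
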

\begin{proof}
The strong duality statement is proved as in \cite[Proposition 4.4]{cafuta2012constrained}. 
For this, we construct a linear map $L : \TX_{2d} \to \R$ which is a strictly feasible solution of SDP~\eqref{eq:pure_constr_primal}, namely $L(1) = 1$,  $L(a^2 s) > 0$ for all $s \in S$  and for all nonzero $a \in \skinnyT_{d - d_s}$, $L(a^2  (N^k - \Trace (x_j^{2k}) ) > 0$ for all $j$, $k\le d$ and for all nonzero $a \in \skinnyT_{d - k}$, and $L (\Trace(f f^\star)) > 0$ for all nonzero $f \in \TX_d$.
Let us pick $m > d$ and  consider the set $\mathcal{U}$ of $m \times m$ matrices from $\mathcal{D}_{S[N]}$ with rational entries, written as
\[
\mathcal{U} = \{\uA^{(k)} \mid k \in \N,\uA^{(k)}  \in  \mathcal{D}_{S[N]}^m \}\,.
\]
Note that $\mathcal{U}$ contains a dense subset of $m\times m$ matrices in $\cN_\varepsilon$.
Let us define
\[
L := \sum_{k=1}^{\infty} 2^{-k} \frac{L_{\uA^{(k)}}}{\|L_{\uA^{(k)}}\|} \,,
\]
with
\[
L_{\uA} : \TX_{2d} \to \R \,, \quad f \mapsto \Trace f(\uA) \,,
\]
for all $\uA \in \mathcal{U}$.
This functional $L$ is obviously linear and unital.
One has $L(\Trace (f f^\star)) \geq 0$ for all $f \in \TX_d$. Now let us assume that $L(\Trace (f f^\star)) = 0$ for some $f \in \TX_d$.
This implies that for all $k \in \N$ one has $\Trace (f(\uA^{(k)}) f^\star(\uA^{(k)})) = 0$, thus $f(\uA^{(k)}) f^\star(\uA^{(k)}) = 0$, which in turn implies that $f(\uA^{(k)}) = 0$.
By density of $\mathcal{U}$ in $\cN_\varepsilon\cap \Sbb_m^n$, $f(\uA) = 0$ for all $\uA\in \cN_\varepsilon\cap \Sbb_m^n$. As $m$ was arbitrary, $f$ vanishes on $\cN_\varepsilon$.
By Lemma \ref{lemma:epsneighborhood}, one has $f = 0$.
The two other positivity conditions are proved in a similar fashion.
\end{proof}

\begin{corollary}
\label{cor:pure_cvg}
The hierarchy of SDP programs \eqref{eq:pure_constr_dual} provides a sequence of lower bounds $(a_{\min,d})_{d \geq d_{\min}}$  monotonically converging to $a_{\min}^{\emph{II}_1}$.
\end{corollary}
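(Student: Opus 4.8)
The plan is to combine monotonicity of the bounds $(a_{\min,d})_d$, the fact that each $a_{\min,d}$ is a valid lower bound for $a_{\min}^{\twoone}$, and the Positivstellensatz from Theorem~\ref{thm:nocyc} to show the bounds cannot stay bounded away from the true optimum. First I would observe that $\cM(S(N))_d\subseteq\cM(S(N))_{d+1}$, so the feasible set of SDP~\eqref{eq:pure_constr_dual} grows with $d$, giving $a_{\min,d}\le a_{\min,d+1}$: the sequence is monotonically nondecreasing. Next, for the upper bound, take any $\uX\in\mathcal{D}_{S[N]}^{\twoone}$ (equivalently, by Proposition~\ref{p:embed}, in $\mathcal{D}_{S[N]}^{\trvN}$) and any $m$ with $a-m\in\cM(S(N))_d$; by the definition of $S(N)$ and the sign conditions it collects, applying $\uX$ gives $a(\uX)-m\ge0$, hence $m\le a_{\min}^{\twoone}$. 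Taking suprema over feasible $m$ gives $a_{\min,d}\le a_{\min}^{\twoone}$ for every $d\ge d_{\min}$, so $\lim_d a_{\min,d}=:a_\infty$ exists and satisfies $a_\infty\le a_{\min}^{\twoone}$.

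It remains to prove $a_\infty\ge a_{\min}^{\twoone}$. I would argue by contradiction: suppose $a_\infty<a_{\min}^{\twoone}$, and pick $\varepsilon>0$ with $a_\infty<a_{\min}^{\twoone}-\varepsilon$. Then for all $d$ we have $a_{\min,d}<a_{\min}^{\twoone}-\varepsilon$, which means $a-(a_{\min}^{\twoone}-\varepsilon)\notin\cM(S(N))_d$ for every $d$, hence $a-a_{\min}^{\twoone}+\varepsilon\notin\cM(S(N))=\bigcup_d\cM(S(N))_d$. On the other hand, by definition of $a_{\min}^{\twoone}$ we have $a(\uX)-a_{\min}^{\twoone}\ge0$, hence $a(\uX)-a_{\min}^{\twoone}+\varepsilon\ge\varepsilon>0$, for all $\uX\in\mathcal{D}_{S[N]}^{\trvN}$ (again using Proposition~\ref{p:embed} to pass between $\twoone$ and $\trvN$). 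Thus $b:=a-a_{\min}^{\twoone}+\varepsilon$ satisfies condition~\ref{it:i2} of Theorem~\ref{thm:nocyc}, so by that theorem $b+\varepsilon'\in\cM(S(N))$ for all $\varepsilon'>0$; taking $\varepsilon'$ small, or simply noting $b\succ0$ on the set and reapplying the theorem to $b$ directly with a slightly smaller slack, we get $a-a_{\min}^{\twoone}+\varepsilon=b\in\cM(S(N))$ — more carefully, apply Theorem~\ref{thm:nocyc} to $a-(a_{\min}^{\twoone}-\varepsilon/2)$, which is $\ge\varepsilon/2>0$ on $\mathcal{D}_{S[N]}^{\trvN}$, to conclude $a-(a_{\min}^{\twoone}-\varepsilon/2)+\varepsilon/2=a-a_{\min}^{\twoone}+\varepsilon\in\cM(S(N))$. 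This contradicts $a-a_{\min}^{\twoone}+\varepsilon\notin\cM(S(N))$, establishing $a_\infty=a_{\min}^{\twoone}$.

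The main subtlety — not so much an obstacle as a point requiring care — is the bookkeeping with the additive $\varepsilon$'s: Theorem~\ref{thm:nocyc} only gives membership in $\cM(S(N))$ after adding an arbitrarily small positive constant, so one has to start with a value strictly below $a_{\min}^{\twoone}$ (here $a_{\min}^{\twoone}-\varepsilon/2$) and spend part of the gap on the slack guaranteed by the theorem. A secondary point is to make sure one invokes $S(N)$ rather than $S[N]$ correctly: membership in $\cM(S(N))_d$ yields the desired nonnegativity upon evaluation at any $\uX\in\mathcal{D}_{S[N]}^{\trvN}$ precisely because such $\uX$ makes every generator of $S(N)$ — the elements of $S$, the tracial squares $\Trace(pp^\star)$, and the bounds $N^k-\Trace(x_j^{2k})$ — evaluate to something $\ge0$, as recorded in the proof of the implication \ref{it:i1}$\Rightarrow$\ref{it:i2} of Theorem~\ref{thm:nocyc}. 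Strong duality (Theorem~\ref{th:pure_hierarchy}) is not needed for the convergence statement itself, though it guarantees the values $a_{\min,d}$ are also attained on the primal side~\eqref{eq:pure_constr_primal}.
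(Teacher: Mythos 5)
Your proposal is correct and follows essentially the same route as the paper: monotonicity from the nesting $\cM(S(N))_d\subseteq\cM(S(N))_{d+1}$, weak duality for the upper bound $a_{\min,d}\le a_{\min}^{\twoone}$, and Theorem~\ref{thm:nocyc} applied to $a-a_{\min}^{\twoone}$ to place $a-a_{\min}^{\twoone}+\varepsilon$ in $\cM(S(N))_{d}$ for some finite $d$, forcing the limit up to $a_{\min}^{\twoone}$. Your contradiction phrasing and the extra $\varepsilon/2$ shuffle are slightly more roundabout than the paper's direct statement (which simply notes $a-a_{\min}^{\twoone}+\tfrac1m\in\cM(S(N))_{d(m)}$, hence $a_{\min,d(m)}\ge a_{\min}^{\twoone}-\tfrac1m$), but the content is identical and your argument is sound.
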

\begin{proof}
As $\cM(S(N))_d \subseteq \cM(S(N))_{d+1}$, one has $a_{\min,d} \leq a_{\min,d+1}$.
Furthermore, Theorem~\ref{thm:nocyc} implies that for each each $m \in \N$, there exists $d(m) \in \N$ such that $a - a_{\min}^{\twoone} + \frac{1}{m} \in \cM(S(N))_{d(m)}$.
Thus one has
\[
a_{\min}^{\twoone} - \frac{1}{m} \leq  a_{\min,d(m)} \,,
\]
which implies that
\[
\lim_{d \to \infty} a_{\min,d}  = a_{\min}^{\twoone}
 \,.\qedhere\]
\end{proof}

\if{
\begin{remark}
\label{rk:qiconverges}
Corollary \ref{cor:pure_cvg} implies that one can build a converging sequence of SDP programs similar to the hierarchy presented in \cite{pozas2019bounding}, relying on scalar extension of the moment matrices involved in the NPA hierarchy \cite{navascues2008convergent}.
Indeed, the scalar extension matrix $\tilde \Gamma$ from \cite{pozas2019bounding} is indexed by a set of extra operators, which corresponds exactly to the set of $\fatT$-words. 
The matrix $\tilde{\Gamma}$ satisfies the tracial Hankel property.
In addition, the set of noncommuting operators $\underline{x} = (A_i, B_j, C_k)$, satisfy causal constraints, which boil down to equality constraints involving polynomials in $\emph{\skinnyT}$, e.g., $\Trace(A_{i_1} A_{i_2} \cdots A_{i_m} C_{k_1} C_{k_2} \cdots C_{k_m}) - \Trace(A_{i_1} A_{i_2} \cdots A_{i_m}) \Trace(C_{k_1} C_{k_2} \cdots C_{k_m}) = 0$.
The resulting pure trace localizing matrices are constrained to be zero.
In addition, the operator constraints $\{1 - x_j^2 = 0 \mid j=1,\dots,n \}$ are also taken into account by performing moment substitutions within $\tilde \Gamma$.
Overall, by taking $S$ to be the set all polynomials involved in the causal equality constraints (together with their opposites) and taking $S' = S \cup \{\pm (1 - x_j^2) \mid j=1,\dots,n \}$, the constraints of SDP \eqref{eq:pure_constr_primal} are equivalent to the ones stated in \cite{pozas2019bounding}.
As a consequence of Corollary \ref{cor:pure_cvg}, the hierarchy from  \cite{pozas2019bounding} is thus proved to be converging.
\end{remark}
}\fi
\subsection{Finite-dimensional GNS representations and minimizer extraction}
\label{sec:gns}
The goal of this section is to derive an algorithm to extract minimizers of pure trace polynomial optimization problems. 
The forthcoming statements can be seen as ``pure trace'' variants of the results derived in the context of commutative polynomials \cite{curto1998flat}, eigenvalue optimization of noncommutative  polynomials \cite[Lemma 2.2]{McCullSOS} (see also \cite{pironio2010convergent},~\cite[Chapter 21]{anjos2011handbook} and~\cite[Theorem~1.69]{burgdorf16}), and trace optimization of noncommutative polynomials~\cite{nctrace}.

\begin{definition}
\label{def:flatextension}
Suppose $L : \emph{\skinnyT}_{2d + 2 \delta} \to \R$ is a tracial linear functional with restriction $\tilde{L} : \emph{\skinnyT}_{2 d} \to \R$. 
We associate to $L$ and $\tilde{L}$ the Hankel matrices $\M_{d+\delta}^{\fatT} (L)$ and $\M_d^{\fatT} (\tilde L)$ respectively, and get the block form
\[
\M_{d + \delta}^{\fatT} (L) = \begin{bmatrix}
\M_d^{\fatT} (\tilde L) & B \\[1mm]
B^T & C
\end{bmatrix} \,.
\]
We say that $L$ is \emph{$\delta$-flat} or that $L$ is a \emph{$\delta$-flat extension} of $\tilde{L}$, if $\M_{d+\delta}^{\fatT} (L)$ is flat over $\M_d^{\fatT} (\tilde{L})$, i.e., if $\rank \M_{d+\delta}^{\fatT} (L) = \rank \M_d^{\fatT} (\tilde{L}) $.
\end{definition}

\revise{Suppose $L$ is $\delta$-flat and let $r := \rank \M_{d}^{\fatT}(L) = \M_{d + \delta}^{\fatT}(L)$.
Since $\M_{d+\delta}^{\fatT}(L) \succeq 0$, we obtain the Gram matrix decomposition 
$  \M_{d + \delta}^{\fatT}(L) = [\langle \u , \w \rangle]_{u,w}$ with vectors $\u,\w \in \R^r$, where the labels $u,v$ are $\fatT$-words of degree at most $d + \delta$. 
Then, we define the following finite-dimensional Hilbert space  
\[
\cH := \text{span} \, \{\w \mid \deg w \leq d + \delta \} = \text{span} \, \{\w \mid \deg w \leq d \} ,
\]
where the equality is a consequence of the flatness assumption.
Afterwards, one can follow the steps performed in Appendix~\ref{app} (see \eqref{e:creation}) and consider, for each $p \in \fatT$, the multiplication operator $\hat \chi_p$ on $\cH$ and the $\star$-representation $\pi : \fatT \to \mathcal{B}(\cH)$ defined by $\pi(p) = \hat \chi_p$.
Let $\vb$ be the vector representing 1 in $\cH$; then $L(p) = \langle \pi(p) \vb, \vb \rangle$ for all $p \in \fatT$. 
In general, elements of $\pi(\skinnyT)$ are central in $\pi(\fatT)$; if they are actually scalar multiples of the identity on $\cH$, then $\pi$ is not just a $\star$-representation, but it respects trace in the sense that $\pi(f)=f(\pi(x_1),\dots,\pi(x_n))$ for every $f\in\fatT$. 
This fact applies to our SDP hierarchy as follows.

\begin{proposition}
\label{prop:pure_trace_flat}
Given $S \cup \{a \} \subseteq \emph{T}_{2 d}$,
 let $S[N]$ be as in \eqref{eq:SN}. 
Set $\delta := \max \{ \lceil \deg s /2 \rceil : s \in S[N] \}$.
Assume that $L$ is a $\delta$-flat optimal solution of SDP~\eqref{eq:pure_constr_primal}, and assume that $\pi(\skinnyT)=\R$, where $\pi : \fatT \to \mathcal{B}(\cH)$ is the $\star$-representation constructed above.
Then, one has
\begin{equation}
\label{eq:pure_trace_flat}
a_{\min,d + \delta} = L(a) = a_{\min}^{\II_1} \,.
\end{equation}
Moreover, there are finitely many $n$-tuples $\uA^{(j)}$ of symmetric matrices,
and positive scalars $\lambda_j$ with $\sum_j \lambda_j = 1$, such that $a_{\min}^{\II_1}=a(\bigoplus_j \uA^{(j)})$, where the tracial state is given by
\[
w\left(\bigoplus_j \uA^{(j)}\right)\mapsto
\sum_j \lambda_j \Trace (w(\uA^{(j)}))
\]
for $w\in\mx$.
\end{proposition}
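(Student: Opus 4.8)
The plan is to turn the $\delta$-flat functional $L$ into a genuine finite-dimensional trace evaluation, and then read off the matrix tuples from the structure of $\pi(\fatT)$. First I would invoke the finite-dimensional GNS construction sketched above (and detailed in Appendix~\ref{app}): flatness guarantees that $\cH=\operatorname{span}\{\w:\deg w\le d\}$ is invariant under each multiplication operator $\hat\chi_{x_j}$, so the $\hat\chi_{x_j}$ are honest bounded symmetric operators on the finite-dimensional space $\cH$, and $L(p)=\langle\pi(p)\vb,\vb\rangle$ for all $p\in\fatT_{2d}$ (this is where the degree bookkeeping $\delta:=\max_{s\in S[N]}\lceil\deg s/2\rceil$ is used, to ensure localizing matrices of the shifted functional still encode the constraints). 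The hypothesis $\pi(\skinnyT)=\R$ is precisely what upgrades $\pi$ from a mere $\star$-representation to one that \emph{respects the trace}: for each pure trace word $\Trace(w)$ the central operator $\pi(\Trace(w))$ is a scalar, and since $L(1)=1$ and $L$ is tracial, that scalar must equal the normalized trace of $w(\pi(x_1),\dots,\pi(x_n))$ on $\cH$. Consequently $\pi(f)=f(\pi(\ux))$ for every $f\in\fatT$, and applying the vector state $\langle\cdot\,\vb,\vb\rangle$ gives $L(f)=f(\uA_0)$ evaluated in the tracial von Neumann algebra $(\operatorname{B}(\cH),\operatorname{tr})$ with $\uA_0:=(\pi(x_1),\dots,\pi(x_n))$.

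Next I would check feasibility of $\uA_0$ for the optimization problem. From $\M_{d-d_s}^{\fatT}(s\,L)\succeq0$ (resp.\ the pure-trace localizing matrix) together with flatness, Lemma~\ref{lemma:pureHankel} and the GNS identities yield $\langle\pi(s)\pi(p)\vb,\pi(p)\vb\rangle\ge0$ for all $p$ in a spanning set of $\cH$, hence $\pi(s)=s(\uA_0)\succeq0$ for every $s\in S$; likewise $N-A_{0,j}^2\succeq0$ from the $(N^k-\Trace(x_j^{2k}))$-localizing constraints (or directly from $S[N]$). Thus $\uA_0\in\mathcal{D}_{S[N]}^{\cF,\tau}\subseteq\mathcal{D}_{S[N]}^{\trvN}$, so $a(\uA_0)=L(a)\ge a_{\min}^{\II_1}$. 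On the other hand $L$ is optimal for SDP~\eqref{eq:pure_constr_primal} at level $d+\delta$, so $L(a)=a_{\min,d+\delta}\le a_{\min}^{\II_1}$ by Corollary~\ref{cor:pure_cvg} (the bounds $a_{\min,d}$ increase to $a_{\min}^{\II_1}$). Combining, $a_{\min,d+\delta}=L(a)=a_{\min}^{\II_1}$, which is \eqref{eq:pure_trace_flat}.

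For the final assertion I would decompose $(\operatorname{B}(\cH),\operatorname{tr})$ into its matricial blocks. Since $\cH$ is finite-dimensional and the $\pi(x_j)$ are symmetric, the real $\star$-algebra they generate is semisimple, hence a direct sum $\bigoplus_j\Mbb_{k_j}(\R)$ (one may absorb the Hilbert-space inner product into this decomposition; any complex or quaternionic blocks can be realified, enlarging $n$-tuples harmlessly, or one simply notes $\operatorname{B}(\cH)$ itself is already a real von Neumann algebra and restricts attention to the subalgebra generated by $\uA_0$). Write $\uA_0=\bigoplus_j\uA^{(j)}$ with $\uA^{(j)}\in\Sbb_{k_j}^n$, and let $\lambda_j>0$ be the weight the tracial state $\tau$ (equivalently, the vector state $\langle\cdot\,\vb,\vb\rangle$ composed with the center-valued trace, or simply $\tau$ restricted to the block) assigns to the $j$-th block, normalized so $\sum_j\lambda_j=1$. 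Then for every word $w\in\mx$ one has $L(\Trace(w))=\tau(w(\uA_0))=\sum_j\lambda_j\Trace(w(\uA^{(j)}))$, and extending linearly over $\skinnyT$ and then over $\fatT$ gives $a_{\min}^{\II_1}=L(a)=a(\bigoplus_j\uA^{(j)})$ with the stated tracial state. The main obstacle I anticipate is the second paragraph's implication $\pi(\skinnyT)=\R\Rightarrow\pi(f)=f(\pi(\ux))$: one must carefully match the normalized trace on $\cH$ with the formal trace symbols, checking that the scalar value of $\pi(\Trace(w))$ is forced (not merely allowed) to be $\operatorname{tr}(w(\pi(\ux)))$ — this uses both that $\pi(\Trace(w))$ is central hence scalar under the hypothesis, and that $L$ being unital and tracial pins down which scalar it is.
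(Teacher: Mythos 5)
Your overall strategy is the same as the paper's: build the finite-dimensional GNS representation from the $\delta$-flat functional, use $\pi(\skinnyT)=\R$ to make the representation trace-compatible, verify feasibility of $\uA=(\hat\chi_{x_1},\dots,\hat\chi_{x_n})$, sandwich $L(a)$ between $a_{\min,d+\delta}$ and $a_{\min}^{\II_1}$, and finish with an Artin--Wedderburn block decomposition. One step, however, is stated incorrectly and would derail the feasibility check if taken literally: you assert that the scalar $\pi(\Trace(w))$ ``must equal the normalized trace of $w(\pi(\ux))$ on $\cH$,'' i.e.\ that $L(f)=f(\uA)$ computed in $(\mathcal{B}(\cH),\operatorname{tr})$ with $\operatorname{tr}$ the normalized trace of the $r$-dimensional space $\cH$. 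This is false in general. The hypothesis $\pi(\skinnyT)=\R$ forces $\pi(\Trace(w))$ to be the scalar $\langle\pi(\Trace(w))\vb,\vb\rangle=L(\Trace(w))=L(w)$, and nothing forces this to coincide with the normalized trace of $w(\uA)$ on $\cH$. (For $n=1$, let $L$ be evaluation at $\operatorname{diag}(1,1,2)$ with the normalized trace: $\cH$ is two-dimensional, $\cA\cong\R\oplus\R$, and the correct block weights are $2/3,1/3$, not $1/2,1/2$.) The correct statement --- the one the paper uses --- is that $\tau:\cA\to\R$, $q(\uA)\mapsto L(q)$ for $q\in\RX$, is a well-defined faithful tracial state on the algebra $\cA$ generated by the $\hat\chi_{x_j}$; well-definedness and traciality are exactly what $\pi(\skinnyT)=\R$ together with $L\circ\Trace=L$ buy you. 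Feasibility $s(\uA)=L(s)\ge0$ must then be checked in the tracial pair $(\cA,\tau)$, not in $\mathcal{B}(\cH)$ with its normalized trace. Your final paragraph implicitly corrects this by extracting the weights $\lambda_j$ from the vector state rather than from dimension counting, so the error is local and repairable, but as written your second paragraph verifies feasibility with respect to the wrong trace.

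Two smaller points. The equality $L(a)=a_{\min,d+\delta}$ does not follow merely from optimality of $L$ for \eqref{eq:pure_constr_primal}, since $a_{\min,d+\delta}$ is defined by the dual program \eqref{eq:pure_constr_dual}; you need the strong duality statement of Theorem \ref{th:pure_hierarchy}, exactly as the paper invokes it (weak duality alone only gives $a_{\min,d+\delta}\le L(a)$). Also, the SDP contains localizing constraints only for $N^k-\Trace(x_j^{2k})$, not for $N-x_j^2$, so the conclusion $N-\hat\chi_{x_j}^2\succeq0$ should be routed through the moment bounds $\tau(\hat\chi_{x_j}^{2k})\le N^k$ rather than read off ``directly from $S[N]$.''
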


}

\begin{proof}
For $i=1,\dots,n$ let $A_i = \hat \chi_{x_i}$ be the left multiplication by $x_i$ on $\cH$, i.e., for each $\fatT$-word $w \in \TX_{d}$, $A_i \w$ is the vector from $\cH$ corresponding to the label $x_i w$.
The operators $A_i$ are well-defined (thanks to the flatness assumption) and symmetric. After choosing an orthonormal basis of $\cH$ we can view $A_i$ as $r\times r$ symmetric matrices.
Let $\uA := (A_1,\dots,A_n)$, and let $\cA\subseteq \Mbb_r$ be the algebra generated by $A_1,\dots,A_n$. 
\revise{
Since $\pi(\skinnyT)=\R$, the map $\tau:\cA\to\R$ given by $q(\uA)\mapsto \pi(\Trace q)=L(q)$ for $q\in\RX$ is a well-defined faithful tracial state on $\cA$.
}
For each $s \in S$, one has $s(\uA) = \langle \pi(s) \vb , \vb \rangle = L(s) \geq 0$, where the last inequality follows from the fact that $\M_{d - d_s}^{\skinnyT}(s \, L) \succeq 0$ as $L$ is a feasible solution of SDP~\eqref{eq:pure_constr_primal}. 
Similarly, one has $N^{k} - \tau (A_j^{2k}) \geq 0 $, for all $j=1,\dots,n\,, k\le d$.
Therefore, $\uA \in \mathcal{D}_{S[N]}^{\trvN}$. 

Eventually, $ a_{\min,d + \delta} = L(a) \leq a_{\min}^{\twoone}$, where the first equality is the strong duality statement from Theorem \ref{th:pure_hierarchy}. In addition, one has $a_{\min}^{\twoone} \leq  a(\uA) = L(a)$, yielding the desired result \eqref{eq:pure_trace_flat}.

We get a tracial representation of the optimizer for $a_{\min}^{\twoone}$ by performing the Artin-Wedderburn block diagonalization on the algebra $\cA$. This step relies on the Wedderburn theorem \cite[Chapter 1]{Lam13}. By \cite[Proposition 1.68]{burgdorf16}, there are finitely many tuples of symmetric matrices $\uA^{(j)}$ and positive scalars $\lambda_j$ with $\sum_j\lambda_j=1$ such that
$$\tau(q(\uA))=\sum_j\lambda_j \Trace(q(\uA^{(j)}))$$
for all $q\in\RX$.
\end{proof}

\revise{
\begin{remark}
The condition $\pi(\skinnyT) = \R$ in Proposition \ref{prop:pure_trace_flat} in particular holds if $L$ is an extreme optimal solution of \eqref{eq:pure_constr_primal} (cf.~the last part of the proof in Appendix~\ref{app}).	
In practice modern SDP solvers rely on interior-point methods using the so-called ``self-dual embedding'' technique \cite[Chapter 5]{wolkowicz2012handbook}. 
Therefore, they will always converge towards an optimum solution of maximum rank, yielding an extreme linear functional; see \cite[\S4.4.1]{lasserre2008semidefinite} for more  details.
\end{remark}
}

\begin{remark}
Proposition \ref{prop:pure_trace_flat} guarantees that in \revise{the} presence of a flat extension, there is an optimizer for $a_{\min}^{\twoone}$ arising from a finite-dimensional tracial pair $(\cF,\tau)$; furthermore, the dimensions of $\uA^{(j)}$ and the scalars $\lambda_j$ explicitly determine $\cF$ and $\tau$, respectively. It is sensible to ask whether $a_{\min}=a_{\min}^{\twoone}$, that is, whether 
the optimum can be approximated arbitrarily well with a
 finite-dimensional factor, i.e., from $\cD_{S[N]}$. If there exist sequences of positive rational numbers $(\lambda_j^{(m)})_m$ such that $\sum_j\lambda_j^{(m)}=1$ for all $m\in\N$, $\lim_m \lambda_j^{(m)}=\lambda_j$ for all $j$, and
$\bigoplus_j \uA^{(j)} \in\cD_{S[N]}^\trvN$ whenever the tracial state is given by
\begin{equation}\label{e:rat}
w\left(\bigoplus_j \uA^{(j)}\right)\mapsto
\sum_j \lambda_j^{(n)} \Trace (w(\uA^{(j)})) \qquad \text{for } w\in\mx,
\end{equation}
then $a_{\min}=a_{\min}^{\twoone}$. Indeed, a finite-dimensional tracial pair with the rational-coefficient tracial state as in \eqref{e:rat} embeds into a finite-dimensional factor. However, in general $a_{\min}\neq a_{\min}^{\twoone}$ even if $\cD_S$ contains an $\varepsilon$-neighborhood of $0$ and $a_{\min}^{\twoone}$ admits a finite-dimensional optimizer; see the following example.
\end{remark}

\begin{example}
Fix $n=1$, i.e., $\emph{\skinnyT} = \R[\Trace(x_1^i)\mid i\in\N]$. For $k\in\N$ let
$$
s_k:=1+(\sqrt{2}+1)^2-\left(\Trace\left((x_1^2-2x_1)^2\right)+\left(\sqrt{2}-\Trace(x_1)\right)^2\right)\Trace(x_1^{2k}) \in\emph{\skinnyT}.
$$
Let $X$ be a symmetric matrix. Then $X^2\neq2X$ or $\Trace (X)\neq\sqrt{2}$. Furthermore, if $X$ is a contraction, then
$$0\preceq 2X-X^2\preceq I,\quad
|\sqrt{2}-\Trace(X)|\le \sqrt{2}+1,\quad
\Trace(X^{2k})\le1\ \text{ for all }k\in\N.$$
On the other hand, if $X$ is not a contraction, then there is $k\in\N$ such that
$$\Trace(X^{2k})>
\frac{1+(\sqrt{2}+1)^2}{\Trace\left((X^2-2X)^2\right)+\left(\sqrt{2}-\Trace(X)\right)^2}.
$$
Let $S=\{s_k\mid k\in\N \}$ and $a=-\Trace(x_1)$. Then $\cD_S=\cD_{1-x_1^2}$ by the above observations, and consequently $a_{\min}=-1$. On the other hand, consider the tracial pair $(\R^2,\tau)$ with $\tau(\xi_1,\xi_2)=\frac{1}{\sqrt{2}}\xi_1+(1-\frac{1}{\sqrt{2}})\xi_2$. Then $Y=(2,0)\in\R^2$ satisfies $Y^2=2Y$ and $\tau(Y)=\sqrt{2}$, so $Y\in \cD_S^{\emph{\trvN}}$. Therefore $a_{\min}^{\twoone}<a(Y)=-\sqrt{2}$.
\end{example}

The proof of Proposition \ref{prop:pure_trace_flat} gives the following procedure for minimizer extraction.
\begin{algorithm}
$\gns$
\label{algorithm:gns}
\begin{algorithmic}[1]
\Require an extreme $\delta$-flat linear $L : \emph{\skinnyT}_{2 d + 2 \delta}\to \R$  solution of~\eqref{eq:pure_constr_primal}.

\State Let us consider the set of $\fatT$-words $\{w_i\}$ of degree at most $ \leqslant d$, such that
$\mathscr{C}$, the matrix consisting of columns of $\M(L)$ indexed by the words $w_1,\dots,w_r$, has full rank. 
Assume $w_1 = 1$.
\State Let $\M(\hat L)$ be the principal submatrix of $\M(L)$ of columns and rows indexed by $w_1,\dots, w_r$.
\State Let $C$ be the Cholesky factor of $\M(\hat L)$, i.e., $C^T C = \M(\hat L)$.
\For{$i \in \{1,\dots, n\}$} 
\State Let $\mathscr{C}_i$ be the matrix consisting of columns of $\M(L)$ indexed by $x_i w_1,\dots,x_i w_r$.
\State Compute $\bar{A}_i$ as a solution of the system $\mathscr{C} \bar{A}_i = \mathscr{C}_i$.
\State Let $A_i = C \bar{A}_i C^{-1}$.
\EndFor
\State Compute $\vb = C \e_1$. \Comment{$e_1 = (1,0,\dots,0)$}
\State Let $\cA \subseteq \Mbb_r$ be the algebra generated by $A_1,\dots,A_n$.
Compute an orthogonal matrix  $Q$ performing the simultaneous block-diagonalization of $A_1,\dots,A_n$ by \cite[Algorithm 4.1]{MKKK10}.
\Comment{$Q^T \cA Q = \{ \Diag (B^{(1)},\dots, B^{(k)}) \mid B^{(i)} \in \cA_i \}$ where $\cA_1,\dots, \cA_k$ are simple $\star$-algebras over $\R$ }
\State Compute $Q^T A_i Q = \Diag (A_i^{(1)},\dots, A_i^{(k)})$ for each $i = 1, \dots, n$, 
	and $Q^T \vb = ((\vb^1)^T,\dots,(\vb^k)^T)^T$.
\State Compute $\lambda_j = \|\vb^j \|$, and $\uA^{(j)} = ( A_1^{(j)},\dots, A_n^{(j)}) $, for all $j = 1,\dots,k$.
\Ensure  $(\underline A^{(1)}, \dots, \underline A^{(k)})$  and $(\lambda_1,\dots, \lambda_k)$.
\end{algorithmic}
\end{algorithm}
The correctness of the procedure $\gns$ follows from the proof of Proposition~\ref{prop:pure_trace_flat}.
\begin{corollary}
\label{coro:pure_gns_sound}
The procedure $\gns$ described in Algorithm~\ref{algorithm:gns} is sound and returns the $n$-tuples $\underline{A}^{(j)}$ and $\lambda_j$ from Proposition~\ref{prop:pure_trace_flat}.
\end{corollary}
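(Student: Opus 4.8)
The plan is to verify, step by step, that Algorithm~\ref{algorithm:gns} is a concrete matrix realization of the constructions in the proof of Proposition~\ref{prop:pure_trace_flat}. First, the hypotheses match: since the input $L$ is an \emph{extreme} $\delta$-flat optimal solution of~\eqref{eq:pure_constr_primal}, the Remark following Proposition~\ref{prop:pure_trace_flat} (see also the last part of the proof in Appendix~\ref{app}) gives $\pi(\skinnyT)=\R$, so Proposition~\ref{prop:pure_trace_flat} applies. Write $\M(L):=\M_{d+\delta}^{\fatT}(L)$ and $r:=\rank\M(L)$. By $\delta$-flatness, $\M(L)$ has the same rank as its principal submatrix indexed by $\fatT$-words of degree $\le d$; hence the columns $w_1,\dots,w_r$ selected in Step~1 (with $w_1=1$) form a basis of the column space of $\M(L)$, and the principal submatrix $\M(\hat L)$ of Step~2 is a positive definite Gram matrix. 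Its Cholesky factor $C$ (Step~3) therefore identifies the finite-dimensional Hilbert space $\cH=\operatorname{span}\{\w\mid\deg w\le d\}$ of the proof, equipped with the form read off from $\M(L)$, with $\R^r$ carrying the standard inner product, the $i$-th column of $C$ corresponding to $\w_i$.

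Next I would check that Steps~4--9 recover the tuple $\uA=(A_1,\dots,A_n)$ and the cyclic vector $\vb$ of the proof. By flatness, every column of $\M(L)$ indexed by $x_iw_j$ lies in $\operatorname{span}\{w_1,\dots,w_r\}$, so the system $\mathscr{C}\bar A_i=\mathscr{C}_i$ in Step~6 is consistent and its solution $\bar A_i$ is exactly the matrix of the left multiplication $\hat\chi_{x_i}$ on $\cH$ in the (non-orthonormal) basis $\w_1,\dots,\w_r$. Because $x_i$ is symmetric, $\hat\chi_{x_i}$ is self-adjoint for the $\M(\hat L)$-form, so passing to the orthonormal basis via $A_i=C\bar A_iC^{-1}$ (Step~7) yields a \emph{symmetric} matrix, and $\uA$ coincides with the tuple built in the proof of Proposition~\ref{prop:pure_trace_flat}. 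Likewise $\vb=C\e_1$ (Step~9) represents $1\in\fatT$, hence $L(q)=\langle q(\uA)\vb,\vb\rangle$ for $q\in\RX$; since $\pi(\skinnyT)=\R$, the map $q(\uA)\mapsto L(q)$ is a well-defined faithful tracial state $\tau$ on the unital subalgebra $\cA\subseteq\Mbb_r$ generated by $A_1,\dots,A_n$, and $\uA\in\mathcal{D}_{S[N]}^{\trvN}$ with $a(\uA)=L(a)=a_{\min}^{\twoone}$, exactly as established there.

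Finally I would invoke the Artin--Wedderburn step verbatim from the end of that proof: Steps~10--12 apply the numerical simultaneous block-diagonalization of \cite[Algorithm~4.1]{MKKK10} to obtain an orthogonal $Q$ with $Q^T\cA Q=\{\Diag(B^{(1)},\dots,B^{(k)})\mid B^{(j)}\in\cA_j\}$ for simple $\star$-subalgebras $\cA_j$, set $Q^TA_iQ=\Diag(A_i^{(1)},\dots,A_i^{(k)})$ and split $Q^T\vb$ into blocks $\vb^j$; then the resulting scalars $\lambda_j$ and tuples $\uA^{(j)}=(A_1^{(j)},\dots,A_n^{(j)})$ are precisely the data furnished by \cite[Proposition~1.68]{burgdorf16} applied to $(\cA,\tau)$, so $\tau(q(\uA))=\sum_j\lambda_j\Trace(q(\uA^{(j)}))$ for $q\in\RX$ and $a_{\min}^{\twoone}=a\big(\bigoplus_j\uA^{(j)}\big)$ --- which are exactly the $\uA^{(j)}$ and $\lambda_j$ of Proposition~\ref{prop:pure_trace_flat}. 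I expect the only points requiring care to be bookkeeping rather than new ideas: confirming that conjugation by the Cholesky factor genuinely produces symmetric matrices (this is exactly where $\delta$-flatness, and not mere feasibility of $L$, enters) and that the weights $\lambda_j$ extracted from the blocks of $Q^T\vb$ are normalized so that $\sum_j\lambda_j=L(1)=1$ and reproduce the tracial state.
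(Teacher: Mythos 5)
Your proposal is correct and matches the paper's argument, which simply defers to the proof of Proposition~\ref{prop:pure_trace_flat}: your step-by-step verification (flatness gives a column basis among the degree $\le d$ words, the Cholesky factor orthonormalizes the Gram form, $\bar A_i$ is the matrix of left multiplication by $x_i$, extremality gives $\pi(\skinnyT)=\R$, and the Artin--Wedderburn block-diagonalization yields the $\uA^{(j)}$ and $\lambda_j$) is exactly the intended reading. Your closing caveat about normalization is well taken: as printed the algorithm sets $\lambda_j=\|\vb^j\|$, whereas the weights reproducing the tracial state are $\|\vb^j\|^2$ (so that $\sum_j\lambda_j=\|\vb\|^2=L(1)=1$), an apparent typo in the algorithm rather than a gap in your argument.
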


\begin{remark}
\label{rk:thin}
Note that when flatness occurs, Proposition \ref{prop:pure_trace_flat} guarantees convergence (actually stabilization) of our SDP hierarchy even if there is no $\varepsilon$-neighborhood of $0$ in the feasible set. Moreover, while a flat extension is evasive from a numerical point of view, an ``almost'' flat extension, which is a much more viable output of an SDP solver, is likely sufficient \cite{robust}.
\end{remark}

\subsection{SDP hierarchy for trace polynomial optimization}
\label{sec:constrained}

Here we describe the reduction from the general trace setting to the pure trace setting.

Let $S\subseteq\Sym\T$ and $N>0$. 
Denote 
\begin{align}
\label{eq:tildeD}
\widetilde{S}=\{\Trace(fsf^\star)\mid s\in S,\ f\in\T\}\subseteq \emph{\skinnyT}.
\end{align}
\begin{proposition}\label{prop:slacking}
Let $S\subseteq\Sym\T$, $N>0$, and let $\widetilde{S}$ be as in \eqref{eq:tildeD}.
Then $\mathcal{D}_{\widetilde{S}[N]}^{\cF,\tau}=\mathcal{D}_{S[N]}^{\cF,\tau}$ for any tracial pair $(\cF,\tau)$. Furthermore, the following are equivalent for $a\in\skinnyT$:
\begin{enumerate}[\rm (i)]
\item\label{it:k1} $a(\uX)\ge 0$ for all $\uX \in \mathcal{D}_{S[N]}^{\trvN}$; \vspace{0.25ex}
\item \label{it:k2} $a(\uX)\ge 0$ for all 
$\uX \in \mathcal{D}_{S[N]}^{\II_1}$; \vspace{0.25ex}
\item\label{it:k3} $a+\varepsilon \in \cM(\widetilde{S}(N))$ for all $\varepsilon>0$.
\end{enumerate}
\end{proposition}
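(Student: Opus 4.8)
The plan is to reduce Proposition~\ref{prop:slacking} to the already-established non-cyclic Positivstellensatz, Theorem~\ref{thm:nocyc}, by showing that replacing the trace-polynomial constraints $S$ by the pure-trace constraints $\widetilde S$ does not change the relevant semialgebraic set at the operator level. First I would prove the equality $\mathcal{D}_{\widetilde S[N]}^{\cF,\tau}=\mathcal{D}_{S[N]}^{\cF,\tau}$ for a fixed tracial pair $(\cF,\tau)$. The inclusion $\subseteq$ is essentially Proposition~\ref{p:poly}: if $\uX\in\mathcal D_{\widetilde S[N]}^{\cF,\tau}$, then for each $s\in S$ we have $\tau\big(f(\uX)\,s(\uX)\,f(\uX)^\star\big)=\tau\big(s(\uX)\,f(\uX)^\star f(\uX)\big)\ge0$ for all $f\in\T$; since every positive semidefinite contraction $Y\in\cF$ can be approximated (or exactly realized in the polynomial algebra generated by $\uX$) by elements of the form $f(\uX)^\star f(\uX)$ up to scaling — and more to the point, testing against all $\tau(s(\uX)\cdot)$-positivity against such products is exactly condition (ii) (or (iii)) of Proposition~\ref{p:poly} applied to the self-adjoint operator $s(\uX)$ — we conclude $s(\uX)\succeq0$, hence $\uX\in\mathcal D_{S[N]}^{\cF,\tau}$. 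Conversely, if $\uX\in\mathcal D_{S[N]}^{\cF,\tau}$ then $s(\uX)\succeq0$, so $f(\uX)s(\uX)f(\uX)^\star\succeq0$ and its trace is nonnegative, giving $\uX\in\mathcal D_{\widetilde S[N]}^{\cF,\tau}$. Taking unions over all tracial pairs yields $\mathcal D_{\widetilde S[N]}^{\trvN}=\mathcal D_{S[N]}^{\trvN}$, and over all $\II_1$-factors (using Proposition~\ref{p:embed}) yields $\mathcal D_{\widetilde S[N]}^{\II_1}=\mathcal D_{S[N]}^{\II_1}$.

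With this set equality in hand, the equivalence of \ref{it:k1}, \ref{it:k2}, \ref{it:k3} is immediate: the positivity statements \ref{it:k1} for $S[N]$ and \ref{it:k2} for $S[N]$ coincide with the corresponding statements for $\widetilde S[N]$, and since $\widetilde S\subseteq\skinnyT$ and $a\in\skinnyT$, Theorem~\ref{thm:nocyc} applies directly with $\widetilde S$ in place of $S$. That theorem gives the equivalence of $a\succeq0$ on $\mathcal D_{\widetilde S[N]}^{\trvN}$, of $a\succeq0$ on $\mathcal D_{\widetilde S[N]}^{\II_1}$, and of $a+\varepsilon\in\cM(\widetilde S(N))$ for all $\varepsilon>0$, which is exactly what is claimed. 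One should double-check that $\widetilde S[N]$ as built here (first forming $\widetilde S$, then adjoining the $N-x_j^2$) matches the hypotheses of Theorem~\ref{thm:nocyc}, i.e.\ that $\widetilde S(N)$ is the correct object; this is a bookkeeping point, since $(N-x_j^2)$ lives in $\T$ not $\skinnyT$, but in the reduction it is the pure-trace bounds $N^k-\Trace(x_j^{2k})$ coming from $S(N)$-type constructions that matter, and the argument from the proof of Theorem~\ref{thm:psatz} (deriving $N^k-\Trace(x_j^{2k})\in\cQ$ from $N-x_j^2\in\cQ$) is the relevant bridge.

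The main obstacle I expect is the $\subseteq$ direction of the set equality, specifically making rigorous the step ``$\tau(s(\uX)\,f(\uX)^\star f(\uX))\ge0$ for all $f\in\T$ implies $s(\uX)\succeq0$.'' The subtlety is that $f$ ranges over trace polynomials, so $f(\uX)^\star f(\uX)$ ranges over a potentially larger set than $\{p(\uX)^\star p(\uX):p\in\RX\}$, but what we actually need is only that the latter (or even just $\{\tau(s(\uX)\,p(s(\uX))^2):p\in\R[t]\}\ge0$) suffices — and that is precisely Proposition~\ref{p:poly}(iii)$\Rightarrow$(i). So the real content is to observe that for a univariate polynomial $p$, the operator $p(s(\uX))\in\cF$ lies in the algebra generated by $\uX$ with traces, hence arises as $g(\uX)$ for a suitable trace polynomial $g$ (indeed, $s(\uX)$ is itself of the form $g_0(\uX)$, and $p(s(\uX))=p(g_0(\uX))=(p\circ g_0)(\uX)$ is still in $\T$ evaluated at $\uX$, since $\T$ is closed under composition in this sense). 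Then $s(\uX)\,p(s(\uX))^2=h(\uX)$ with $h\in\T$ symmetric, and writing $h=f^\star f$ is generally not possible directly, so instead one applies the positivity hypothesis to $f$ ranging so that $\tau(s(\uX)\,p(s(\uX))^2)=\tau\big((p\circ g_0)(\uX)\,s(\uX)\,(p\circ g_0)(\uX)\big)\ge0$, which is exactly an instance of the $\widetilde S$-constraint with $f=(p\circ g_0)$. Invoking Proposition~\ref{p:poly} then closes the argument. Everything else is routine, and the author likely writes the proof as simply ``apply Proposition~\ref{p:poly} and Theorem~\ref{thm:nocyc}.''
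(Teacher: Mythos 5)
Your proposal is correct and follows essentially the same route as the paper: the set equality $\mathcal{D}_{\widetilde S[N]}^{\cF,\tau}=\mathcal{D}_{S[N]}^{\cF,\tau}$ is exactly what Proposition~\ref{p:poly} gives (via the test elements $f=p(s)$ with $p\in\R[t]$, which lie in $\T$ since $\T$ is an algebra), and the three equivalences then follow by applying Theorem~\ref{thm:nocyc} to $\widetilde S\subseteq\skinnyT$. The paper's proof is precisely this two-step citation, and your elaboration of the (iii)$\Rightarrow$(i) step of Proposition~\ref{p:poly} correctly identifies and resolves the only point needing care.
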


\begin{proof}
The equality $\mathcal{D}_{\widetilde{S}[N]}^{\cF,\tau}=\mathcal{D}_{S[N]}^{\cF,\tau}$ follows from Proposition \ref{p:poly}. Consequently, \ref{it:k1}$\Leftrightarrow$\ref{it:k2}$\Leftrightarrow$\ref{it:k3} holds by Theorem \ref{thm:nocyc}.
\end{proof}

For all $d\in\N$, one has 
$$\cM(\widetilde{S}(N))_d =\left\{
\sum_{i=1}^{K} a_i^2s_i\mid K\in\N,\ a_i\in\skinnyT,\ s_i \in \widetilde{S}(N),\ \deg(a_i^2s_i)\le 2d
\right\}.$$
Therefore, elements of $\cM(\widetilde{S}(N))_d$ corresponds to sums of elements of the form
\begin{align}
\label{eq:elementstildeM}
\Trace (f_1 \, s \, f_1^\star) \,, \quad a^2 \big(N^{k} - \Trace (x_j^{2k})\big) \,, \quad \Trace (f_2 f_2^\star) \,,
\end{align}
which are of degree at most $2d$, for $f_i \in \fatT$, $a \in \skinnyT$, $s \in S$, $1\le j\le n$, $k\in\N$.

As in Section~\ref{sec:pure_constrained},  given $a \in \skinnyT$, 
one can under-approximate $a_{\min}^{\twoone}$ via the following hierarchy of SDP programs, indexed by $d \geq d_{\min}$: 
\begin{align}
\label{eq:tilde_constr_dual}
\widetilde a_{\min,d} = \sup \{ m \mid a - m \in \cM(\widetilde{S}(N))_d \} \,.
\end{align}
The dual of \eqref{eq:tilde_constr_dual} is obtained by replacing the pure trace localizing matrix constraints in SDP~\eqref{eq:pure_constr_primal} by trace localizing matrix constraints associated to each $s \in S$:
\begin{equation}
\label{eq:tilde_constr_primal}
\begin{aligned}
\inf_{\substack{L : \text{T}_{2 d} \to \R \\ L \text{ linear}}} \quad  & L(a)  \\	
\emph{s.t.} 
\quad & (\M_d^\fatT(L))_{u,v} = (\M_d^\fatT(L))_{w,z}  \,, \quad \text{whenever } \Trace(u^\star v) =  \Trace( w^\star z) \,, \\
\quad & (\M_d^\fatT(L))_{1,1} = 1 \,, \\
\quad & \M_d^{\fatT}(L) \succeq 0 \,, \\
\quad & \M_{d - d_s}^{\fatT}(s \, L) \succeq 0  \,,  \quad \text{for all }  s \in S  \,,\\
\quad & \M_{d - k}^{\text{T}}( (N^{k} - \Trace (x_j^{2k}) ) \, L) \succeq 0  \,,  \quad \text{for all }  j=1,\dots,n\,, k\le d  \,.
\end{aligned}
\end{equation}
As in Theorem~\ref{th:pure_hierarchy}, one can prove that if $\mathcal{D}_S$ contains an $\varepsilon$-neighborhood of 0, then there is no duality gap between SDP \eqref{eq:tilde_constr_primal} and SDP \eqref{eq:tilde_constr_dual}. 
In addition, the hierarchy of SDP programs \eqref{eq:tilde_constr_dual} provides a sequence of lower bounds monotonically converging to $a_{\min}^{\II_1}$.

\revise{\begin{remark}
There are several variations of \eqref{eq:tilde_constr_primal} that lead to $a_{\min}^{\II_1}$.
For example, one can replace the $d$ matrix inequalities 
$\M_{d - k}^{\text{T}}( (N^{k} - \Trace (x_j^{2k}) ) \, L) \succeq 0$ (for a fixed $j$) with a single matrix inequality 
$\M_{d - 1}^{\fatT}( (N - x_j^2 ) \, L) \succeq 0$
since
$$N^{k} - \Trace (x_j^{2k}) = \Trace\left(
\sum_{i=1}^{k-1} (\sqrt{N}^{k-1-i}x_j^i)(N-x_j^2)(\sqrt{N}^{k-1-i}x_j^i)
\right).$$
While this modification produces a somewhat simpler-looking SDP, note that the new matrix constraint has size $\bsigma^{\fatT} (n,d - 1)$, while the combined size of the replaced constraints equals
$\sum_{k=1}^d\bsigma^{\skinnyT} (n,k)$, which is less than $\bsigma^{\fatT} (n,d - 1)$.

In practice, when given a concrete set of constraints $S$, one should attempt to simplify \eqref{eq:tilde_constr_primal} before solving it, since its most general form can contain superfluous inequalities with respect to $S$.
\end{remark}}

Finally, the next result provides an alternative characterization of (not necessarily pure) trace polynomials positive on tracial semialgebraic sets (cf. Corollary \ref{cor:linslackpsatz}).
\begin{proposition}\label{prop:slackingfat}
Let $S\subseteq\Sym\T$, $N>0$, and let $\widetilde{S}$ be as in \eqref{eq:tildeD}.
For $a \in \Sym\T$, the following are equivalent:
\begin{enumerate}[\rm (i)]
	\item\label{it:l1} $a(\uX)\succeq 0$ for all 
	$\uX \in \mathcal{D}_{S[N]}^{\trvN}$;\vspace{0.25ex}
	\item\label{it:l2} $a(\uX)\succeq 0$ for all 
	$\uX \in \mathcal{D}_{S[N]}^{\II_1}$;\vspace{0.25ex}
	\item\label{it:l3} for every $\varepsilon>0$, there exist sums of (two) squares $s_1,s_2\in\R[t]$ such that
	\begin{equation}\label{e:woy1}
		a=s_1(a)-s_2(a),\qquad \varepsilon-\Trace(s_2(a))\in \cM(\widetilde{S}(N));
	\end{equation}
	\item\label{it:l4} for every $\varepsilon>0$, there exist sums of (two) squares $s_1,s_2\in\R[t]$ and $q\in \cM(\widetilde{S}(N))$ such that
	\begin{equation}\label{e:wy1}
	\Trace(ay)+\varepsilon =\Trace(s_1(a)y+s_2(a)(1-y))+q
	\end{equation}
	where $y$ is an auxiliary symmetric free variable.
\end{enumerate}
\end{proposition}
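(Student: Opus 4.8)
The plan is to reduce Proposition~\ref{prop:slackingfat} to the combination of Proposition~\ref{prop:slacking} and Corollary~\ref{cor:linslackpsatz}. The idea is that passing from $S$ to $\widetilde{S}$ is exactly what converts the cyclic quadratic module picture of Corollary~\ref{cor:linslackpsatz} into the ordinary quadratic module $\cM(\widetilde{S}(N))$, because of part~(3) of Lemma~\ref{lemma:cycgen}, which identifies $\Trace(\cQ(S))$ with sums of terms $\Trace(h_1h_1^\star)\cdots\Trace(h_\ell h_\ell^\star)\Trace(h_0sh_0^\star)$.

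First I would handle the equivalence \ref{it:l1}$\Leftrightarrow$\ref{it:l2}, which is immediate from Proposition~\ref{p:embed}, exactly as in the proofs of Corollary~\ref{cor:linslackpsatz} and Proposition~\ref{prop:slacking}. Next I would establish \ref{it:l3}$\Leftrightarrow$\ref{it:l4} by the same bookkeeping trick used in Corollary~\ref{cor:linslackpsatz}: multiplying the first identity in \eqref{e:woy1} by $y$ and taking the trace gives $\Trace(ay)+\varepsilon=\Trace(s_1(a)y+s_2(a)(1-y))+\bigl(\varepsilon-\Trace(s_2(a))\bigr)$, so one takes $q=\varepsilon-\Trace(s_2(a))\in\cM(\widetilde{S}(N))$; conversely substituting $y=1$ in \eqref{e:wy1} recovers $\varepsilon-\Trace(s_2(a))=q+\Trace(s_2(a))-\Trace(s_2(a))\in\cM(\widetilde{S}(N))$ after using $a=s_1(a)-s_2(a)$, which one also gets by comparing the $y$-linear parts. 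The core of the argument is then \ref{it:l1}$\Leftrightarrow$\ref{it:l3}. For this I would consider the archimedean cyclic quadratic module $\cQ:=\cQ(S\cup\{N-x_1^2,\dots,N-x_n^2\})\subseteq\Sym\T$; by Proposition~\ref{prop:cyclicqmarch} it is archimedean, and by construction $\mathcal{D}_\cQ^{\trvN}=\mathcal{D}_{S[N]}^{\trvN}$ and $\mathcal{D}_\cQ^{\II_1}=\mathcal{D}_{S[N]}^{\II_1}$. Applying Corollary~\ref{cor:linslackpsatz} to this $\cQ$ turns \ref{it:l1} into the statement that for each $\varepsilon>0$ there are sums of two squares $s_1,s_2\in\R[t]$ with $a=s_1(a)-s_2(a)$ and $\varepsilon-\Trace(s_2(a))\in\cQ$. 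The remaining task is to show $\Trace(\cQ)\subseteq\cM(\widetilde{S}(N))$ and, conversely, that replacing the membership target $\cQ$ by $\cM(\widetilde{S}(N))$ does not lose anything on the "only if'' side; the inclusion $\cM(\widetilde{S}(N))\subseteq\Trace(\cQ)$ follows because every generator of $\widetilde{S}(N)$ — namely $\Trace(fsf^\star)$, $\Trace(pp^\star)$, and $N^k-\Trace(x_j^{2k})$ (the last one lying in $\cQ$ by the telescoping identity already used in the proof of Theorem~\ref{thm:psatz}) — is in $\Trace(\cQ)$, and $\Trace(\cQ)$ is a quadratic module in $\skinnyT$ by Lemma~\ref{lemma:cycgen}(3); the reverse inclusion $\Trace(\cQ)\subseteq\cM(\widetilde{S}(N))$ is the description in Lemma~\ref{lemma:cycgen}(3) together with the fact that products of traces $\Trace(h_ih_i^\star)$ can be absorbed as squares in $\skinnyT$ (using $b^2\Trace(pp^\star)=\Trace((bp)(bp)^\star)$, already noted in Section~\ref{sec:pure_constrained}), so that $\Trace(s_2(a))$ lands in $\cM(\widetilde{S}(N))$ precisely when it lands in $\Trace(\cQ)$.

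For the converse direction \ref{it:l3}$\Rightarrow$\ref{it:l1} I would argue directly, without even invoking $\cQ$: if $\uX\in\mathcal{D}_{S[N]}^{\cF,\tau}$ for a tracial pair $(\cF,\tau)$ and $Y\in\cF$ is a positive semidefinite contraction, then evaluating \eqref{e:wy1} at $(\uX,Y)$ gives $\tau(a(\uX)Y)+\varepsilon\ge 0$ because every summand of $q\in\cM(\widetilde{S}(N))$ evaluates nonnegatively (here one uses $\mathcal{D}_{\widetilde{S}[N]}^{\cF,\tau}=\mathcal{D}_{S[N]}^{\cF,\tau}$ from Proposition~\ref{prop:slacking}, so that $\tau(f(\uX)s(\uX)f(\uX)^\star)\ge0$), and $\tau\bigl((s_1(a(\uX))-a(\uX))Y+s_2(a(\uX))(1-Y)\bigr)$, etc.; letting $\varepsilon\to0$ yields $\tau(a(\uX)Y)\ge0$ for all such $Y$, hence $a(\uX)\succeq0$ by Proposition~\ref{p:poly}. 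The main obstacle is the careful verification that $\Trace(\cQ)$ and $\cM(\widetilde{S}(N))$ coincide (or at least that membership of $\Trace(s_2(a))$ in one is equivalent to membership in the other), i.e.\ correctly tracking how the cyclic-quadratic-module generators and their products reorganize into ordinary-quadratic-module generators via Lemma~\ref{lemma:cycgen}(3); everything else is a routine transcription of the arguments already given for Corollary~\ref{cor:linslackpsatz}, Theorem~\ref{thm:psatz}, and Proposition~\ref{prop:slacking}.
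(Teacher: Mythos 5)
Your reductions \ref{it:l1}$\Leftrightarrow$\ref{it:l2}, \ref{it:l3}$\Leftrightarrow$\ref{it:l4} and the direction \ref{it:l4}$\Rightarrow$\ref{it:l1} are fine and agree with the paper. The gap is in \ref{it:l1}$\Rightarrow$\ref{it:l3}, specifically in the claimed inclusion $\Trace(\cQ)\subseteq\cM(\widetilde{S}(N))$ for $\cQ=\cQ(S[N])$. By Lemma \ref{lemma:cycgen}(3) a typical element of $\Trace(\cQ)$ is $\Trace(h_1h_1^\star)\cdots\Trace(h_\ell h_\ell^\star)\Trace(h_0sh_0^\star)$, and to place it in $\cM(\widetilde{S}(N))$ you must multiply the generator $\Trace(h_0sh_0^\star)$ by the prefactor $\prod_i\Trace(h_ih_i^\star)$; but an ordinary quadratic module in $\skinnyT$ only tolerates multiplication by squares, and $\Trace(hh^\star)$ is not a square in the polynomial ring $\skinnyT$ (e.g.\ $\Trace(x_1^2)$ is an indeterminate). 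The identity $b^2\Trace(pp^\star)=\Trace((bp)(bp)^\star)$ you invoke goes the wrong way: it converts a square times a generator into a trace of a hermitian square, not a trace of a hermitian square into a square. This is precisely the failure of closure under products that the paper emphasizes right after Lemma \ref{lemma:cycgen}; only the inclusion $\cM(\widetilde{S}(N))\subseteq\Trace(\cQ)$ holds, and that is the direction you do not need.

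The paper sidesteps $\cQ$ entirely: it reruns the argument of Corollary \ref{cor:linslackpsatz}\,\ref{it:two1}$\Rightarrow$\ref{it:three1} with $\cM(\widetilde{S}(N))$ as the target. After rescaling so that $0\preceq a\preceq 1$ on $\mathcal{D}_{S[N]}^{\trvN}$ and taking $s_1,s_2$ from Lemma \ref{l:univar} together with $s_3,s_4,s_5$ satisfying $\tfrac{\varepsilon}{2}-s_2=s_3+s_4t+s_5(1-t)$, the pure trace polynomial $\Trace\bigl(s_3(a)+s_4(a)a+s_5(a)(1-a)\bigr)$ is nonnegative on $\mathcal{D}_{S[N]}^{\trvN}$, so Proposition \ref{prop:slacking} (rather than Theorem \ref{thm:psatz}) certifies that adding $\tfrac{\varepsilon}{2}$ puts it into $\cM(\widetilde{S}(N))$; this sum equals $\varepsilon-\Trace(s_2(a))$. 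Your route can be salvaged in the same spirit: apply Corollary \ref{cor:linslackpsatz} with $\tfrac{\varepsilon}{2}$, keep from the conclusion only that $\tfrac{\varepsilon}{2}-\Trace(s_2(a))\ge0$ on $\mathcal{D}_{S[N]}^{\trvN}$ (discarding its membership in $\cQ$), and then invoke Proposition \ref{prop:slacking} to get $\varepsilon-\Trace(s_2(a))\in\cM(\widetilde{S}(N))$. The moral is that passing from the cyclic cone to the ordinary quadratic module costs an extra $\tfrac{\varepsilon}{2}$ through the Positivstellensatz; it is not an algebraic identity of cones.
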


\begin{proof}
\ref{it:l1}$\Leftrightarrow$\ref{it:l2} holds by Proposition \ref{p:embed}, and \ref{it:l3}$\Rightarrow$\ref{it:l4} follows by taking $q=\varepsilon-\Trace(s_2(a))\in \cM(\widetilde{S}(N))$. Furthermore, \ref{it:l4}$\Rightarrow$\ref{it:l1} holds by Propositions \ref{p:poly} and \ref{prop:slacking}.

\ref{it:l1}$\Rightarrow$\ref{it:l3} There is $N'>0$, dependent on $N$ and $a$, such that $N'-a\succeq0$ on $\mathcal{D}_{S[N]}^{\trvN}$. After rescaling $a$ we can without loss of generality assume that $1-a\succeq0$ on $\mathcal{D}_{S[N]}^{\trvN}$. Suppose that \ref{it:l1} holds. Given an arbitrary $\varepsilon>0$ let $s_1,s_2\in\R[t]$ be sums of squares as in Lemma \ref{l:univar}. Then there are sums of squares $s_3,s_4,s_5\in\R[t]$ such that
$$s_1-s_2=t,\qquad \tfrac{\varepsilon}{2}-s_2 =s_3+s_4t+s_5(1-t).$$
By Propositions \ref{p:poly} and \ref{prop:slacking} we have
$$\Trace(s_3(a)+s_4(a)a+s_5(a)(1-a))+\tfrac{\varepsilon}{2}\in \cM(\widetilde{S}(N)).$$
Therefore
\[a=s_1(a)-s_2(a),\qquad \varepsilon-\Trace(s_2(a))\in \cM(\widetilde{S}(N)). \qedhere\]

\end{proof}
Note that Proposition \ref{prop:slackingfat} allows one to certify that a given trace polynomial is positive semidefinite on a tracial semialgebraic set.
Constructing a hierarchy of SDP programs converging to the minimal eigenvalue of trace polynomials is postponed for future work. As Example \ref{ex:unclean} indicates, it cannot be simply derived from our scheme for the pure trace polynomial objective function; namely, the norm of an operator cannot be uniformly estimated with traces in a dimension-free way.

\revise{
\section{Examples and applications}
\label{sec:new}

In this section we present some experimental results indicating the strength and computational aspects of the SDP hierarchy in Section \ref{sec:hierarchy}.
First we give a toy example of optimizing a pure trace polynomial over all projections in tracial von Neumann algebras (Section \ref{sec:toy}).
Next we describe how our algorithms can be used for finding upper bounds on quantum violations of polynomial Bell inequalities in quantum information theory (Section \ref{sec:bell}).
While the SDPs were solved using SeDuMi in Matlab, the sparse input matrices were constructed with Mathematica.
	
\subsection{A toy example}
\label{sec:toy}

\def\rr{{\rm r}}

Consider the optimization problem
\begin{equation}
\label{eq:toy}
\begin{aligned}
\inf \quad  & \tau(X_1X_2X_3)+\tau(X_1X_2)\tau(X_3)  \\	
\emph{s.t.} 
\quad & X_j^2=X_j^*=X_j \quad \text{for}\ j=1,2,3.
\end{aligned}
\end{equation}
over triples $(X_1,X_2,X_3)$ of operators in tracial pairs $(\cF,\tau)$.
Note that if $\cF$ is a commutative von Neumann algebra with a tracial state $\tau$ and $X_1,X_2,X_3\in\cF$ are projections, then $\tau(X_1X_2X_3),\tau(X_1X_2),\tau(X_3)\ge0$. Hence if \eqref{eq:toy} were restricted only to commutative von Neumann algebras, the solution would be $0$. On the other hand, the projections
$$X_1=\begin{pmatrix}
1 & 0 \\ 0 & 0 
\end{pmatrix},\quad
X_2=\begin{pmatrix}
\tfrac{1}{16} & \tfrac{\sqrt{15}}{16} \\[1mm] \tfrac{\sqrt{15}}{16} & \tfrac{15}{16} 
\end{pmatrix},\quad
X_3=\begin{pmatrix}
\tfrac{3}{8} & -\tfrac{\sqrt{15}}{8} \\[1mm] -\tfrac{\sqrt{15}}{8} & \tfrac{5}{8} 
\end{pmatrix}
$$
give
$$\Trace(X_1X_2X_3)+\Trace(X_1X_2)\Trace(X_3)=-\frac{1}{32}.$$
We next show that $-\frac{1}{32}$ is actually the solution of \eqref{eq:toy}.

Let $n=3$, $a=\Trace(x_1x_2x_3)+\Trace(x_1x_2)\Trace(x_3)$ and $S=\{x_j^2-x_j,x_j-x_j^2\colon j=1,2,3\}$.
By Section \ref{sec:constrained}, the solution of \eqref{eq:toy} equals $\lim_{n\to\infty}\check{a}_{\min,d}$, where $\check{a}_{\min,d}$ is the solution of \eqref{eq:pure_constr_primal} for $d\ge2$. In this particular example, the constraints can be used to vastly simplify \eqref{eq:pure_constr_primal}. Namely, it suffices to consider only tracial words without consecutive repetitions of $x_j$; furthermore, the last two lines in \eqref{eq:pure_constr_primal} are then superfluous. To state this concretely, let us introduce some auxiliary notation.

A $\fatT$-word is {\it square--reduced} if no proper powers of $x_1,x_2,x_3$ appear in it. To each $\fatT$-word $w$ we can assign the square--reduced $\fatT$-word $\rr(w)$ by repeatedly replacing $x_j^2$ with $x_j$. Let $\W^\rr_d$ be the vector of all square--reduced $\fatT$-words of tracial degree at most $d$, and let $R_d$ be the span of entries of $\W^\rr_d$. 
Given a linear functional $L:R_{2d}\to\R$, the {\it square--reduced tracial Hankel matrix} $\M_d^\rr(L)$ 
is indexed by $\W^\rr_d$ and $(\M_d^\rr(L))_{u,v}=L(\Trace(\rr(u^\star v)))$. Then $\check{a}_{\min,d}$ is the solution of the SDP
\begin{equation}
\label{eq:toy_sdp}
\begin{aligned}
\inf_{\substack{L : R_{2 d} \to \R \\ L \text{ linear}}} \quad  & L(a)  \\	
\emph{s.t.} 
\quad & (\M_d^\rr(L))_{u,v} = (\M_d^\rr(L))_{w,z}  \,, \quad \text{whenever } \Trace(u^\star v) =  \Trace( w^\star z) \,, \\
\quad & (\M_d^\rr(L))_{1,1} = 1 \,, \\
\quad & \M_d^\rr(L) \succeq 0. %\, \\
\end{aligned}
\end{equation}

We start with $d=2$. The matrix $\M_2^\rr(L)$ is indexed by reduced tracial words
\begin{align*}
& 1,x_1,x_2,x_3,\Trace(x_1),\Trace(x_2),\Trace(x_3),\\
& x_1x_2,x_2x_1,x_1x_3,x_3x_1,x_2x_3,x_3x_2,\Trace(x_1x_2),\Trace(x_1x_3),\Trace(x_2x_3), \\
& \Trace(x_1)x_1,\Trace(x_1)x_2,\Trace(x_1)x_3,
\Trace(x_2)x_1,\Trace(x_2)x_2,\Trace(x_2)x_3,
\Trace(x_3)x_1,\Trace(x_3)x_2,\Trace(x_3)x_3, \\
& \Trace(x_1)^2,\Trace(x_2)^2,\Trace(x_3)^2,
\Trace(x_1)\Trace(x_2),\Trace(x_1)\Trace(x_3),\Trace(x_2)\Trace(x_3).
\end{align*}
The SDP \eqref{eq:toy_sdp} minimizes over $31\times 31$ positive semidefinite matrices subject to $881$ linear equations in their entries. By solving it we get $\check{a}_{\min,2}=-0.0467$.

In the next step we have $d=3$, and $\M_3^\rr(L)$ is a $108\times 108$ matrix with $11270$ linear relations. Now the solution of \eqref{eq:toy_sdp} is $\check{a}_{\min,3}=-0.0312$, which up to floating point precision agrees with $-\frac{1}{32}$. Since $\check{a}_{\min,3}$ is a lower bound for the solution of \eqref{eq:toy} and is attained by the $2\times 2$ projections above, we conclude that $-\frac{1}{32}$ is the solution of \eqref{eq:toy}.

Similar optimization problem can be used for detecting quantum entanglement \cite{horodecki,bardet}. Namely, 
trace polynomials are in correspondence with invariant operators \cite{huber2020positive}, and pure trace polynomials, positive subject to certain constraints, then relate to invariant operators positive on separable Werner states \cite{werner,eggeling}.
Pure trace polynomial optimization can be thus used to efficiently produce entanglement witnesses for Werner states, which is a work in preparation.

\subsection{Polynomial Bell inequalities}
\label{sec:bell}

\def\cov{\operatorname{cov}}

In this section we connect trace polynomial optimization to violations of nonlinear Bell inequalities, outline a few examples and prove the optimal bound on maximal violation of the covariance Bell inequality considered in \cite{PHBB}, see Example \ref{exa2} below.

As a prelude, the classical Bell inequality states that
\begin{equation}\label{e:bell1}
\psi^* (A_1\otimes B_1+A_1\otimes B_2+A_2\otimes B_1-A_2\otimes B_2)\psi
\end{equation}
is at most $2$
for all separable states $\psi \in \C^k\otimes \C^k$ and 
$A_j,B_j \in \Mbb_k$ satisfying $A_j^*=A_j$, $A_j^2=I$, $B_j^*=B_j$, $B_j^2=I$.
Tsirelson's bound implies that \eqref{e:bell1} is at most $2\sqrt{2}$ when arbitrary states are allowed. Moreover, the maximal value $2\sqrt{2}$ is attained when $k=2$ and 
$\psi= \frac{1}{\sqrt{2}}(e_1\otimes e_1+e_2\otimes e_2)$.
In general, if $\psi_k$ is the generalized Bell state,
$$\psi_k= \frac{1}{\sqrt{k}}\sum_{j=1}^k e_j\otimes e_j \in \R^k\otimes \R^k,$$
which is a maximally entangled bipartite state on $\C^k\otimes \C^k$, unique up to bipartite unitary equivalence,
then
\begin{equation}\label{e:ghz}
\psi_k^*(X\otimes Y)\psi_k = \Trace(XY)
\end{equation}
for all $X,Y\in \Sbb_k$. Therefore Tsirelson's bound for \eqref{e:bell1} on maximally entangled states can be recovered as a pure trace polynomial optimization problem
$$\sup \ \Trace(x_1y_1)+\Trace(x_1y_2)+\Trace(x_2y_1)-\Trace(x_2y_2)\ \text{ s.t. }\ x_j^2=1,y_j^2=1.$$
Note that for finding the maximal violation for arbitrary $k\in\N$, there is no loss of generality if only symmetric matrices are considered instead of hermitian ones, since
$$\psi_k^*(Z\otimes W)\psi_k = \Trace(Z\overline{W})
=\Trace\left(
\begin{pmatrix}
\frac{Z+\overline{Z}}{2} & \frac{\overline{Z}-Z}{2i} \\
\frac{Z-\overline{Z}}{2i} & \frac{Z+\overline{Z}}{2} 
\end{pmatrix}
\begin{pmatrix}
\frac{W+\overline{W}}{2} & \frac{W-\overline{W}}{2i} \\
\frac{\overline{W}-W}{2i} & \frac{W+\overline{W}}{2} 
\end{pmatrix}
\right)$$
for all hermitian $k\times k$ matrices $Z,W$.

Upper bounds on quantum violations of linear Bell inequalities can be found using the NPA hierarchy \cite{navascues2008convergent} for eigenvalue optimization of noncommutative polynomials; for example, one can get Tsirelson's bound on violations of \eqref{e:bell1} by eigenvalue-optimizing $a_1b_1+a_1b_2+a_2b_1-a_2b_2$ subject to $a_j^2=b_j^2=1$ and $[a_i,b_j]=0$.

On the other hand, bilocal models \cite{BRGP,Chaves}, covariance of quantum correlations \cite{PHBB} and detection of partial separability \cite{Uffink} lead to more general {\em polynomial} Bell inequalities. While linear Bell inequalities are linear in expectation values of (products of) observables, polynomial Bell inequalities contain multivariate polynomials in expectation values of (products of) observables. For this reason, noncommutative polynomial optimization is not suitable for studying violations of nonlinear Bell inequalities. 
In contrast, trace polynomial optimization gives upper bounds on violations of polynomial Bell inequalities, at least for certain families of states, e.g. the maximally entangled bipartite states via \eqref{e:ghz}. We demonstrate this with the following examples.

\subsubsection{Example}

Consider a simple quadratic Bell inequality
\begin{equation}\label{e:bell2}
\big(\psi^* (A_1\otimes B_2+A_2\otimes B_1)\psi\big)^2+
\big(\psi^* (A_2\otimes B_1-A_2\otimes B_2)\psi\big)^2
\le4
\end{equation}
given in \cite{Uffink}, where it is shown that \eqref{e:bell2} holds for all separable states $\psi$, and for all 2-dimensional states (i.e. all states when $k=2$). In \cite{NKI}, \eqref{e:bell2} is shown to hold for arbitrary states, meaning it admits no quantum violations. An alternative automatized proof of \eqref{e:bell2} for maximally entangled states of arbitrary dimension can be obtained by solving the optimization problem
\begin{equation}\label{e:bell3}
\sup\ 
\left(\Trace(x_1y_2+x_2y_1)\right)^2+
\left(\Trace(x_1y_1-x_2y_2)\right)^2 
\ \text{ s.t. }\ x_j^2=1,y_j^2=1 \text{ for }j=1,2
\end{equation}
Let $S=\{\pm(1-x_j^2),\pm(1-y_j^2)\colon j=1,2 \}$.
The relaxation of \eqref{e:bell3} with $d=2$ as in Section \ref{sec:constrained},
\begin{equation}\label{e:bell31}
\inf\ \mu \ \text{ s.t. }\ 
\mu-\left(\Trace(x_1y_2+x_2y_1)\right)^2-
\left(\Trace(x_1y_1-x_2y_2)\right)^2 \in \mathcal{M}(S(1))_d
\end{equation}
outputs $4$, which coincides with the classical value in \eqref{e:bell2}. The concrete implementation of \eqref{e:bell31} encodes the relations $x_j^2=y_j^2=1$ directly in the SDP using reduced words analogously as in the toy Example \ref{sec:toy} (where projections were considered).

\subsubsection{Example}\label{exa2}

Another class of polynomial Bell inequalities arises from covariances of quantum correlations. Let
$$\cov_\psi(X,Y) = 
\psi^*(X\otimes Y)\psi-\psi^*(X\otimes I)\psi\cdot\psi^*(I\otimes Y)\psi$$
In \cite{PHBB} it is shown that while
\begin{equation}\label{e:bell4}
\begin{split}
&\cov_\psi(A_1,B_1)+\cov_\psi(A_1,B_2)+\cov_\psi(A_1,B_3) \\
+&\cov_\psi(A_2,B_1)+\cov_\psi(A_2,B_2)-\cov_\psi(A_2,B_3)\\
+&\cov_\psi(A_3,B_1)-\cov_\psi(A_3,B_2)
\end{split}
\end{equation}
is at most $\frac92$ for separable states $\psi$, it attains the value 5 with the Bell state $\psi_2$. The authors also performed extensive numerical search within entangled states for local dimensions $k\le 5$, but no higher value of \eqref{e:bell4} was found. They leave it as an open question whether higher dimensional entangled states could lead to larger violations \cite[Appendix D.1(b)]{PHBB}.

Let
\begin{align*}
a=&\Trace(x_1y_1)-\Trace(x_1)\Trace(y_1)+\Trace(x_1y_2)-\Trace(x_1)\Trace(y_2)+\Trace(x_1y_3)-\Trace(x_1)\Trace(y_3) \\
&+\Trace(x_2y_1)-\Trace(x_2)\Trace(y_1)+\Trace(x_2y_2)-\Trace(x_2)\Trace(y_2)-\Trace(x_2y_3)+\Trace(x_2)\Trace(y_3) \\
&+\Trace(x_3y_1)-\Trace(x_3)\Trace(y_1)-\Trace(x_3y_2)+\Trace(x_3)\Trace(y_2)\,.
\end{align*}
The relaxation of
\begin{equation}\label{e:bell5}
\sup\ a\ \text{ s.t. }\  x_j^2=1,y_j^2=1 \text{ for }j=1,2,3
\end{equation}
with $d=2$ returns 5. Therefore the value of \eqref{e:bell4} is at most 5 for every maximally entangled state, regardless of the local dimension $k$.

\subsubsection{Example}

A family of eight quadratic Bell inequalities (arising from linear ones via elimination) corresponding to a bilocal model for three parties $A,B,C$ is given in \cite{Chaves}:
\begin{equation}\label{e:bell6}
-\frac18 (J_1\pm J_2)^2-(\pm J_1\pm J_2+2) \le 0\,,
\end{equation}
where
\begin{align*}
J_1&=\sum_{i,j=1}^2 {\psi'}^*(A_i\otimes B'_1)\psi'\cdot {\psi''}^*(B''_1\otimes C_j)\psi''\,,\\
J_2&=\sum_{i,j=1}^2 (-1)^{i+j}{\psi'}^*(A_i\otimes B'_2)\psi'\cdot {\psi''}^*(B''_2\otimes C_j)\psi''\,,
\end{align*}
and $A_i,B'_j,B''_j,C_j$ are projections. Inequalities \eqref{e:bell6} are valid for every pair of separable states $\psi',\psi''$, and are equivalent to $\sqrt{|J_1|}+\sqrt{|J_2|}\le 1$ as derived in \cite{BRGP}. An upper bound on violations of \eqref{e:bell6} for maximally entangled shared states $\psi',\psi''$ is given by 
\begin{equation}\label{e:bell7}
\sup\ a\ \text{ s.t. }\  x_j^2=x_j,{y'}_j^2=y'_j,{y''}_j^2=y''_j,z_j^2=z_j \text{ for }j=1,2
\end{equation}
where
\begin{align*}
a=& -\frac18 \left(\sum_{i,j} \Trace(x_iy'_1)\Trace(y''_1z_j)\pm \sum_{i,j} (-1)^{i+j}\Trace(x_iy'_2)\Trace(y''_2z_j)\right)^2 \\
&\pm \sum_{i,j} \Trace(x_iy'_1)\Trace(y''_1z_j)
\pm \sum_{i,j} (-1)^{i+j}\Trace(x_iy'_2)\Trace(y''_2z_j)-2\,.
\end{align*}
While \eqref{e:bell7} fits in the trace polynomial optimization scheme presented in this paper, SDPs arising from \eqref{e:bell7} are very large because $a$ is a pure trace polynomial of degree $8$ in $8$ variables. For computing upper bounds on \eqref{e:bell7} to become viable, the sizes of SDPs will need to be reduced using sparsity and symmetry techniques, which we plan to develop later.

}

\section{Conclusion and perspectives}
\label{sec:concl}
We have derived several novel Positivstellens\"atze for  trace  polynomials positive on tracial semialgebraic sets.
Our tracial analog of Putinar's Positivstellensatz yields a converging hierarchy of semidefinite relaxations for optimizing pure trace polynomials under pure trace polynomial inequality constraints. 
We also provide an algorithm to extract minimizers of such  problems, thanks to a finite-dimensional  Gelfand-Naimark-Segal construction.

A topic of future research is to derive a  hierarchy of primal-dual SDP programs converging to the minimal eigenvalue of a trace polynomial under trace polynomial inequality constraints.  
A short-term research investigation track is to rely on this hierarchy to tackle trace polynomial problems arising from quantum information theory. 
Sharing the same \revise{computational} drawbacks as the classical Lasserre's hierarchy, our tracial framework will be limited to optimization problems involving a modest number of variables.
To overcome this scalability issue, we intend to focus on exploiting structural properties of the input data. 
One possibility is to extend the framework from \cite{klep2019sparse} to optimization problems involving sparse trace polynomials or the one from \cite{Riener13Symmetries} to problems involving symmetries.

\revise{
\subsection*{Acknowledgments}
The authors thank anonymous referees for their valuable comments and suggestions, which greatly improved presentation of the paper and demonstration of the main results. 
}

\begin{appendices}
	
\section{Alternative proof of Theorem \ref{thm:psatz}}\label{app}

\begin{proof}[Proof of \ref{it:j2}$\Rightarrow$\ref{it:j1}]
Assume $a+\varepsilon \notin \cQ$ for some $\varepsilon>0$. Let $U=\{p\in\Sym\T\mid \Trace(p)=0 \}$. Then $\cQ+U$ is a convex cone in $\Sym\T$. Since $a$ is a pure trace polynomial, we have $a+\varepsilon\notin \cQ+U$. Since $\cQ$ is archimedean, for every $p\in\Sym\T$ there exists $\delta>0$ such that $1\pm\delta p\in \cQ$, which in terms of \cite[Definition III.1.6]{Bar} means that $1$ is an algebraic interior point of the cone $\cQ+U$ in $\Sym\T$. By the Eidelheit-Kakutani separation theorem \cite[Corollary III.1.7]{Bar} there is a nonzero $\R$-linear functional $L_0:\Sym\T\to\R$ satisfying $L_0(\cQ+U)\subseteq\R_{\ge0}$ and $L_0(a+\varepsilon)\le 0$. In particular, $L_0(U)=\{0\}$. Moreover, $L_0(1)>0$ because $\cQ$ is archimedean, so after rescaling we can assume $L_0(1)=1$. Let $L:\T\to\R$ be the symmetric extension of $L_0$, i.e., $L(p)=\frac12 L_0(p+p^\star)$ for $p\in\T$. Note that
\begin{equation}\label{e:tr}
L(p)=L(\Trace(p))
\end{equation}
for all $p\in\T$, and in particular $L(pq)=L(qp)$ for all $p,q\in\T$.

Now consider the set $\cC$ of all symmetric linear functionals $L':\T\to \R$ satisfying $L'(\cQ+U)\subseteq\R_{\ge0}$ and $L'(1)=1$. This set is nonempty because $L\in \cC$. Endow $\T$ with the norm
$$\|p\|=\max\left\{\|p(\underline{X})\|\mid n\in\N, \underline{X}\in \Sbb_n^k, \|X_j\|\le1\right\}.$$
This is indeed a norm because no nonzero trace polynomial vanishes on matrices of all finite sizes. By the Banach-Alaoglu theorem \cite[Theorem III.2.9]{Bar}, the convex set $\cC$ is weak*-compact. Thus by the Krein-Milman theorem \cite[Theorem III.4.1]{Bar} we may assume that our separating functional $L$ is an extreme point of $\cC$.

On $\T$ we define a semi-scalar product $\langle p, q\rangle=L(pq^\star)$. By the Cauchy-Schwarz inequality for semi-scalar products,
$$\cN=\left\{q \in \T \mid L(qq^\star)=0\right\}$$
is a linear subspace of $\T$. Let $p,q\in\T$. Since $\cQ$ is archimedean, there exists $\delta>0$ such that $1-\delta p p^\star\in \cQ$ and therefore
\begin{equation}\label{e:ineq}
0\le L(q(1-\delta p p^\star)q^\star)=L(q q^\star)-\delta L(qpp^\star q^\star)\le L(q q^\star).
\end{equation}
In particular, $q\in\cN$ implies $qp\in\cN$, so $\cN$ is a left ideal. Furthermore, $L(\cN)=\{0\}$: if $L(qq^\star)=0$, then for every $\delta>0$,
$$0\le L((\delta\pm q)(\delta\pm q)^\star)=\delta(\delta\pm 2 L(q))$$
and hence $L(q)=0$. Let $\overline p=p+\cN$ denote the residue class of $p\in\T$ in $\T/\cN$. Because $\cN$ is a left ideal, we can define linear maps
$$\chi_p :\T/\cN\to\T/\cN,\qquad\overline q\mapsto \overline{pq}$$
for $p\in\T$, which are bounded by \eqref{e:ineq}.

Now
\begin{equation}
\label{e:gns}
\langle\overline p,\overline q\rangle=L(pq^\star)
\end{equation}
is a scalar product on $\T/\cN$, and we let $H$ denote the completion of $\T/\cN$ with respect to this scalar product. Each $\chi_p$ extends to a bounded operator $\hat \chi_p$ on $H$, and the map
\begin{equation}\label{e:creation}
\pi: \T \to \cB(H), \qquad p\mapsto \hat \chi_p
\end{equation}
is clearly a $\star$-representation with $\ker\pi=\cN$. Let $\cF$ be the closure of $\pi(\T)$ in $\cB(H)$ with respect to the weak operator topology. The map
$$\tau:\pi(\T)\to \R,\qquad \hat \chi_p\mapsto L(p)$$
is a faithful tracial state on $\pi(\T)$ by $L(\cN)=\{0\}$ and \eqref{e:tr}. Since
$$\tau(\hat \chi_p)=\langle \overline p,\overline{1}\rangle,$$
$\tau$ extends uniquely to a faithful normal tracial state on $\cF$.

Next we claim that $\pi(\skinnyT)=\R$. Observe that $\overline{1}\in H$ is a cyclic vector for $\pi$ by construction and $L(p)=\langle \pi(p)\overline{1},\overline{1}\rangle$. Suppose $\pi(\skinnyT)\neq\R$. If $\cE$ denotes the weak closure of $\pi(\skinnyT)$ in $\cF$, then $\cE$ is a central von Neumann subalgebra of $\cF$; since $\cE\neq\R$ and all the elements of $\cE$ are self-adjoint, there is a nontrivial projection $P\in\cE$. Since $\overline{1}$ is cyclic for $\pi$, we have $P\overline{1}\neq0$ and $(1-P)\overline{1}\neq0$. Hence we can define linear functionals $L_i$ on $\T$ by
$$L_1(p)=\frac{\langle\pi(p)P\overline{1},P\overline{1}\rangle}{\|P\overline{1}\|^2}
\qquad\text{and}\qquad
L_2(p)=\frac{\langle\pi(p)(1-P)\overline{1},(1-P)\overline{1}\rangle}{\|(1-P)\overline{1}\|^2}$$
for all $p\in \T$. One easily checks that $L$ is a convex combination of $L_1$ and $L_2$, $L_i(1)=1$ and $L_i(\cQ)=\R_{\ge0}$. Furthermore, since $P$ is a weak limit of $\{\pi(s_n)\}_n$ for some $s_n\in\skinnyT$
and
$$\langle\pi(p-\Trace(p))\pi(s_n)\overline{1},\overline{1}\rangle
=L(s_n(p-\Trace(p)))=L(s_np-\Trace(s_np)))=0,$$
we also have
$$\langle\pi(p-\Trace(p))P\overline{1},P\overline{1}\rangle=
\langle\pi(p-\Trace(p))P\overline{1},\overline{1}\rangle=0$$
so $L_i(U)=\{0\}$. Therefore $L_i\in\cC$, so $L=L_1=L_2$ by the extreme property of $L$. Then for $\lambda=\|P\overline{1}\|^2$,
$$\langle \pi(p)\overline{1},\lambda\overline{1}\rangle
=\lambda\langle\pi(p)\overline{1},\overline{1}\rangle
=\langle\pi(p)P\overline{1},P\overline{1}\rangle
=\langle P\pi(p)\overline{1},P\overline{1}\rangle
=\langle \pi(p)\overline{1},P\overline{1}\rangle$$
for all $p\in \T$. Therefore $P\overline{1}=\lambda\overline{1}$ since $\overline{1}$ is a cyclic vector for $\pi$. So $\lambda\in\{0,1\}$ since $P$ is a projection, a contradiction.

Let $\uX:=(\hat \chi_{x_1},\dots,\hat \chi_{x_n})$. This is a tuple of self-adjoint operators in $\cF$, and $\pi(\skinnyT)=\R$ implies $p( \uX)=\hat \chi_p$ for all $p\in\T$. Therefore $\uX \in \mathcal{D}_\cQ^{\cF,\tau}$ by \eqref{e:gns} and $L(\cQ) \subseteq \R_{\ge0}$. Finally $a(\uX)=\tau(\hat \chi_a)=L(a)<0$.
\end{proof}

\end{appendices}

%\bibliographystyle{alpha}
%\bibliography{tracencpop}
\newcommand{\etalchar}[1]{$^{#1}$}

\end{document}